\documentclass[12pt]{article}

\usepackage[T1]{fontenc}
\usepackage{amsmath}
\usepackage{amsthm}
\usepackage{amssymb}
\PassOptionsToPackage{normalem}{ulem}
\usepackage{ulem}
\usepackage{mathtools}
\usepackage{amsfonts}
\usepackage[varg]{txfonts}
\usepackage{graphicx}
\usepackage{caption}
\usepackage{subcaption}
\usepackage[top=1in, bottom=1in, left=1in, right=1in]{geometry}
\usepackage{setspace}
\usepackage[toc,page]{appendix}
\usepackage{bbm} 
\usepackage{enumitem}
\usepackage{xurl}
\usepackage{verbatim}
\usepackage{tikz}
\usepackage{chngcntr}
\usepackage{apptools}
\usetikzlibrary{shapes.multipart}
\usetikzlibrary{decorations.pathreplacing}
\usetikzlibrary{decorations.markings}
\usetikzlibrary{arrows}
\usetikzlibrary{calc,intersections,through,backgrounds,matrix,patterns}
\usetikzlibrary{patterns}
\usepackage{pgfplots}
\usepackage{subcaption}
\usepackage{adjustbox}
\usepackage{xcolor}

\usepgfplotslibrary{fillbetween,decorations.softclip}
\pgfdeclarelayer{bg}
\pgfsetlayers{bg,main}
\pgfplotsset{width=10cm,compat=1.9}
\pgfkeys{%
  /tikz/on layer/.code={
    \pgfonlayer{#1}\begingroup
    \aftergroup\endpgfonlayer
    \aftergroup\endgroup
  }
}
\pgfplotsset{
    highlight/.code args={#1:#2}{
        \fill [every highlight] ({axis cs:#1,0}|-{rel axis cs:0,0}) rectangle ({axis cs:#2,0}|-{rel axis cs:0,1});
    },
    /tikz/every highlight/.style={
        on layer=\pgfkeysvalueof{/pgfplots/highlight layer},
        blue!20
    },
    /tikz/highlight style/.style={
        /tikz/every highlight/.append style=#1
    },
    highlight layer/.initial=axis background
    standard/.style={
        axis x line=middle,
        axis y line=middle,
        enlarge x limits=0.15,
        enlarge y limits=0.15,
        every axis x label/.style={at={(current axis.right of origin)},anchor=north west},
        every axis y label/.style={at={(current axis.above origin)},anchor=north east}
    }
}
\pgfplotsset{every major tick/.append style={ultra thick}}

\tikzset{
    hatch distance/.store in=\hatchdistance,
    hatch distance=10pt,
    hatch thickness/.store in=\hatchthickness,
    hatch thickness=0.3pt
}
\AtAppendix{\counterwithin{equation}{section}}
\newtheorem{theorem}{Theorem}

\newtheorem{corollary}{Corollary}
\newtheorem{lemma}{Lemma}
\newtheorem{assumption}{Assumption}
\newtheorem{proposition}{Proposition}

\linespread{1.3}

\DeclareMathOperator*{\argmax}{arg\,max}
\DeclareMathOperator*{\argmin}{arg\,min}

\DeclareMathOperator*{\epislon}{\epsilon}

\DeclareMathOperator*{\supp}{\text{supp}}

\DeclareMathOperator*{\e}{\epsilon}

\DeclareMathOperator*{\righatrrow}{\rightarrow}

\usepackage[unicode=true,
bookmarks=true,bookmarksnumbered=false,bookmarksopen=true,bookmarksopenlevel=1,
breaklinks=false,pdfborder={0 0 1},backref=false,colorlinks=true,linkcolor=blue,citecolor=blue]{hyperref}

\title{
    Competitive Information Disclosure with Heterogeneous Consumer Search\thanks{
We are grateful to Dan Bernhardt, Dongkyu Chang, Andy Bongjune Choi, Theo Durandard, Jack Fanning, Seung-Hyun Hong, Myeongeon Koh, Youngwoo Koh, Jinwoo Kim, Kyungmin (Teddy) Kim, Mingeon Kim, Chanjoo Lee,  
Jorge Lemus, Keeyoung Rhee, Lena Song, Can Urgun  for their valuable comments and suggestions.
We have benefitted from comments by conference participants at the 
26th ACM Conference on Economics and Computation,
36th Stony Brook International Conference on Game Theory
and the seminar participants of 
Seoul National University, University of Illinois Urbana-Champaign, University of Tokyo
for various helpful and constructive comments. 
This research was supported by the BK21 FOUR (Fostering Outstanding Universities for Research) funded by the Ministry of Education(MOE, Korea) and National Research Foundation of Korea(NRF).
}}
\author{Dongjin Hwang\thanks{Department of Economics, Seoul National University; \texttt{djsteve@snu.ac.kr} (first author)}
\;\;\; Ilwoo Hwang\thanks{Department of Economics, Seoul National University; \texttt{ilwoo.hwang@snu.ac.kr} (corresponding author)}
\\ \small Seoul National University
}
\date{\today}

\usepackage[natbib=true,minnames=1,style=apa]{biblatex}  
\addbibresource{combined_20250722.bib}

\DeclareFieldFormat{citehyperref}{%
  \DeclareFieldAlias{bibhyperref}{noformat}
  \bibhyperref{#1}}

\DeclareFieldFormat{textcitehyperref}{%
  \DeclareFieldAlias{bibhyperref}{noformat}
  \bibhyperref{%
    #1%
    \ifbool{cbx:parens}
      {\bibcloseparen\global\boolfalse{cbx:parens}}
      {}}}

\savebibmacro{cite}
\savebibmacro{textcite}

\renewbibmacro*{cite}{%
  \printtext[citehyperref]{%
    \restorebibmacro{cite}%
    \usebibmacro{cite}}}

\renewbibmacro*{textcite}{%
  \ifboolexpr{
    ( not test {\iffieldundef{prenote}} and
      test {\ifnumequal{\value{citecount}}{1}} )
    or
    ( not test {\iffieldundef{postnote}} and
      test {\ifnumequal{\value{citecount}}{\value{citetotal}}} )
  }
    {\DeclareFieldAlias{textcitehyperref}{noformat}}
    {}%
  \printtext[textcitehyperref]{%
    \restorebibmacro{textcite}%
    \usebibmacro{textcite}}}

\begin{document}
\maketitle

\begin{abstract}
    We study a model of competitive information design in an oligopoly search market with heterogeneous consumer search costs. A unique class of equilibria\textemdash upper-censorship equilibria\textemdash emerges under intense competition. In equilibrium, firms balance competitive pressure with local monopoly power granted by search frictions. Notably, firms disclose only partial information even as the number of firms approaches infinity. The maximal informativeness of equilibrium decreases under first-order shifts in the search cost distribution, but varies non-monotonically under mean-preserving spreads. The model converges to the full-disclosure benchmark as search frictions vanish, and to the no-disclosure benchmark as search costs become homogeneous.
        Moreover, we show that the well-known discontinuities of equilibrium with respect to the search cost distribution in the price search literature carry over to information competition.

    \smallskip
    
    \noindent \textbf{Keywords}: consumer search, information design,  heterogeneous search costs, upper-censorship, prior-independent comparative statics
    \\
    \noindent \textbf{JEL Code}: D21, D39, D43, D83, L13, L15, M37
\end{abstract}

\section{Introduction}
Consumers actively navigate markets to make informed purchase decisions. However, the information they encounter about products is rarely exogenous. Rather, it is strategically curated by firms through channels such as advertisements, product reviews, and free trials. By controlling the information content, firms shape consumers' perceptions and influence their decisions, ultimately aiming to maximize their chances of being chosen over competitors.

In many markets, consumers inherently differ in their ability or willingness to search due to factors such as technological proficiency, market experience, time constraints, or cognitive capacity.\footnote{
    The heterogeneity of consumer search costs is well-documented
    across various markets, including
    mutual funds markets \parencite{hortaccsu2004product},
    e-commerce \parencite{de2018consumer}
    local gasoline markets \parencite{nishida2018determinants},
    mortgage market \parencite{allen2014effect,allen2019search},
    or the trade-waste market \parencite{salz2022intermediation}.
}
These differences critically affect consumers' willingness to explore available options. Since firms cannot observe individual consumers' search costs, they must design their information disclosure strategies in response to the overall distribution of search costs while accounting for competitive pressures. These observations raise key questions: To what extent does competition encourage information disclosure in the presence of heterogeneous search frictions? How is equilibrium informativeness tied to the distribution of search costs, and how do firms' incentives change as this distribution varies?

To address these questions, we study a model of competitive information disclosure in an oligopoly search market with $n$ horizontally differentiated firms. Consumers differ in their search costs, represented by a continuous distribution $H$ over $[\underline{c},\overline{c}]$. They are initially uninformed of their match values for firms' products, which are independently drawn from a common distribution $F$. Firms strategically commit to signal structures that convey information about these match values.  Importantly, consumers must incur a search cost to randomly explore an unvisited firm, after which they observe \textit{both} the firm's signal structure and the realized signal. Search is with perfect recall.

As in traditional search models, consumers follow reservation value strategies: stopping their search when the posterior mean of the firm-specific match value exceeds their reservation value and continuing otherwise. If consumers have explored all firms, they purchase from the firm with the highest posterior mean. This behavior creates two opposing incentives for firms. On one hand, competition drives firms to disclose more information to increase the likelihood of being revisited after consumers explore all firms. On the other hand, search frictions provide firms local monopoly power, incentivizing them to pool information just above the reservation value to deter further search. In the absence of search frictions, competitive pressure dominates, leading firms to fully disclose information as the market becomes more competitive---resembling the classical Bertrand outcome \parencite{hwang2023}. By contrast, if search costs are identical for all consumers, the hold-up problem prevails, yielding no disclosure as the unique equilibrium \parencite{au2024attraction}, reminiscent of the Diamond paradox \parencite{diamond}.

Our first main result demonstrates that heterogeneous search costs yield equilibrium with \textit{partial information disclosure}. Specifically, firms use an ``upper-censorship'' strategy $U_a$: all match values above the censorship threshold $a$ are pooled into a single atom, while values below $a$ are fully revealed. Notably, these upper-censorship strategies persist as an equilibrium even in the limit of intense competition (i.e., $n\to\infty$).

This equilibrium structure arises from firms' attempts to balance competitive pressure with search frictions. Firms exploit their local monopoly power by concealing information at the top, while yielding to competitive forces at the bottom. In this equilibrium, consumers' reservation values are dispersed over an interval $[\underline{r},\overline{r}]$. Given this, firms strategically choose the censorship threshold so that the pooled signal precisely matches the highest reservation value, prompting immediate purchase by all consumer types. Typically, the censorship threshold $a$ falls below the lowest reservation value $\underline{r}$, prompting all consumer types to continue searching. As the market becomes essentially frictionless for signals below $\underline{r}$, competitive forces dominate, leading to full disclosure at the bottom.


We establish that a continuum of upper-censorship equilibria exists---always including equilibrium with no disclosure---and it constitutes the unique class of equilibria in the intense-competition limit. All equilibria are ranked by (Blackwell) informativeness, i.e., those with higher censorship thresholds are more informative. We focus on the equilibrium with the highest censorship threshold, which we call  the \textit{maximally informative equilibrium}.\footnote{While this is the maximally informative upper-censorship equilibrium, it becomes the maximally informative equilibrium overall in the limit of intense competition.} This equilibrium not only maximizes information disclosure, but is also the socially efficient and consumer-optimal equilibrium.

To characterize the maximally informative equilibrium, we introduce a simple measure based on the distribution of search costs $H$: maximal informativeness strictly increases with the minimum average density of $H$, defined as $\min_{c}(H(c)/c)$. Roughly speaking, this measure is high when search costs are evenly distributed. To understand the intuition, suppose that $H$ is uneven and concentrated at a specific point. Then firms can profitably deviate by targeting only high-cost consumers (those at or above the concentrated point), since losing low-cost consumers has minimal impact due to the distribution's imbalance.

Using this insight, we establish our second main result, which examines how changes in $H$ affect equilibrium informativeness. We analyze two dimensions: (1) shifts in the overall level of search costs (first-order stochastically dominating shifts) and (2) changes in search cost heterogeneity (second-order shifts). Not surprisingly, maximal equilibrium informativeness decreases with first-order shifts, as higher search costs increase firms' incentives to exploit local monopoly power.

More intriguingly, we find that equilibrium informativeness is non-monotonic with respect to mean-preserving spreads (MPS).  This result stems from the observation that informativeness increases with the ``evenness'' of the distribution. Clearly, the impact of an MPS shift on the evenness is non-monotonic. For instance, if $H$ is concentrated around midpoints (i.e. has a quasi-concave density), a mean-preserving spread makes the distribution more even. However, if $H$ is already polarized (i.e. has a quasi-convex density), a mean-preserving spread leads to a more uneven distribution.

Furthermore, we establish convergence results to the benchmark cases and discuss extensions. As $H$ approaches the frictionless case, the maximally informative equilibrium converges to full disclosure. Conversely, when $H$ approaches the homogeneous search friction case, no disclosure emerges as the unique equilibrium. Additionally, we show that our main results remain qualitatively robust for small markets ($n$ finite) under mild assumptions about the match value distribution $F$ and the search cost distribution $H$.

Finally, we relate our findings to the classical price search literature \parencite{stahl1989oligopolistic, stahl1996oligopolistic}. When firms compete in prices, equilibrium outcomes hinge on the \textit{local} density $h(0)$ of consumers with zero search costs. In contrast, informativeness in equilibrium information disclosure depends on the \textit{global} shape of the search cost distribution—specifically, its evenness. Also, we demonstrate that the discontinuities in equilibrium behavior observed in price search models, as the search cost distribution varies, also arise in the competitive information disclosure.

\subsection{Related Literature}

We contribute to the growing body of literature on competitive information design. Incorporating heterogeneous search costs, we bridge the two well-studied extremes---markets with no search  friction (\cite{ivanov2013information}; \cite{au2020competitive,hwang2023}) and homogeneous search friction \citep{au2023attraction,he2023competitive}. 
We offer a complete characterization of equilibrium in the limit of intense competition, which entails partial information disclosure balancing competitive forces and search frictions.\footnote{
    In many single-sender information design problems, upper censorship distributions arise as the optimal information structure (e.g., \textcite{kolotilin2022censorship} and \textcite{arieli2023optimal}). Ours is the first to establish such a rule as an equilibrium with multiple senders.} 
Moreover, we establish a continuity result, demonstrating that the maximally informative equilibrium converges to full information (resp. no information) as the cost distribution $H$ approaches the no-friction case (resp. the homogeneous friction case).\footnote{
    Partial convergence result to full information have also been established in \textcite{au2023attraction} as search cost vanishes. In the context of pricing, the seminal work of \textcite{stahl1989oligopolistic} shows similar convergence to the Diamond paradox and Bertrand pricing.}

 \citet{au2023attraction} consider a setting with homogeneous search costs where consumers can observe information structures before searching and direct their search accordingly. They show that firms commit to some level of information disclosure to improve their ranking in the search order. This `attraction motive' disappears when information structures are not observable prior to searching \citep{au2024attraction} or when search is random \citep{he2023competitive}. Our findings complement this literature by demonstrating that search cost heterogeneity alone can induce equilibrium information disclosure---even in the absence of attraction motives.

We draw a close parallel between our results and the classical price search literature \parencite{moraga2017prices,anderson2006advertising}, particularly \textcite{stahl1996oligopolistic}. 
Like our model, \textcite{stahl1996oligopolistic} assumes an atomless, continuous distribution of search costs and admits a continuum of symmetric pure-strategy equilibria---including one that exhibits the Diamond paradox. 
However, the forces sustaining the consumer-optimal equilibrium differ substantially. 
In price competition, equilibrium outcomes hinge on the local density of shoppers, $h(0)$, which determines the loss of market share from a small price increase. In our setting, in contrast, the maximal informativeness of equilibria depends on the global shape of the search cost distribution, namely the minimum average density. This reflects the key feasibility constraint in information design: deviations must reallocate signals in a Bayes-plausible way.

In a concurrent and independent paper, \textcite{boleslavsky2023information} analyzes a related problem as ours but under binary search costs with a postive mass of ``shoppers” $(H(0)>0$), in the spirit of \textcite{stahl1989oligopolistic}. We view their work as complementary to ours: although the settings appear similar, the results diverge sharply, offering a richer understanding of how consumer heterogeneity affects informational competition. They show that the presence of a mass of such shoppers sustains competition even at high signals, and under sufficiently intense competition, the equilibrium features no information disclosure below the reservation value. In contrast, our model admits continuum of equilibria in which firms fully disclose information below the reservation value and pool at the top. In Section~\ref{section:extensions}, we identify two features of $H$ that critically shape this divergence in equilibrium structure: (i) the presence of an atom of shoppers and (ii) the existence of a ``gap” in the support of $H$ around zero. Each feature induces sharp discontinuity in disclosure behavior: upper-censorship equilibria (no atom and no gap; Theorem~\ref{thm:upper}), no disclosure at the bottom (atom and a gap; \textcite{boleslavsky2023information}), and full disclosure (atom and no gap; Proposition~\ref{proposition:atom_full_info}). This discontinuity pararells that of the price search literature, where the same features determine whether equilibrium takes the form of pure strategies, mixed strategies, or Bertrand pricing, respectively.

Our comparative statics result relates to several recent papers that establish \textit{prior-independent} comparative statics in information design \parencite{kolotilin2022censorship,gitmez2022informational,curello2022comparative}, extending this literature to settings where interim payoffs arise endogenously from strategic interactions.\footnote{
    Existing studies have primarily focused on identifying conditions under which, for arbitrary prior distributions, specific shifts in exogenously given interim payoff functions lead to a more (or not less) informative optimal signal structure.}
While maximal informativeness of our equilibrium depends on the prior distribution, its response to shifts in the search cost distribution $H$ remains prior-independent. We also introduce a simple graphical method to analyze these shifts.

Finally, we situate our work within the broader literature on information design \parencite{kamenica2011bayesian,dworczak2019simple}
in various search markets. Recent papers have explored related themes, such as the joint design of information and pricing by a platform \parencite{dogan2022consumer} or competing firms \parencite{au2024attraction,boleslavsky2023information}, 
characterizing all implementable search behaviors in Pandora's problem under some information structure \parencite{sato2025feasible},
optimal disclosure for search goods \parencite{lyu2023information,choi2019optimal}, information disclosure regarding common values under coordination \parencite{board2018competitive}, and when information structure itself constitutes the product being sold \parencite{mekonnen2023persuaded}.

The remainder of the paper is organized as follows. Section~\ref{section:model} introduces the model. Section~\ref{section:preliminary} provides preliminary analysis and establishes two benchmark cases. Section~\ref{section:equilibrium} characterizes the set of equilibria. Section~\ref{section:comparative} provides comparative static analyses of the maximally informative equilibrium under various shifts in $H$, along with  convergence results to the benchmark cases. Section~\ref{section:extensions} discusses extension and relate to the price search literature. Section~\ref{section:conclusion} concludes. The Appendix contains omitted proofs and detailed analysis of the model extensions.

\section{Model}
\label{section:model}

We consider a search market with $n$ horizontally differentiated firms and a continuum of risk-neutral consumers with unit demand. Firms, indexed by $i$, are ex-ante identical. Each consumer's match value $v_i$ for firm $i$ is independently and identically distributed across consumers and firms according to a distribution $F$ on [0,1]. We assume that $F$ is twice continuously differentiable, admits a positive density $f$, and that $f$ and  $f'$ are bounded.
Each firm earns a unit payoff if a consumer purchases its product and zero otherwise.


Consumers differ in their search cost $c$, which is independently and identically distributed according to a distribution $H$ on $[\underline{c}, \overline{c}]$,  where $\overline{c}>\underline{c}\geq0$. By the exact law of large numbers \parencite{sun2006exact}, the distribution function $H(c)$ also represents the exact fraction of consumers whose search cost is no greater than $c$.\footnote{The exact law of large numbers ensures that aggregate consumer behavior corresponds to the distribution $H$, despite the stochastic nature of individual behavior.} We refer to a consumer with search cost $c$ as a \textit{$c$-consumer} hereafter.  As is standard in the literature, we focus on the case where $\overline{c} < \mu:=E_F[v_i]$, which ensures that all consumers search at least once. We assume $H$ is twice continuously differentiable and admits a strictly positive density $h$, with at most finitely many local extrema.\footnote{
    Formally, there exists a finite partition $0=x_0\leq \dots \leq x_K=1$  such that $h$ is monotone in each $[x_{k-1}, x_k]$ for all $i=1,\dots, K$. Our results extend to the case where the continuity assumption is relaxed, allowing $H$ to have at most finitely many interior jumps, while being twice continuously differentiable elsewhere.
    }

Each firm $i$ strategically provides information about the match value $v_i$. In doing so, they have full flexibility in choosing their information structure $\pi_i: [0,1]\rightarrow \Delta(S_i)$.  Consumers do not observe the match value $v_i$ directly but instead receive an informative signal $s_i\in S_i$ about $v_i$.  Hence, the risk-neutral consumer's purchasing decision depends solely on the posterior mean $x_i = \mathbb{E}[v_i|s_i]$. 

Each information structure $(\pi_i, S_i)$ induces a distribution of posterior means $G_i$, which is a mean-preserving contraction (MPC) of the prior distribution $F$.\footnote{ 
    $G$ is a MPC of $F$ if and only if  $\int_0^1tdF(t)=\int_0^1tdG(t)$ and $\int_0^xG(t)dt \leq \int_0^xF(t)dt$ for all $x\in[0,1]$.} It is well known that a distribution $G_i$ of posterior means can be induced by some information structure if and only if $G_i$ is an MPC of $F$.  This result allows us to work directly with the distribution $G_i$, rather than considering the full set of feasible information structures. Formally, let $\text{MPC}(F)$ denote the set of all mean-preserving contractions of $F$; we say that $G_i$ is \emph{feasible} if $G_i \in \text{MPC}(F)$. We refer to $G_i$ as an \textit{information structure}, and to the induced posterior mean $x_i = \mathbb{E}[v_i|s_i]$ as a \textit{signal}. Note that full disclosure of the match value corresponds to $G = F$,  while no disclosure corresponds to $G=\delta_{\mu}$, a degenerate distribution at the prior mean $\mu$.

The game proceeds as follows.  Before the search process begins, each firm $i$ simultaneously commits to an information structure $G_i \in \text{MPC}(F)$, and each consumer's search cost $c$ is privately realized. The chosen information structures $(G_i)_{i=1}^n$ are not observable  to consumers at the outset of the search process. At each stage of the search process, a consumer decides whether to continue searching or stop.\footnote{We assume consumers with $c=0$ use the weakly dominant strategy of visiting all firms.} If they choose to search, they incur the search cost and randomly (with equal probability) visit one of the remaining unvisited firms. Upon visiting firm $i$, the consumer observes \textit{both} the information structure $G_i$ and their privately realized signal $x_i$. If they choose to stop, they may purchase from any of the previously visited firms with perfect recall.\footnote{Our model assumes that (1) information structures are unobservable prior to visiting a firm, and (2) search is random. All our results continue to hold when search is directed, as long as information remains unobservable until visit. Alternatively, if information structures are publicly observable before search but search remains random, we conjecture that our results still hold whenever an equilibrium exists. As noted in related literature, observable information structures combined with directed search introduce an 'attraction motive' \parencite{au2023attraction}, which renders equilibrium analysis prohibitively complex.}

\bigskip

Given the ex-ante symmetry of the environment, we focus on symmetric equilibria.\footnote{Characterizing asymmetric equilibria in the competitive information disclosure game is challenging, even in the absence of search costs. \textcite{du2024competitive} show that even with no search friction, no asymmetric equilibrium exists when there are two firms.} Let $(\widetilde{G_1}, \dots, \widetilde{G_n})$ be the consumers' conjecture about the firms' strategies prior to visiting them. A profile of feasible distributions $(G^*, \dots, G^*)$ is a \textit{(symmetric) equilibrium} if: (i) each firm's strategy $G_i=G^*$ is optimal given the behavior of the other firms and consumers; (ii) the consumers' search rule is optimal given their conjectures $(\widetilde{G_1}, \dots, \widetilde{G_n})$; and (iii) the consumers' conjectures are correct on-path, i.e., $\widetilde{G}_i=G^*$ for all $i$.  We refer to $G^*$ as the \textit{equilibrium distribution}.

We denote $D_i(x_i; G)$ as the \emph{interim demand} of firm $i$, defined as the probability that a consumer purchases from firm $i$ when all competing firms choose $G$, the consumer holds the conjecture $\widetilde{G_i} = G$ for all $i$, and observes a signal $x_i$ for firm $i$.  Given this, firm $i$'s expected payoff from an unobservable deviation to some alternative information strategy $G_i \neq G$ is
$$\mathbb{E}_{G_i}[D_i(x_i; G)] = \int_0^1 D_i(x_i; G) \, dG_i(x_i).$$ 
It follows that $G^*$ is an equilibrium distribution if and only if 
$$G^* \in \argmax_{G_i \in \text{MPC}(F)} \int_0^1 D_i(x_i; G^*) \, dG_i(x_i) \;\;\; \forall i = 1, \dots, n.$$
For notational simplicity, we occasionally the subscript $i$ in $D_i(x_i;G)$ when it does not lead to ambiguity.

\section{Preliminary Analysis}
\label{section:preliminary}
In this section, we first characterize each consumers' optimal search rule and derive the interim demand $D(x_i;G)$ for each firm. We then briefly study the benchmark results.

\subsection{Consumers}
Given any symmetric conjectures, each consumer's optimal search rule can be characterized by their \textit{reservation value} \parencite{mccall1970economics,weitzman1978optimal,wolinsky1986true}. For any $c>0$, the reservation value $r=r(c;\widetilde{G})$ under conjecture $(\widetilde{G}, \dots, \widetilde{G})$ is uniquely defined by the following indifference condition:
\begin{equation}
    c=\int_0^1 \max(x-r,0)d\widetilde{G}(x)=\int_{r}^1(1-\widetilde{G}(t))dt.
    \label{eq:reservation_value}
\end{equation}

At this reservation value, the $c$-consumer is indifferent between stopping and continuing to search. The left-hand side of the first equality is the cost of one additional search, while the right-hand side represents the expected incremental surplus from searching once more, given an `outside option' of value $r$. The second equality follows from integration by parts. In equilibrium, conjectures must be correct $(\widetilde{G}=G^*)$. Thus, we simplify notation by omitting explicit dependence on $G^*$, and write $r(c)=r(c;G^*)$ when the context is clear.

The optimal search rule follows a simple cutoff rule: the consumer stops and purchases immediately from firm $i$ if $x_i\geq r(c)$. If $x_i<r(c)$ for all firms $i$, the consumer purchases from the firm with the highest $x_i$, breaking ties uniformly in case of multiplicity.

For convenience, we define the inverse of the reservation value function $c_G(\cdot)=r^{-1}(\cdot;G)$:
\begin{equation}
    c_G(x):=\int_x^1 (1-G(t))dt.
    \label{eq:c_G}
\end{equation}
The function $c_G(x)$ represents the search cost of the consumer whose reservation value is $x$ (under conjecture $G$). Equivalently, it captures the expected incremental surplus from an additional search when the current best match is $x$.\footnote{
    For this reason, \textcite{dogan2022consumer} refers to $c_G$ as the \textit{incremental-benefit} function.
}
Since reservation values are strictly decreasing in search costs, a consumer's reservation value exceeds $x$ (i.e., $r(c)>x$) if and only if their search cost is below $c_G(x)$. Thus, $H(c_G(x))$ represents the fraction of consumers who continue searching upon receiving a signal $x$.

Heterogeneity of search costs naturally generates heterogeneity in consumers' reservation values, distributed continuously across an interval $[\underline{r},\overline{r}]$, where we denote $\underline{r} := r(\overline{c})$ and $\overline{r} := r(\underline{c})$ as the lowest and the highest reservation value of the consumers, respectively.\footnote{
    If $\underline{c}=0$, we define $\overline{r} := \max(\text{supp}(G))$. This is a notational simplification, as $0$-consumers do not follow the reservation value strategy but instead follow the weakly dominant strategy of visiting all firms.}
Observe that for any consumer with $c>0$, their reservation value satisfies $r(c)<\max(\supp(G))$, since no additional surplus can be obtained from searching after receiving the highest possible signal $x=\max(\supp(G))$.

\subsection{Firms}
To obtain the interim demand of a firm $D(x_i;G)$, we first derive the type-specific interim demand $D^c(x_i;G)$ for each consumer type $c\in[\underline{c},\overline{c}]$. If $x_i \geq r(c)$, the consumer purchases immediately upon visiting firm $i$. A visit to firm $i$ is made if and only if all previously visited firms have provided signals strictly below the reservation value $r(c)$. Given the random search order, the probability firm $i$ is visited in $k$-th order is $\frac{1}{n} G(r(c) -)^{k-1}$; summing over $k=1,\dots, n$ yields\footnote{
    The left limit $G(x-):=\lim_{t\rightarrow x-}G(t)$ always exists since $G$ is monotone increasing.
}
$$D^c(x_i;G) = \sum_{k=1}^{n}\frac{1}{n}G(r(c)-)^{k-1}=\frac{1-G(r(c)-)^n}{n(1-G(r(c)-))}.$$

If $x_i < r(c)$, the consumer continues searching and eventually returns to firm $i$ only if it provided the highest signal among all visited firms. A standard argument shows that equilibrium distribution $G$ cannot have interior atoms: in case of a tie, a firm can profitably deviate by slightly spreading the mass to secure a discrete increase in its payoff.\footnote{
    For example, one can deviate by splitting the mass at $a$ into $\{a - n\epsilon, a + \epsilon\}$ for some $\epsilon > 0$ (Lemma \ref{lemma:continuity_atom}).}
Accordingly, we restrict attention to distributions without interior atoms. In this case, $D^c(x_i;G)=G(x_i)^{n-1}$ for $x_i<r(c)$.

Hence, the type-specific interim demand $D^c(x_i;G)$ is summarized as:
\begin{equation}
    D^c(x_i;G)=
    \begin{dcases}
        G(x_i)^{n-1} & \text{if } x_i<r(c),
        \\ \frac{1-G(r(c)-)^n}{n(1-G(r(c)-))} & \text{if } x_i\geq r(c).
    \end{dcases}
    \label{eq:D^c}
\end{equation}
Note that $D^c(x_i; G)$ exhibits a discrete jump of size $J_G(r(c))$ at $x = r(c)$, where 
\begin{equation*}
J_G(x):=\frac{1-G(x-)^n}{n(1-G(x-))}-G(x)^{n-1}>0. \label{eq:J_G}
\end{equation*}




Integrating $D^c(x_i;G)$ over all search costs yields the interim demand $D(x_i;G)=\int_{\underline{c}}^{\overline{c}}D^c(x_i;G)dH(c)$:
\begin{equation}
    D(x_i;G)=
    \begin{dcases}
        G(x_i)^{n-1}& x<\underline{r}\\ 
        \underbrace{G(x_i)^{n-1}H(c_G(x))\vphantom{\int}}_{\text{Consumers who continue search}} + \int_{c_G(x)}^{\overline{c}}\underbrace{\frac{1-G(r(c))^n}{n(1-G(r(c)))}dH(c)}_{\text{Consumers who stop search}} & x\geq \underline{r} 
    \end{dcases}.
    \label{D(x_i;G)}
\end{equation}
For signals below $\underline{r}$, the market is essentially frictionless, as all consumers visit all firms. For signals above $\underline{r}$, firms acquire local monopoly power, since a positive fraction $1-H(c_G(x))$ of consumers stop immediately. This stopping fraction increases in $x$, eventually reaching one as $x$ approaches $\max(\supp(G))$.\footnote{To be precise, $c=0$ consumers continue their search but are of measure zero.}


\subsection{Two Benchmark Cases}
\label{section:benchmark}

To analyze the effect of heterogeneous search costs on competitive information disclosure, we first highlight two opposing forces: competition and search frictions. We introduce two benchmark cases: (1) the \textbf{no search friction} case and (2) the \textbf{homogeneous search friction} case.  Our model nests both extremes: as $\overline{c} \rightarrow 0$, our model converges to the frictionless case; as $\underline{c} \rightarrow \overline{c}$, we approach the case with homogeneous search costs.
Convergence results of our equilibrium to these benchmarks are presented in Section~\ref{section:comparative}.

\begin{proposition}[\textcite{hwang2023}]
    Assume there is no search friction (i.e., $H(c)=\delta_{0}$).
    In the unique symmetric Nash equilibrium, each firm discloses full information $(G=F)$ when $n$ is sufficiently large.
    \label{proposition_nofriction}
\end{proposition}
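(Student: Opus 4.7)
The plan is to leverage the drastic simplification of the demand function in the no-friction limit: with $H=\delta_{0}$ every consumer has zero search cost, visits all $n$ firms, and buys from the one delivering the highest posterior mean, so $D(x;G)=G(x)^{n-1}$ and each firm solves
\[
\max_{G_i\in\text{MPC}(F)}\int_0^1 G(x)^{n-1}\,dG_i(x).
\]
The key mechanism is that whenever the integrand $G^{n-1}$ is convex on $[0,1]$, the defining property of mean-preserving contractions makes $G_i=F$ the unique maximizer. Both existence and uniqueness thus reduce to verifying this convexity for the appropriate distributions when $n$ is large.

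For existence I would plug in $G=F$ and compute
\[
(F^{n-1})''(x)=(n-1)\,F(x)^{n-3}\bigl[(n-2)f(x)^2+F(x)f'(x)\bigr].
\]
Since $f$ is continuous and strictly positive on the compact interval $[0,1]$, it attains a positive minimum $\underline{f}>0$; and by assumption $|f'|\le M$ for some constant $M$. For $n\ge 2+M/\underline{f}^{\,2}$, the bracketed expression is non-negative, and strictly positive on $(0,1]$, so $F^{n-1}$ is strictly convex on $[0,1]$. The MPC inequality then gives $\int F^{n-1}\,dG_i\le\int F^{n-1}\,dF$ for every $G_i\in\text{MPC}(F)$, with equality only at $G_i=F$, so $G^{*}=F$ is a (strict) best response to itself.

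For uniqueness, suppose $G^{*}\neq F$ is a symmetric equilibrium. Then $G^{*}$ is a strict MPC of $F$, and the equilibrium condition requires $\int(G^{*})^{n-1}\,dG^{*}\ge\int(G^{*})^{n-1}\,dF$ — otherwise a deviation to $F$ strictly increases profit. To derive a contradiction I would show that $(G^{*})^{n-1}$ is strictly convex on $[0,1]$ for $n$ large, which would reverse this inequality strictly. The argument parallels the existence step but substitutes the equilibrium density $g^{*}$ in place of $f$, and therefore rests on a regularity bootstrap: no interior atoms (ruled out by the standard mass-splitting deviation $\{a-n\varepsilon,\,a+\varepsilon\}$ that the paper invokes elsewhere), $\supp(G^{*})=[0,1]$, and a density $g^{*}$ bounded below with bounded derivative. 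These properties would follow from the equilibrium indifference condition — the marginal payoff of placing mass at $x$ must be constant across $\supp(G^{*})$ — which ties $g^{*}$ to $f$ and $n$.

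The hard part is precisely this regularity step: extracting enough boundedness on $g^{*}$ and $(g^{*})'$ from the equilibrium FOC and the MPC feasibility constraint to legitimize the second-derivative computation at an \emph{arbitrary} equilibrium candidate. Once that is in place, the convexity-plus-MPC contradiction immediately collapses any candidate $G^{*}\ne F$, yielding uniqueness.
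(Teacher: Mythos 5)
The paper itself gives no self-contained proof of this proposition; it cites Corollary~1 and Theorem~1 of Hwang (2023) for existence and uniqueness respectively, so the comparison must be against the correct argument rather than against a proof reproduced in the text.

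The existence half of your proposal is correct. Under $H=\delta_0$, $D(x;G)=G(x)^{n-1}$, and your computation showing $(F^{n-1})''=(n-1)F^{n-3}\bigl[(n-2)f^2+Ff'\bigr]\geq 0$ once $n\geq 2+M_{|f'|}/\underline f^{\,2}$ is exactly the right bound (it is the same calculation as Lemma~\ref{lemma:phi_a_convex_F}-(a)); with $F^{n-1}$ convex, $\int F^{n-1}\,dG_i\leq\int F^{n-1}\,dF$ for every $G_i\in\text{MPC}(F)$, so $F$ is a best response to itself.

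The uniqueness half contains a genuine gap, and the gap is conceptual rather than merely technical. You propose to show that $(G^*)^{n-1}$ is strictly convex for any equilibrium candidate $G^*$ and then apply the MPS inequality. But under the alternative hypothesis $G^*\neq F$, the Dworczak--Martini price function $\phi$ (which coincides with $(G^*)^{n-1}$ on $\supp(G^*)$) must have \emph{affine} segments precisely where $G^*$ is a strict MPC of $F$; on those segments $(G^*)^{n-1}$ is linear, not strictly convex. So the object you want to contradict is, by construction, incompatible with the property you intend to establish. The ``regularity bootstrap'' you flag as hard (a density $g^*$ bounded below with bounded derivative, full support) is not a separable lemma: ruling out the linear segments \emph{is} the theorem, and the bounds you would need must hold uniformly over the $n$-indexed family of equilibrium candidates, a circularity your sketch does not address.

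The argument that actually closes the gap applies the strict convexity of $F^{n-1}$ (not of $(G^*)^{n-1}$) against the MPC feasibility constraint, exploiting the alternating $F$-region/linear-region structure. First, an affine segment at the origin is impossible: $(G^*)^{n-1}(x)=cx$ forces $G^*(x)\sim(cx)^{1/(n-1)}$, whose density blows up at $0$ and so violates $\int_0^x G^*\leq\int_0^x F$ near zero since $f$ is bounded (Step~3 of Lemma~\ref{lemma:full_info_bottom}). Second, if $G^*=F$ on $[0,x_1]$ and $(G^*)^{n-1}$ is affine on a maximal $[x_1,x_2]$ with $G^*(x_i)=F(x_i)$ at the endpoints, strict convexity of $F^{n-1}$ yields $(G^*)^{n-1}>F^{n-1}$, hence $G^*>F$, on $(x_1,x_2)$, which together with $G^*=F$ on $[0,x_1]$ gives $\int_0^x G^*>\int_0^x F$ there, contradicting $G^*\in\text{MPC}(F)$ (Step~4 of Lemma~\ref{lemma:full_info_bottom}). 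Thus no affine segment can exist, $G^*=F$ on $\supp(G^*)$, and $G^*=F$. Recasting your uniqueness step in this form yields a complete proof; as written, it does not.
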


In Bertrand competition, firms undercut prices down to marginal cost. Analogously, in information competition without search frictions, firms seek to "overcut" rivals by providing higher signals to win out competition. With no search friction, demand simplifies to $D(x; G) = G(x)^{n-1}$. 
As competition intensifies ($n$ increases), each firm's probability of being the highest signal provider decreases, 
strengthening the incentive to overcut competitors.
To this end, firms truthfully reveal the match value at the top---any garbling of information necessarily results in a lower posterior mean. This process cascades to the bottom, unraveling to full disclosure of match values, i.e., $G=F$.

\begin{proposition}[Informational Diamond Paradox \parencite{au2024attraction}]
    Assume all consumers share a common search cost $c>0$ (i.e., $H(c)=\delta_{c}$).
    There exists an essentially unique perfect Bayesian equilibrium, in which no firm discloses any information $(G=\delta_{\mu})$, and 
    all consumers stop at their first visit.\footnote{$G = \delta_{\mu}$ is \textit{essentially unique} 
    in the sense that any equilibrium distribution is outcome-equivalent to $\delta_{\mu}$: 
    the same reservation value is induced, no useful information is learned, and all consumers stop at the first visited firm.}
    \label{proposition_homogeneous}
\end{proposition}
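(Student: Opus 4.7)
The plan is to establish existence and essential uniqueness separately, both leveraging the strictly positive jump in $D(\cdot;G^*)$ at the common reservation value $r^*$.

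For existence, I would first verify via \eqref{eq:reservation_value} that the conjecture $G^* = \delta_\mu$ induces reservation value $r^* = \mu - c > 0$, using $\overline{c} < \mu$. Under this conjecture the interim demand simplifies to $D(x;\delta_\mu) = 0$ for $x < r^*$ and $D(x;\delta_\mu) = 1/n$ for $x \geq r^*$, so no feasible $G_i \in \mathrm{MPC}(F)$ can exceed $1/n$, and $\delta_\mu$ itself attains this upper bound since $\mu > r^*$. Because every consumer observes the signal $\mu$ at the first visit and $\mu > r^*$, she stops immediately.

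For essential uniqueness, let $G^*$ be any equilibrium distribution, $r^*$ its reservation value, and set $g := G^*(r^*-)$. The equilibrium payoff equals $1/n$ by symmetry, since each consumer purchases from exactly one firm. I would then rule out $g > 0$ in two subcases. If $g = 1$, then $G^*(t) = 1$ for all $t \geq r^*$, so \eqref{eq:reservation_value} collapses to $c = 0$, contradicting $c > 0$. If $0 < g < 1$, I would construct a pooling deviation: pick an interval $[a, b]$ with $a < r^* \leq b$, $G^*([a, r^*)) > 0$, and conditional mean $\overline{x} := \mathbb{E}_{G^*}[x \mid x \in [a, b]] \geq r^*$, then replace the restriction of $G^*$ to $[a, b]$ by a single atom of the same mass at $\overline{x}$. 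Transitivity of MPC keeps the new distribution inside $\mathrm{MPC}(F)$. The resulting payoff change simplifies to $\int_{[a, r^*)} \bigl[D(\overline{x}; G^*) - G^*(x)^{n-1}\bigr]\, dG^*(x)$, which is strictly positive because $D(\overline{x}; G^*) = (1-g^n)/(n(1-g)) = (1 + g + \cdots + g^{n-1})/n > g^{n-1} \geq G^*(x)^{n-1}$ whenever $g \in (0,1)$ and $n \geq 2$. This contradicts $G^*$ being a best response, forcing $g = 0$. Hence the support of $G^*$ lies in $[r^*, 1]$; mean preservation combined with \eqref{eq:reservation_value} then gives $r^* = \mu - c$, so every consumer observes a signal at or above her reservation value and stops at the first visit, making the outcome equivalent to that of $\delta_\mu$.

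The main obstacle I anticipate is the pooling deviation construction when $G^*$ is ill-behaved near $r^*$: depending on whether $G^*$ has an atom at $r^*$, is continuous there, or exhibits a gap in the support just above $r^*$, the interval $[a, b]$ must be chosen adaptively to guarantee $\overline{x} \geq r^*$. Fortunately, $g < 1$ always leaves enough mass on $[r^*, 1]$ to raise the conditional mean sufficiently, so the deviation is well defined in every configuration. A secondary subtlety, which the paper handles separately via the splitting lemma referenced just before \eqref{eq:D^c}, is that equilibrium distributions cannot carry interior atoms---this keeps $G^*(r^*)=g$ and the jump argument clean.
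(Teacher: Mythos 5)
Your overall strategy mirrors the paper's: prove that any equilibrium $G^*$ must satisfy $G^*(r^*-)=0$ by constructing a pooling deviation that jumps across the discontinuity of $D(\cdot;G^*)$ at $r^*$, then observe that this pins down the outcome. The existence half and the final characterization ($r^*=\mu-c$, everyone stops) are fine. The genuinely different step is the \emph{form} of the pooling deviation, and that is where a gap opens up.

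You pool the entire restriction of $G^*$ to a contiguous interval $[a,b]$ with $a<r^*\leq b$, $G^*([a,r^*))>0$, and conditional mean $\overline{x}\geq r^*$, and then assert that ``$g<1$ always leaves enough mass on $[r^*,1]$ to raise the conditional mean sufficiently.'' That assertion is false in general. Consider $F=U[0,1]$ and $G^*=0.8\,\delta_{0.4}+0.2\,\delta_{0.9}$; this is an MPC of $F$ (the integrals are tangent at $x=0.8$), has $\mu=0.5$, and with $c=0.05$ induces $r^*=0.9-c/0.2=0.65$, so $g=G^*(r^*-)=0.8\in(0,1)$. The only intervals $[a,b]$ with $G^*([a,r^*))>0$ require $a\leq 0.4$ and $b\geq 0.9$, and the conditional mean is then $\mu=0.5<r^*$. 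There is no admissible interval with $\overline{x}\geq r^*$, so your deviation does not exist. (This $G^*$ is indeed not an equilibrium—a pooling deviation still kills it—but your particular construction cannot detect that.) You gesture at the interior-atom lemma as a ``secondary subtlety,'' and it is true that if $G^*$ has no interior atoms you can take $a$ slightly below $s:=\sup(\supp(G^*)\cap[0,r^*))$, drive $G^*([a,s])\to 0$ by continuity, and push $\overline{x}$ arbitrarily close to $\mathbb{E}_{G^*}[x\mid x\geq r^*]>r^*$; but that is a nontrivial argument you have not made, and it also requires verifying that the no-interior-atom lemma (proved in the appendix for continuous $H$) extends to the degenerate $H=\delta_c$. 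The paper's own proof sidesteps all of this by pooling only an $\epsilon$-\emph{fraction} of the low-signal mass from some $[a,b]\subset[0,r^*)$ together with \emph{all} of the mass from an interval $[M-\delta,M]\subset(r^*,M]$: as $\epsilon\to 0$ the barycenter $\mu_\epsilon$ converges to a point strictly above $r^*$, so a profitable mean-preserving contraction exists regardless of where the low mass sits, whether it is atomic, or how large it is. That one change makes the argument self-contained and case-free; you should adopt it, or else explicitly establish and invoke the no-interior-atom result and then spell out the $a\to s^-$ limiting argument.
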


The standard Diamond paradox \parencite{diamond} exemplifies a hold-up problem: 
a consumer who anticipates a price $p_0$ will,
after incurring the search cost $c$, accept any price below $p_0+c$.
This inelasticity grants firms local monopoly power,
enabling them to incrementally raise price until 
it reaches the monopoly price. 
A similar unraveling occurs in information competition. 

If a consumer expects to receive some `useful' information---specifically, learning that the match quality is low with a strictly positive probability $G(r(c)-)>0$ and continues searching---firms can always profitably deviate. Since the purchase probability jumps by $J_G(r(c))>0$ at the reservation value $r(c)$, a firm can secure a discrete payoff increase by pooling information just above $r(c)$ in a mean-preserving way. This deviation continues to unravel until all information is pooled above the reservation value, i.e. $G(r(c)-)=0$. Consumers are held up by information pooled above their stopping thresholds, receiving no `useful' information and stopping their search at the first visited firm---the `Informational Diamond Paradox'.

This intuition extends to the case where 
consumers have different, but positive, search costs. When $\underline{c}>0$, if a consumer with search cost $\underline{c}$ stops upon receiving signal $x$, then any consumer with a higher search cost $c>\underline{c}$ will also stop at $x$. Therefore, heterogeneity becomes irrelevant as firms effectively treat all consumers as if they had the lowest search cost, $\underline{c}>0$.

\begin{proposition}[Persistence of Informational Diamond Paradox]
    Assume $\underline{c}>0$. 
    Then, no information $G(x)=\delta_{\mu}(x)$ is the essentially unique equilibrium distribution.
    \label{proposition:positive_c_h}
\end{proposition}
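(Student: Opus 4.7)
The plan is to parallel the logic of Proposition 2, exploiting the observation that $\underline{c}>0$ forces $\overline{r} := r(\underline{c};G^*)$ to lie strictly below $\max(\supp(G^*))$ for any non-degenerate $G^*$, so that the ``jump'' argument of the homogeneous case can be replaced by a pooling-across-$\overline{r}$ argument. First, I would verify that $G^*=\delta_\mu$ is an equilibrium. Under this conjecture $r(c;\delta_\mu)=\mu-c<\mu$ for every $c\in[\underline{c},\overline{c}]$, so all consumers stop on their first visit and each firm earns $1/n$. For any deviation $G_i\in\text{MPC}(F)$ that sends signal $x$, consumers who continue search will see signal $\mu\geq r(c;\delta_\mu)$ at the next firm and stop there, so $D(x;\delta_\mu)=(1-H(\mu-x))/n\leq 1/n$; hence no deviation improves on $1/n$.

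For essential uniqueness, let $G^*$ be any equilibrium with $G^*\neq \delta_\mu$ and set $D^*:=D(\overline{r};G^*)$. The heart of the proof is the strict inequality $D(x;G^*)<D^*$ for every $x<\overline{r}$. Writing
\[
D^*-D(x;G^*)=\int_{\underline{c}}^{c_{G^*}(x)}\left[\frac{1-G^*(r(c)-)^n}{n(1-G^*(r(c)-))}-G^*(x)^{n-1}\right]dH(c),
\]
I would invoke the identity $\tfrac{1-y^n}{n(1-y)}=\tfrac{1+y+\cdots+y^{n-1}}{n}$, which is monotone increasing in $y$ and satisfies $\tfrac{1-y^n}{n(1-y)}>y^{n-1}$ strictly for $y<1$. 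Since $G^*(r(c)-)\geq G^*(x)$ for all $c\leq c_{G^*}(x)$, and since $\overline{r}<\max(\supp(G^*))$ (a consequence of $\underline{c}>0$) yields $G^*(x)<1$, the integrand is uniformly bounded below by $\tfrac{1-G^*(x)^n}{n(1-G^*(x))}-G^*(x)^{n-1}>0$, and $H(c_{G^*}(x))>0$ since $c_{G^*}(x)>\underline{c}$. This gives strict positivity.

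Given this key inequality, I would argue that $\supp(G^*)\subseteq[\overline{r},1]$ by constructing a profitable pooling deviation whenever $G^*(\overline{r}-)>0$. Pick $x_1<\overline{r}$ and $x_2>\overline{r}$ with mass in every neighborhood under $G^*$ (both exist since $G^*(\overline{r}-)>0$ and $1-G^*(\overline{r})>0$), take small masses $\epsilon,\epsilon'$ satisfying $\epsilon(\overline{r}-x_1)=\epsilon'(x_2-\overline{r})$, and relocate them to a single atom at $\overline{r}$. The resulting $G_i$ is a mean-preserving contraction of $G^*$, hence of $F$, so it is feasible; and the payoff gain is
\[
\bigl(D^*-D(x_1;G^*)\bigr)\epsilon+\bigl(D^*-D(x_2;G^*)\bigr)\epsilon'=\bigl(D^*-D(x_1;G^*)\bigr)\epsilon>0,
\]
contradicting the equilibrium condition. (If $G^*$ is continuous, one replaces atoms by small intervals in the obvious way.)

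Once $\supp(G^*)\subseteq[\overline{r},1]$, a direct computation of $c_{G^*}(y)=\int_y^1(1-G^*(t))\,dt$ for $y\leq\overline{r}$ yields $c_{G^*}(y)=(\overline{r}-y)+(\mu-\overline{r})=\mu-y$, so the fixed-point condition $c_{G^*}(\overline{r})=\underline{c}$ pins down $\overline{r}=\mu-\underline{c}$. Hence every equilibrium is supported on $[\mu-\underline{c},1]$ with mean $\mu$, and because every signal exceeds every consumer's reservation value $r(c)\leq\overline{r}=\mu-\underline{c}$, all consumers stop at their first visit and each firm earns $1/n$---the outcome of $\delta_\mu$. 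The main obstacle I anticipate is establishing the strict inequality $D(x;G^*)<D^*$ uniformly on $[0,\overline{r})$; handling degenerate cases and potential flat regions of $G^*$ requires combining the two strict sources---$G^*(r(c)-)\geq G^*(x)$ and $\tfrac{1-y^n}{n(1-y)}>y^{n-1}$---and using $\overline{r}<\max(\supp(G^*))$ to guarantee $G^*(x)<1$.
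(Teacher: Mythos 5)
Your proposal is correct and tracks the paper's joint proof of Propositions 2 and 3: in both, the key step is a mean-preserving pooling deviation that collapses mass from below $\overline{r}=r(\underline{c};G^*)$ together with mass from above $\overline{r}$ into a single signal at or above $\overline{r}$, exploiting that interim demand is strictly lower for any $x<\overline{r}$ than at $\overline{r}$ (which requires $\underline{c}>0$ to guarantee $\overline{r}<\max(\supp(G^*))$, hence $G^*(x)<1$). You make the strict inequality explicit via the identity $\tfrac{1-y^n}{n(1-y)}>y^{n-1}$ and place the pooled atom exactly at $\overline{r}$ rather than at the paper's limiting barycenter above $\overline{r}$, but the argument is the same in substance.
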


In light of Proposition \ref{proposition:positive_c_h}, we restrict attention to the case where $\underline{c}=0$ hereafter.

\section{Equilibrium}
\label{section:equilibrium}

In this section, we examine the equilibrium in markets with heterogeneous search costs.
We demonstrate that a simple class of equilibrium always exists
and that it is the unique class of equilibria when the number of firms is sufficiently large.

For each $a\geq 0$, define an \textit{upper censorship distribution} (UCD) $U_a$ as follows:
\begin{equation}
    U_{a}(x)=
    \begin{dcases}
        F(x) & x\;\in[0,a) \\
        F(a) & x\;\in[a,k_a) \\ 
        1 &  x\in[k_a,1]
    \end{dcases}, \; \; \;
    \; \; k_a = \mathbb{E}[v|\;v\geq a]\geq\mu.
    \label{eq:upper_censorship}
\end{equation}
We call an equilibrium where all firms play $U_a$ an \textit{upper censorship equilibrium} (UCE).
$U_a$ has a simple structure supported on $[0,a]\cup \{k_a\}$:
all match values $v$ below the censorship threshold $a$ are truthfully revealed, while all values $v>a$ are pooled to an atom at $k_a$,
via a signal that only reveals that the value exceeds $a$.
Note that $U_a$ is feasible by construction, with $U_0 = \delta_{\mu}$ (no disclosure) and $U_1 = F$ (full disclosure). Importantly, all UCDs can be ordered in terms of Blackwell informativeness by their censorship thresholds:
$U_b$ Blackwell dominates $U_a$ if and only if $b\geq a$.

The following result characterizes the set of upper censorship equilibria of the game, and shows that this is the only class of symmetric equilibrium when the number of firms is sufficiently large.

\begin{theorem}$\;$
    Let $E_n$ be the set of all (symmetric) equilibrium distributions when there are $n$ firms.
    \begin{enumerate}[label=(\alph*)]
        \item For any $n$, there exists an upper censorship equilibrium. Moreover, if $U_a$ is an equilibrium distribution, then so is $U_b$ for any $b<a$.
        \item 
            In the limit of intense competition $(n\rightarrow\infty)$, all symmetric equilibria are upper censorship. Specifically, there exists $a^M\in[0,1)$ such that $\lim_{n\rightarrow\infty}E_n=\{U_a:a\leq a^M\}$. Furthermore, $\{U_a:a\leq a^M\}\subseteq E_n$ for large enough $n$.\footnote{
        Under Assumption~\ref{assumption:diag} (page~\pageref{assumption:diag}), a stronger conclusion holds:
    $E_n=\{U_a:a\leq a^M\}$ for large enough $n>N$.
    }
    \end{enumerate}
    \label{thm:upper}
\end{theorem}

    The first part of Theorem \ref{thm:upper} shows that for any number of firms, a continuum of upper censorship equilibria exist---including one that exhibits the informational Diamond paradox ($U_0=\delta_{\mu})$. 
    The second part of Theorem~\ref{thm:upper} fully characterizes equilibrium behavior in the limit of intense competition $(n\rightarrow\infty)$: all symmetric equilibria must be the upper-censorship form, while any alternative symmetric equilibria that may arise at lower levels of competition eventually vanish. The entire set of upper censorship equilibria is indexed by a single threshold $a^M$, which we refer to as the \textit{maximally informative threshold}, since $U_{a^M}$ is the most informative equilibrium in the limit of intense competition.


Two key implications follow from Theorem \ref{thm:upper}. 
First, the maximally informative threshold $a^M$ is strictly less than 1, implying that full disclosure does \textit{not} occur even in the limit of intense competition---firms retain an incentive to pool information, as the local monopoly power granted by search frictions persists. Second, whenever $a^M>0$, informative equilibria emerge despite the presence of search frictions, random search, and the unobservability of deviations. These findings stand in stark contrast with the two benchmark cases in Section \ref{section:benchmark}, highlighting the role of consumer heterogeneity in information provision.

For ease of exposition, we focus for the remainder of the paper on the case in which the censorship threshold $a$ lies below the lowest reservation value, that is, $a\leq \underline{r} = r(\overline{c}; U_a)$. While the case of $a > \underline{r}$ follows a similar intuition, it introduces additional analytical complexities; Appendix \ref{appendix:b} provides a comprehensive analysis for the case of $a > \underline{r}$.

For any $a\leq \underline{r}$, upper censorship distribution $U_a$ consists of two segments: the \textit{search} signals $x\in[0,a]$ and the \textit{purchase} signal $x=k_a=\overline{r}$.\footnote{
    Hereafter, to emphasize that the search signal
    is precisely located at the highest reservation value $\overline{r}$,
    we will use $x=\overline{r}$ instead of $x=k_a$,
    unless the dependence on the censorship threshold $a$ needs to be explicitly highlighted.
}
Anticipating $U_a$, the consumers' reservation values are continuously distributed over $[\underline{r}, \overline{r}]=[\underline{r}, k_a]$. Hence, search signals encourage all consumers to continue searching, regardless of their search costs, while the purchase signal $x=\overline{r}$ induces all consumers to cease their search and purchase immediately.

Under an upper censorship equilibrium with $U_a$, each firm's interim demand is given by
\begin{equation*}
    D(x;U_a)=
    \begin{dcases}
        F(x)^{n-1} & \text{ } x\in [0,a)
        \\ F(a)^{n-1} & \text{ } x\in [a,\underline{r})
        \\ F(a)^{n-1} + J_F(a)\underbrace{\Bigg(1-H(c_{U_a}(x))\Bigg)}_{
        \text{fraction of ``stoppers''}} & \text{ } x\in [\underline{r},\overline{r})
        \\ F(a)^{n-1}+J_F(a) & \text{ } x\in [\overline{r},1]
    \end{dcases}, \; \; \;\; 
    J_F(a):=\left(\frac{1-F(a)^n}{n(1-F(a))}-F(a)^{n-1}\right).
\end{equation*}

The right panel of Figure \ref{fig:upper_censorship_D} depicts $D(x;U_a)$ when consumers' search costs are distributed according to $H$ on the left panel. For signals below the lowest reservation value \( \underline{r} \), firms essentially face a frictionless market, leading to the interim demand of  \( U_a(x)^{n-1}=F(x)^{n-1} \).

Notably, while competing firms provide no information in the region \( (a, \overline{r}) \), the demand remains flat over \( (a, \underline{r}) \) but increases strictly over \( (\underline{r}, \overline{r}) \).

This increase arises because, firm's deviation to a \textit{partial-purchase signal} \( x \in (\underline{r}, \overline{r}) \) induces high-cost consumers with $c\geq c_{U_a}(x)$ to halt their search immediately, whereas they would have continued searching under $x=a$. The gain from capturing these consumers is quantified by the term $J_F(a)(1-H(c_{U_a}(x)))$. Note that a UCE essentially represents an all-or-nothing gamble: each signal either prompts all consumers to continue their search or all to stop immediately. In contrast, providing partial-purchase signals serves a safer gamble by differentiating purchase probability across consumer types.




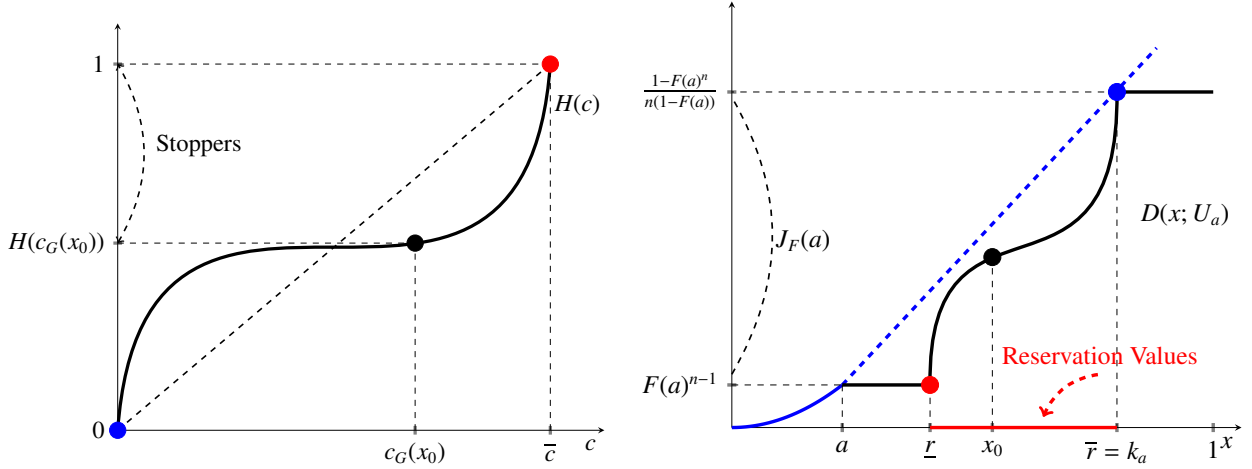
\begin{figure}[ht]
    \begin{subfigure}[b]{.49\textwidth}
        \begin{adjustbox}{width=\linewidth}
            \begin{tikzpicture}
                \begin{axis}[
                    xlabel style = {at={(axis description cs:1,0)},anchor=north east},
                    ylabel style = {at={(axis description cs:0,1)},anchor=south east},
                    axis lines = left,
                    xmin = 0,
                    xmax = 1.8,
                    ymin = 0,
                    ymax = 1,
                    clip = false,
                    legend pos = north west,
                    xtick = {1.1, 1.6},
                    xticklabels = {$c_G(x_0)$, $\overline{c}$},
                    ytick = {0, 0.46, 0.9},
                    yticklabels = {$0$, $H(c_G(x_0))$, $1$},
                ]            
                \draw[ultra thick]    
                (axis cs: 0.0,0.0) .. controls (axis cs: 0.1,0.9) and (axis cs: 1.5,0) .. (axis cs: 1.6,0.9) ;
                \node[right] at (axis cs: 1.8, 0) {$c$};                
                \node[left] at (axis cs: 0, 1) {$H(c)$};
                \draw[dashed, thin] (axis cs: 1.1,0) -- (axis cs: 1.1,0.46);
                \draw[dashed, thin] (axis cs: 0,0.46) -- (axis cs: 1.1,0.46);
                \draw[dashed, thick , <->] (axis cs: 0,0.46) to [bend right] (axis cs: 0,0.9) ;
                \node at (axis cs: 0.3, 0.7) {``stoppers''};
                \node[circle, draw=black, fill=black, inner sep=0pt, minimum size = 8pt] at (axis cs: 1.1, 0.46) {};
                \addplot[
                    domain = 0:1.6,
                    color=black,
                    dashed,
                    style = thick]
                    {0.9*x/1.6};
                \draw[dashed, thin] (axis cs: 0,0.9) -- (axis cs: 1.6,0.9);
                \draw[dashed, thin] (axis cs: 1.6,0) -- (axis cs: 1.6,0.9);
                \node[circle, draw=red, fill=red, inner sep=0pt, minimum size = 8pt] at (axis cs: 1.6, 0.9) {};
                \node[circle, draw=blue, fill=blue, inner sep=0pt, minimum size = 8pt] at (axis cs: 0, 0) {};
                \end{axis}
            \end{tikzpicture}
        \end{adjustbox}
    \end{subfigure}
    \hfill
    \begin{subfigure}[b]{.49\textwidth}
        \begin{adjustbox}{width=\linewidth}
            \begin{tikzpicture}
                \begin{axis}[
                    xlabel style = {at={(axis description cs:1,0)},anchor=north east},
                    ylabel style = {at={(axis description cs:0,1)},anchor=south east},
                    axis lines = left,
                    xmin = 0,
                    xmax = 1.8,
                    ymin = 0,
                    ymax = 0.8,
                    clip = false,
                    legend pos = north west,
                    xtick = {0.39, 0.7, 0.92, 1.36, 1.7},
                    xticklabels = {$a$, $\underline{r}$, $x_0$, $\overline{r}=k_a$, 1},
                    ytick = {0.08, 0.63},
                    yticklabels = {$F(a)^{n-1}$, $\frac{1-F(a)^n}{n(1-F(a))}$},
                ]            
                \addplot[
                    domain = 0:0.39,
                    color=blue,
                    style = ultra thick]
                    {x^2/1.9025};
                \addplot[
                    domain = 0.39:1.5,
                    color=blue,
                    dashed,
                    style = ultra thick]
                    {0.57*(x-0.39)+0.08};
                \draw[ultra thick]    (axis cs: 0.39,0.08) -- (axis cs: 0.70,0.08) ;
                \draw[ultra thick]    (axis cs: 0.70,0.08) .. controls (axis cs: 0.69,0.49) and (axis cs: 1.37,0.18) .. (axis cs: 1.36,0.63) ;
                \draw[ultra thick]    (axis cs: 1.7,0.63) -- (axis cs: 1.36,0.63) ;
                \node[right] at (axis cs: 1.8, 0) {$x$};
                \node[left] at (axis cs: 0, 0.8) {$D(x;U_a)$};
                \draw[dashed, thin] (axis cs: 1.36,0.63) -- (axis cs: 1.36,0);
                \draw[dashed, thin] (axis cs: 1.36,0.63) -- (axis cs: 0,0.63);
                \draw[dashed, thin] (axis cs: 0.39,0.08) -- (axis cs: 0.39,0);
                \draw[dashed, thin] (axis cs: 0.70,0.08) -- (axis cs: 0.70,0);
                \draw[dashed, thin] (axis cs: 0,0.08) -- (axis cs: 0.39,0.08);
                \draw[dashed, thin] (axis cs: 0.92,0) -- (axis cs: 0.92,0.32);
                \node[circle, draw=black, fill=black, inner sep=0pt, minimum size = 8pt] at (axis cs: 0.92, 0.32) {};
                \node[circle, draw=blue, fill=blue, inner sep=0pt, minimum size = 8pt] at (axis cs: 1.36, 0.63) {};
                \node[circle, draw=red, fill=red, inner sep=0pt, minimum size = 8pt] at (axis cs: 0.7, 0.08) {};
                \draw[dashed, thick] (axis cs: 0,0.1) to [bend right] (axis cs: 0,0.62);
                \draw[red, ultra thick] (axis cs: 0.7,0) to (axis cs: 1.36, 0);
                \draw [<-,red, dashed, ultra thick] (axis cs: 1.1,0.02) to [bend left] (axis cs: 1.3,0.1)
                node[above] {  reservation values };
                \node at (axis cs: 0.25,0.35) {$J_F(a)$};
                \end{axis}
            \end{tikzpicture}
    \end{adjustbox}
    \end{subfigure}
    \caption{
        The cost distribution $H(c)$ (left panel) 
        and the payoff $D(x;U_a)$ (right panel).
        The blue/black/red dots in the right panel corresponds to the reservation values of the consumers with different search costs (blue/black/red dots) in the left panel.
        }
    \label{fig:upper_censorship_D}
\end{figure}

Over the interval $(\underline{r},\overline{r})$ of partial-purchase signals,
the demand $D(x_i;U_a)$ is a reflected and scaled version of the cost distribution $H$.
Mathematically, this arises from the linearity of $c_{U_a}(x)=(F(a)-1)(x-k_a)$ in this interval. Intuitively, because no information is provided under $U_a$ in this interval, the probability of receiving a higher signal remains constant.

We next present a graphical argument for equilibrium verification: For sufficiently large $n$, $U_a$ constitutes an equilibrium  if the line connecting \( (a, D(a; U_a)) \) and \( (\overline{r}, D(\overline{r}; U_a)) \) (blue dotted line in Figure~\ref{fig:upper_censorship_D}) remains above \( D(x; U_a) \) for all \( x \in [\underline{r}, \overline{r}] \). To understand this, define the auxiliary function 
\( \phi_a:[0,1] \rightarrow \mathbb{R} \) as 
\begin{equation}  
\phi_a(x) :=  
\begin{cases}  
    D(x, U_a), & x < a, \\  
    \dfrac{D(\overline{r}, U_a) - D(a, U_a)}{\overline{r} - a}(x - a) + F(a)^{n-1}, & x \geq a.  
\end{cases}  
\label{eq:upper_DM}  
\end{equation}  
For  \( x\in\supp(U_a) = [0, a] \cup \{\overline{r}\} \), the function \( \phi_a(\cdot) \) matches the interim demand \( D(\cdot; U_a) \), while over the interval \( [a, 1] \), it is the secant line of $D(\cdot;U_a)$ connecting points $a$ and $\overline{r}$. 

The function $\phi_a$ provides a straightforward equilibrium verification: $U_a$ is an equilibrium if and only if (1) $\phi_a$ is convex, and (2) $\phi_a(x)\geq D(x;U_a)$ for all $x$.\footnote{
    The function $\phi_a$ is the price function in \textcite{dworczak2019simple}.
    They show that $G\in \text{MPC}(F)$ is an equilibrium if and only if there exists some $\phi:[0,1]\rightarrow \mathbb{R}$ such that
    (a) $\phi(x)$ is convex, and $\phi(x)\geq D(x,G)$ for all $x$, (b) $\text{supp}(G)\subseteq \{x| \phi(x)=D(x,G)\}$, and
    (c) $\int_0^1 \phi(x)dF(x)=\int_0^1\phi(x)dG(x)$. Our function $\phi_a$ satisfies (b) and (c) by construction. See Lemma \ref{lemma:iff}.
}  Convexity ensures that redistributing mass within $\supp(U_a)$ is unprofitable. The second condition guarantees any deviation outside of $\supp(U_a$)---a partial-purchase signal---is unprofitable. When the number of firms $n$ is sufficiently large, the convexity of $\phi_a$ naturally emerges.\footnote{
    See Lemma \ref{lemma:phi_a_convex_F}
    and Lemma \ref{lemma:phi_a_convex_kink}
    in Appendix \ref{appendix:b}.} 
Over $[0,a]$, $\phi_a=F^{n-1}$ reflects the distribution of the consumer's best alternative. Increased competition makes consumers more likely to encounter better alternatives, implying the convexity of $F^{n-1}$. Thus, firms provide more information at the bottom to win out competition. Moreover, intense competition decreases the likelihood of consumers returning after leaving a firm, increasing the incentive to pool at signal $\overline{r}$, which manifests as an upward kink of $\phi_a$ at $x=a$.\footnote{
    This intuition is analogous to that of \textcite{stahl1989oligopolistic}, where as $n\rightarrow \infty$, the increased incentive to secure consumers results as the price converging to the monopoly price.
} As $n$ increases, the local incentive to outcompete rivals at the bottom, and the global incentive to retain searchers at the top become more pronounced.

The function \( \phi_a \) carries a deeper economic interpretation---it represents the \textit{virtual interim demand} faced by a firm choosing the information structure \( U_a \).\footnote{  
    In analogy to the virtual value in \textcite{myerson1981optimal}, where it represents the Lagrange multiplier of the agent's local incentive compatibility constraint, \textcite{kim2023choosing} adopts this terminology in the context of information design. Here, the curvature \( \phi_a'' \) serves as the Lagrange multiplier associated with the MPC constraint \( G \in MPC(F) \).  
}  
The gap between the virtual interim demand \( \phi_a(x) \) and 
the interim demand \( D(x; U_a) \) quantifies 
the virtual interim gain of following \( U_a \),
rather than deviating to a different distribution \( G \neq U_a \) 
with the signal \( x \) in its support.  

To illustrate,
fix some consumer type \( c \in (0, \overline{c}) \), and consider a deviation from \( U_a \)  
to a partial-purchase signal \( x = r(c)\in(\underline r, \overline r) \),  collapsing some mass from  
\( a \) and \( \overline{r} \).  
Such deviation entails both a cost and a gain.  
Shifting mass downward from the purchase signal $\overline{r}$ to $r(c)$ forgoes the consumers with cost $c'\in[0,c]$, who would have stopped immediately upon receiving signal $\overline{r}$. However, shifting mass upward from the search signal $a$ to $r(c)$ captures the high-cost consumers with $c'\in[c,\overline{c}]$, who would have otherwise continued searching upon receiving signal $a$. Since deviations must occur in a mean-preserving way, inducing an atom of size $dm$ at $x=r(c)$ requires moving exactly $\frac{c}{c_F(a)}dm$ mass from signal $a$ and the complementary \( \left(1 - \frac{c}{c_F(a)}\right)dm \) mass from \( \overline{r} \).\footnote{To see this, let $\beta dm$ be the mass shifted from $a$ to $r(c)$ and $(1-\beta)dm$ from $\overline{r}$ to $r(c)$. Bayes plausibility implies $\beta dm \times (r(c)-a) + (1-\beta)dm \times (r(c)-\overline{r})=0$. Dividing by $dm$ and rearranging gives $\beta (\overline{r}-a)=\overline{r}-r(c)$, so: $$\beta=\frac{\overline{r}-r(c)}{\overline{r}-a}=\frac{(1-F(a))(\overline{r}-r(c))}{(1-F(a))(\overline{r}-a)}=\frac{c_{U_a}(r(c))}{c_{U_a}(a)}=\frac{c}{c_F(a)}.$$ The third equality follows from $c_{U_a}(x)=(1-F(a))(\overline{r}-x)$ over $x\in[a,\overline{r}]$ (see definition \eqref{eq:c_G}); the final equality uses $c_{U_a}(a)=c_F(a)$ since $U_a=F$ over $[0,a]$.
} Therefore, the net gain from the deviation is \begin{align}  
    \text{Net Gain}  
    &=
    J_F(a) \left(  
        \frac{c}{c_F(a)}(1 - H(c))
        - \left(1 - \frac{c}{c_F(a)}\right)H(c)
    \right)dm \label{eq:net_gain}\\  
    & = -\Bigg(\phi_a(r(c)) - D(r(c); U_a)\Bigg)dm.\label{eq:diff}  
\end{align}  

The first and second terms in \eqref{eq:net_gain} represent the gain 
from capturing the high-cost consumers and the loss from forgoing the low-cost consumers, respectively.  
By \eqref{eq:diff}, any deviation to a partial-purchase signal is unprofitable if and only if
$\phi_a(x)\geq D(x_i;U_a)$ for all $x$. 

Equation \eqref{eq:net_gain} clarifies why multiple UCEs exist. As the censorship threshold $a$ increases, the weight $\frac{c}{c_F(a)}$ associated with the gain grows, making deviations increasingly attractive. In other words, firms become more tempted to exploit the local monopoly power when they are already disclosing relatively more information. Consequently, the constraint $\phi_a(x) \geq D(x; U_a)$ tightens uniformly as $a$ increases. If $U_a$ is an equilibrium, lowering the threshold to any $b<a$ relaxes the constraint, implying $U_b$ is an equilibrium.

    It is worthwhile to note that the the sign of the large bracketed term in \eqref{eq:net_gain} is independent of the number of firms $n$.    This sheds light on why upper censorship equilibria remain stable in the limit of intense competition. While the profitability of partial-purchase deviations does not vary with $n$, the function $\phi_a$ becomes convex for sufficiently large $n$. In other words, the incentive to deviate remains unchanged, while the appeal of adhering to $U_a$ strengthens in the intense competition limit, reinforcing the equilibrium.

\paragraph{Maximally Informative Equilibrium}

Given the above analysis, we can characterize the maximally informative threshold $a^M$---the highest censorship threshold $a$ at which no profitable deviations exist. Rearranging \eqref{eq:net_gain}, we derive a simple condition: Given an upper-censorship profile $U_a$, a deviation to \textit{any} partial-purchase signal is unprofitable if and only if
\begin{equation}
\frac{1}{c_F(a)}\leq \underline{h}^{avg}:= \min_{c\in[0,\overline{c}]}\frac{H(c)}{c}. 
\label{eq:unprofitable}
\end{equation}
We call $\underline{h}^{avg}$ the \emph{minimum average density} of $H$. Note that as the censorship threshold $a$ rises, the left-hand side of \eqref{eq:unprofitable} strictly increases, so the constraint becomes more binding. Thus, at the maximally informative threshold $a^M$, \eqref{eq:unprofitable} binds with equality, and any further increase in $a$ would trigger profitable deviations.

The term $\underline{h}^{avg}$ serves a sufficient statistic for our analysis and has a straightforward economic interpretation: it measures  the \textit{evenness} of the distribution $H$.
As $H$ approaches the uniform distribution---the most even distribution---$\underline{h}^{avg}$ increases.
A higher slope $\underline{h}^{avg}$ indicates a distribution closer to uniformity (the diagonal), representing a more evenly spread distribution of search costs. Consequently, the uniform distribution, being the most even, attains the highest $\underline{h}^{avg}=\frac{1}{\overline{c}}$. Notably, the maximal informativeness of equilibria is determined by the \emph{global} structure of search cost distribution.


To provide a clear characterization of the maximally informative threshold, we impose an assumption on the search cost distribution $H$. While relaxing this assumption introduces some technical complexity, a key qualitative feature of our analysis remains unchanged. We discuss the general case at the end of this section and present the full analysis in Appendix~\ref{appendix:b}.

\begin{assumption}    \label{assumption:diag}
    The minimum of $H(c)/c$ is attained  at some $c^M\in[0,\overline{c})$, i.e., $\underline{h}^{avg}=h(c^M)$.\footnote{
        At $c=0$, we define $\frac{H(0)}{0}:=h(0)$.
        A necessary and sufficient condition for the assumption
        is that there exists  some $c$ such that  $\frac{H(c)}{c}\leq \frac{1}{\overline{c}}$.
        Graphically, this is equivalent to $H$ lying weakly below the uniform distribution over some interval.
        This condition holds for any distribution $H$ with a symmetric density, distributions that are MPS or MPC of the uniform distribution, or whenever either 
        $h(0)\leq \frac{1}{\overline{c}}$ or $h(\overline{c})\geq \frac{1}{\overline{c}}$.
        }
\end{assumption}

In Appendix \ref{appendix:b}, we show that Assumption \ref{assumption:diag} implies $a\leq \underline r$, so the preceding analysis applies in this case. 
Assuming the solution that binds \eqref{eq:unprofitable} with equality exists, $a^M$ is characterized as:
\begin{equation}  
\frac{1}{c_F(a^M)} = \underline{h}^{avg} \quad \Longleftrightarrow\quad a^M=c_F^{-1}\left(\frac{1}{\underline{h}^{avg}}\right).
\label{eq:optimal}  
\end{equation}  
Equation \eqref{eq:optimal} implies that, in the maximally informative equilibrium, firms are indifferent to deviating to the optimal partial-purchase signal. In other words, the signal $x=r(c^M, U_{a^M})$ maximizes the net gain \eqref{eq:net_gain} from deviation but results in zero gain under the profile $U_{a^M}$. Graphically, this corresponds to the virtual interim demand $\phi_{a^M}(\cdot)$ being \textit{tangent} to the interim demand (Figure~\ref{fig:maximally}).

\begin{figure}[ht]
    \begin{subfigure}[b]{.49\textwidth}
        \begin{adjustbox}{width=.99\linewidth}
            \begin{tikzpicture}
                \begin{axis}[
                    xlabel style = {at={(axis description cs:1,0)},anchor=north east},
                    ylabel style = {at={(axis description cs:0,1)},anchor=south east},
                    axis lines = left,
                    xmin = 0,
                    xmax = 1.8,
                    ymin = 0,
                    ymax = 1,
                    clip = false,
                    legend pos = north west,
                    xtick = {1.36, 1.6},
                    xticklabels = {$\textcolor{red}{c^M}$, $\overline{c}$},
                    ytick = {0, 0.9},
                    yticklabels = {$0$, $1$},
                ]            

                \draw[ultra thick]    
                (axis cs: 0.0,0.0) .. controls (axis cs: 0.1,0.9) and (axis cs: 1.5,0) .. (axis cs: 1.6,0.9) ;
                \node[right] at (axis cs: 1.8, 0) {$c$};                
                \node[left] at (axis cs: 0, 1) {$H(c)$};
                \node at (axis cs: 1, 0.2) {\textcolor{blue}{Slope=$\underline{h}^{avg}=h(c^M)$}};
                \addplot[
                    domain = 0:1.6,
                    color=blue,
                    style = ultra thick]
                    {0.38*x};
                \addplot[
                    domain = 0:1.6,
                    color=black,
                    dashed,
                    style = thick]
                    {0.9*x/1.6};
                \draw[dashed, thin] (axis cs: 0,0.9) -- (axis cs: 1.6,0.9);
                \draw[dashed, thin] (axis cs: 1.6,0) -- (axis cs: 1.6,0.9);
                \node[circle, draw=red, fill=red, inner sep=0pt, minimum size = 5pt] at (axis cs: 1.36, 0.52) {};
                \node[circle, draw=blue, fill=blue, inner sep=0pt, minimum size = 5pt] at (axis cs: 0, 0) {};
                \draw[dashed, red, thin] (axis cs:1.36,0.52) -- (axis cs: 1.36,0);
                \end{axis}
            \end{tikzpicture}
        \end{adjustbox}
        \caption{ Cost distribution $H(c)$ and the point of tangency from the origin ($c^M$)}
        \label{fig:maximal_H}
    \end{subfigure}
    \hfill
    \begin{subfigure}[b]{.49\textwidth}
        \begin{adjustbox}{width=.99\linewidth}
            \begin{tikzpicture}
                \begin{axis}[
                    xlabel style = {at={(axis description cs:1,0)},anchor=north east},
                    ylabel style = {at={(axis description cs:0,1)},anchor=south east},
                    axis lines = left,
                    xmin = 0,
                    xmax = 1.8,
                    ymin = 0,
                    ymax = 1,
                    clip = false,
                    legend pos = north west,
                    xtick = {0.65, 0.96 ,1.53, 1.7},
                    xticklabels = {$a^M$, $\underline r$,$k_a$, 1},
                    ytick = {0.17, 0.73},
                    yticklabels = {$F(a^M)^{n-1}$, $\frac{1-F(a^M)^n}{n(1-F(a^M))}$},
                ]            
                \addplot[
                    domain = 0:0.65,
                    color=blue,
                    style = ultra thick]
                    {x^2/2.5};
                \addplot[
                    domain = 0.65:1.8,
                    color=blue,
                    style = ultra thick]
                        {(x-0.65)/1.57 +0.17}
                        node [pos = 0.9, above left] {$\phi_{a^M}(x)$};
                \draw[ultra thick]    (axis cs: 0.65,0.17) -- (axis cs: 0.96,0.17) ;
                \draw[ultra thick]    (axis cs: 0.96,0.17) .. controls (axis cs: 0.96,0.71) and (axis cs: 1.53,0.26) .. (axis cs: 1.53,0.73) ;
                \draw[ultra thick]    (axis cs: 1.53,0.73) -- (axis cs: 1.8,0.73) ;
                \node[circle, draw=red, fill=red, inner sep=0pt, minimum size = 5pt] at (axis cs: 1.07, 0.44) {};
                \node[circle, draw=blue, fill=blue, inner sep=0pt, minimum size = 5pt] at (axis cs: 1.53, 0.73) {};
                \node[right] at (axis cs: 1.8, 0) {$x$};
                \node[left] at (axis cs: 0, 1) {$D(x;U_a)$};
                \draw[dashed, thin] (axis cs: 0.65,0.17) -- (axis cs: 0.65,0);
                \draw[dashed, thin] (axis cs: 0.96,0.17)  -- (axis cs: 0.96,0);
                \draw[dashed, thin] (axis cs: 0.65,0.17)  -- (axis cs: 0,0.17);
                \draw[dashed, thin] (axis cs: 1.53,0.73) -- (axis cs: 1.53,0);
                \draw[dashed, thin] (axis cs: 1.53,0.73) -- (axis cs: 0,0.73);

                \draw[dashed, red, thick] (axis cs: 1.07,0.44) -- (axis cs: 1.07 ,0);

                \draw[<-, red, line width=0.5mm](axis cs:1.09, 0.02) -- ( axis cs: 1.18,0.08 )
                node[above right]  {$\textcolor{red}{r(c^M, U_{a^M})}$};
                \end{axis}
            \end{tikzpicture}
    \end{adjustbox}
    \caption{Firm's interim demand $D(x,U_{a^M})$ and the corresponding virtual interim demand $\phi_{a^M}(x)$}
    \label{fig:maximal_demand}
    \end{subfigure}
    \caption{The maximally informative upper censorship equilibrium $U_{a^M}$}
    \label{fig:maximally}
\end{figure}
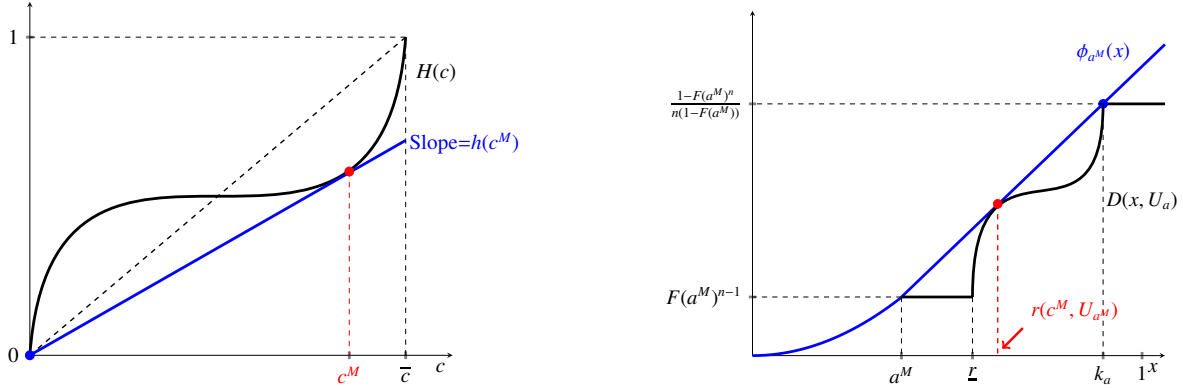

Since $c_F(a)\leq \mu$ for all $a$, a solution \( a^M \) to \eqref{eq:optimal} exists if and only if \( \underline{h}^{avg} \geq \frac{1}{\mu} \). If this condition fails (\( \underline{h}^{avg} < \frac{1}{\mu} \)), inequality \eqref{eq:unprofitable} cannot hold for any threshold \( a > 0 \), and thus the only equilibrium is no information: $U_0=\delta_{\mu}$.
The following proposition characterizes the set of all UCEs under Assumption \ref{assumption:diag}, summarizing the preceding discussion.\footnote{
    The general characterization without the assumption is characterized in Theorem \ref{thm:appendix_maximal} in Appendix \ref{appendix:b}.
}

\begin{proposition}
    Assume $H$ satisfies Assumption \ref{assumption:diag}. Then,
        \begin{enumerate}[label=(\alph*)]
        \item If $\underline{h}^{avg}\in (0,\frac{1}{\mu}]$, then 
        $$a^M=0;$$
        \item If $\underline{h}^{avg}\in (\frac{1}{\mu},\frac{1}{\overline{c}}]$, then 
        $$a^M=c_F^{-1}\left(\frac{1}{\underline{h}^{avg}}\right).$$
    \end{enumerate}
    \label{proposition:charcteriztion}
\end{proposition}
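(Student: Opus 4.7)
The plan is to read off both parts of the proposition directly from the equilibrium inequality $\frac{1}{c_F(a)}\leq \min_c \frac{H(c)}{c}$ established in \eqref{eq:unprofitable}, together with two elementary observations about its two sides. First, under Assumption \ref{assumption:diag}, $c^M$ is interior to $[0,\overline{c}]$, so the first-order condition $\tfrac{d}{dc}\bigl(H(c)/c\bigr)\big|_{c=c^M}=0$ yields $h(c^M)=H(c^M)/c^M$, and hence $\min_c H(c)/c = h(c^M)$. Second, differentiating \eqref{eq:c_G} gives $c_F'(a)=-(1-F(a))<0$ on $[0,1)$, with endpoint values $c_F(0)=\mu$ and $c_F(1)=0$; so $a\mapsto 1/c_F(a)$ is a continuous, strictly increasing bijection from $[0,1)$ onto $[1/\mu,\infty)$.

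With these two facts in hand, each case follows immediately. In case (a), when $h(c^M)\leq 1/\mu$, the inequality $1/c_F(a)\leq h(c^M)$ is satisfied at no $a>0$ (and at most with equality at $a=0$), so the unique UCE is $U_0$ and $a^M=0$. In case (b), when $h(c^M)\in(1/\mu,1/\overline{c}]$, the intermediate value theorem combined with strict monotonicity of $1/c_F$ yields a unique $a^*\in(0,1)$ with $1/c_F(a^*)=h(c^M)$, namely $a^*=c_F^{-1}\bigl(1/h(c^M)\bigr)$; the inequality holds precisely on $[0,a^*]$, so by Theorem \ref{thm:upper}(b) the set of UCEs is exactly $\{U_a:a\leq a^*\}$, and $a^M=a^*$.

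The main obstacle I anticipate is verifying self-consistency of the preceding derivation, which assumed $a\leq\underline{r}$ throughout. A short calculation using $(k_a-a)(1-F(a))=c_F(a)$ (integration by parts) together with the defining identity $\overline{c}=(k_a-\underline{r})(1-F(a))$ for the lowest reservation value under $U_a$ shows that $a\leq\underline{r}$ is equivalent to $c_F(a)\geq\overline{c}$. Evaluated at $a=a^M$, this reduces to $1/h(c^M)\geq\overline{c}$, i.e., $h(c^M)\leq 1/\overline{c}$ --- precisely the upper bound appearing in case (b), which is itself guaranteed by Assumption \ref{assumption:diag} (per the equivalent characterization in its footnote). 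Thus the regime split in the proposition exactly aligns with the regime in which the simplified analysis is valid, and the remaining case $h(c^M)>1/\overline{c}$ (which forces $a^M>\underline{r}$) is ruled out by the assumption and deferred to Appendix \ref{appendix:b}.
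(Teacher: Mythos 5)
Your argument correctly recovers the identity $\min_c H(c)/c=h(c^M)$ and accurately checks that the regime split in the proposition is self-consistent with $a\le\underline r$, but it establishes only half of what is needed. The inequality in \eqref{eq:unprofitable} corresponds to the condition $\phi_a(x)\ge D(x;U_a)$ for all $x$, which rules out the particular deviation of collapsing mass from $a$ and $\overline r$ onto a partial-purchase signal. That is one of the two conditions in the verification lemma the paper uses (restated there as Lemma~\ref{lemma:iff} via the price-function characterization of \textcite{dworczak2019simple}): $U_a$ is an equilibrium iff (i) $\phi_a$ is convex \emph{and} (ii) $\phi_a\ge D(\cdot;U_a)$. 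You never address (i). Convexity of $\phi_a$ is what rules out deviations that redistribute mass within $\supp(U_a)$ itself (e.g.\ garbling signals in $[0,a]$, or splitting the atom), and it does not come for free: it requires convexity of $F^{n-1}$ on $[0,a]$ (Lemma~\ref{lemma:phi_a_convex_F}) and an upward kink of $\phi_a$ at $x=a$ (Lemma~\ref{lemma:phi_a_convex_kink}), both of which hold only for $n$ sufficiently large. For small $n$, $F^{n-1}$ need not be convex, and $U_{a^M}$ defined by your formula may fail to be an equilibrium even though \eqref{eq:unprofitable} binds. Relatedly, your claim ``by Theorem~\ref{thm:upper}(b) the set of UCEs is exactly $\{U_a:a\le a^*\}$'' overreads that result: Theorem~\ref{thm:upper}(b) only gives downward closure, so you still owe an explicit verification that $U_{a^*}$ itself is an equilibrium (which is precisely where condition (i) is needed), and that no $U_a$ with $a>a^*$ is.

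One smaller slip: you assert that Assumption~\ref{assumption:diag} makes $c^M$ interior, but the assumption only requires $c^M<\overline c$; the case $c^M=0$ is allowed. Your first-order-condition argument then breaks down at the boundary, though the desired identity $\min_c H(c)/c = h(c^M)$ still holds there by the paper's convention $H(0)/0:=h(0)$. This is easily patched, unlike the missing convexity argument, which is a genuine gap.
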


The following result is a straightforward implication of Proposition~\ref{proposition:charcteriztion}.
\begin{corollary}
    The maximally informative threshold $a^M$ is increasing in $\underline{h}^{avg}$.
    Among the class of distributions satisfying Assumption \ref{assumption:diag},
    $a^M$ is highest in the uniform distribution.
    \label{corollary:a^M}
\end{corollary}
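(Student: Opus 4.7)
The plan is to derive both claims directly from the closed-form characterization in Proposition~\ref{proposition:charcteriztion}, using elementary monotonicity of $c_F$.

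First, I would establish monotonicity of $a^M$ in $h(c^M)$. In the first region of Proposition~\ref{proposition:charcteriztion}, namely $h(c^M) \in (0, 1/\mu]$, we have $a^M = 0$, which is trivially weakly increasing. In the second region, $h(c^M) \in (1/\mu, 1/\overline{c}]$, we have $a^M = c_F^{-1}(1/h(c^M))$. Since the density $f$ is positive on $[0,1]$, the function $c_F(x) = \int_x^1 (1 - F(t))\,dt$ is strictly decreasing on $[0,1)$, so its inverse $c_F^{-1}$ is also strictly decreasing on its range. Thus as $h(c^M)$ increases, $1/h(c^M)$ decreases and $c_F^{-1}(1/h(c^M))$ strictly increases. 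Finally, at the boundary $h(c^M) = 1/\mu$, note that $c_F(0) = \int_0^1 (1 - F(t))\,dt = \mu$, so $c_F^{-1}(\mu) = 0$, which agrees with the value from the first region. This confirms that the map $h(c^M) \mapsto a^M$ is continuous and (weakly) increasing on $(0, 1/\overline{c}]$, and strictly so above the threshold $1/\mu$.

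Second, I would establish a uniform upper bound $h(c^M) \leq 1/\overline{c}$ over all admissible cost distributions. Since $H(\overline{c}) = 1$, we immediately have $H(\overline{c})/\overline{c} = 1/\overline{c}$, so the global minimum $h(c^M) = \min_{c\in[0,\overline{c}]} H(c)/c$ is at most $1/\overline{c}$. For the uniform distribution on $[0,\overline{c}]$, $H(c)/c = 1/\overline{c}$ identically on $(0,\overline{c}]$, so $h(c^M) = 1/\overline{c}$ exactly. Assumption~\ref{assumption:diag} is satisfied trivially by the uniform (any $c^M < \overline{c}$ attains the minimum). Combining this bound with the monotonicity from the previous step, any distribution satisfying Assumption~\ref{assumption:diag} yields $a^M \leq c_F^{-1}(\overline{c})$, with equality achieved by the uniform distribution. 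This gives the second claim.

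There is essentially no analytical obstacle here: the result follows mechanically once Proposition~\ref{proposition:charcteriztion} is in hand. The only minor subtlety is verifying continuity of the map $h(c^M) \mapsto a^M$ across the boundary between the two regimes, which reduces to the identity $c_F(0) = \mu$. No additional regularity on $H$ beyond what is already assumed is needed.
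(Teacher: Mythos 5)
Your proof is correct and takes essentially the same route the paper intends: Corollary~\ref{corollary:a^M} is read directly off the closed-form expression in Proposition~\ref{proposition:charcteriztion}, using that $c_F^{-1}$ is strictly decreasing, that $c_F(0)=\mu$ gives continuity across the two regimes, and that $\min_c H(c)/c \leq H(\overline{c})/\overline{c}=1/\overline{c}$ with equality exactly for the uniform. The paper offers no separate proof precisely because the argument is this mechanical, and your write-up supplies the few details (monotonicity of $c_F$, the boundary check, the bound) cleanly and without gaps.
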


To understand why the ``evenness'' $\underline{h}^{avg}$ of a search cost distribution determines the maximum equilibrium informativeness, consider a distribution $H$ that is uneven. In this case, the interim demand under an upper censorship distribution also becomes uneven, as demonstrated in Figures~\ref{fig:upper_censorship_D}~and~\ref{fig:maximally}. Thus, firms have higher incentives to exploit the unevenness by offering a partial-purchase signal that captures a large mass of high-cost consumers while minimizing the loss of forgoing low-cost consumers. As $\underline{h}^{avg}$ increases, such deviations become less profitable.

%
%

Importantly, the monotonic relationship between the maximal informativeness and $\underline{h}^{avg}$ is independent of the prior match value distribution $F$. When comparing two cost distributions $H_1$ and $H_2$, the relative informativeness of equilibrium is determined solely by their intrinsic properties and remains unaffected by $F$. Similarly, the optimal segmentation $c^M$ of low- and high-cost consumers depends solely on the distribution $H$.  Furthremore, $\underline{h}^{avg}$ is easily identifiable through a graphical approach (see Figure~\ref{fig:maximal_H}): it corresponds exactly to the slope of the lowest tangent line passing through the origin and supporting the distribution $H$. This intuitive graphical interpretation provides a simple and effective method to determine and compare maximal informativeness across different cost distributions. A detailed comparative static analysis regarding the role of cost distributions is provided in Section~\ref{section:comparative}.

\paragraph{Uniqueness}
We briefly outline why upper censorship distributions uniquely emerge as equilibria under intense competition. First, 
any disclosed information at lower signals  $(x<\underline{r})$ must be fully revealed.
The intuition mirrors the frictionless benchmark (Proposition \ref{proposition_nofriction}): as the market is essentially frictionless for low signals, any disclosed information unravels toward full disclosure as competition intensifies.

Second, we argue that in any equilibrium, the firms must provide no information for signals close to $\overline r$ (this holds true for any number of $n$). To see this, consider the marginal increase in interim demand at $x\geq \underline{r}$:
\begin{equation}
D'(x,G)=\underbrace{J_G(x)\times\overbrace{(1-G)h(c_G)}^{=(1-H(c_G(x)))'}\vphantom{{\frac{1}{2}}}}_{\text{local monopoly gain}}+\underbrace{\vphantom{{\frac{1}{2}}}\Big( G^{n-1}\Big)'H(c_G)}_{\text{competition gain}}.
\label{eq:marginal.interim.demand}
\end{equation}
    The gain of providing a slightly higher signal can be decomposed to two parts: the \emph{local monopoly gain} and the \emph{competition gain}. The local monopoly gain captures ``marginal stoppers'', i.e., the consumers with reservation value $x$ who would stop under a slightly higher signal. Their density is exactly $(1-G(x))h(c_G(x))=(1-H(c_G(x)))'$ and the purchase probability jumps by $J_G(x)$. The competition gain comes from consumers who continue searching (with fraction  $H(c_G(x))$). A slightly higher signal marginally increases the likelihood the firm would win out competition.

Now suppose to the contrary that there exists an equilibrium in which $G$ is continuous at $\overline{r}$. In this case, both terms in \eqref{eq:marginal.interim.demand} converges to zero as $x\rightarrow \overline{r}$: the competition gain disappears as most consumers stop searching ($H(c_G(x))\rightarrow0$), and the local monopoly gain vanishes because firms can almost guarantee purchase without inducing stopping ($G^{n-1}\rightarrow 1$ implies $J_G(x)\rightarrow 0$). Therefore, the interim demand function must be strictly concave around $x=\overline{r}$, inducing firms to deviate by pooling information.\footnote{
    If there are a mass of `shoppers', i.e. $H(0)>0$, the competition gain do not vanish for the highest signals, yielding starkingly different results. For more details, see Section~\ref{section:extensions}.}

Finally, we argue that any signals in the middle values of $x$ (partial-purchase signals) must either be fully revealed or fully concealed. As competition grows, the local monopoly gain exponentially dominates the competition gain: capturing consumers on their first visit becomes critical as those who leave rarely return.\footnote{This follows from 
$\lim_{n \rightarrow \infty} \frac{J(x)}{G(x)^{n-1}} = \infty$ for any $x$ such that $G(x) < 1$.} In this limit, the local curvature of $D$ is governed by the slope of the stopper density $(1-G)h(c_G)$. If $h'(c_G(x))\leq 0$, the density of marginal stoppers rises with $x$, raising the local monopoly gain and rendering the payoff strictly convex---prompting firms to provide more information. Conversely, if $h'(c_G(x))>0$, further pooling becomes optimal. 
Therefore, firms strictly prefer to provide strictly more or strictly less information locally.



\paragraph{General search cost distributions} \label{pg:general.H}
We now clarify the role of Assumption \ref{assumption:diag} by describing the results under a general search cost distribution $H$. Simply put, Assumption~\ref{assumption:diag} allows us to focus only on the global property---``evenness''---of $H$. Without it, we need an additional equilibrium condition regarding the local shape of $H$ in the higher cost range.

Recall that $U_a$  is an equilibrium if and only if (1) $\phi_a$ is convex and (2) $\phi_a(x)\geq D(x;U_a)$ for all $x$. Assumption \ref{assumption:diag} implies that  equilibrium thresholds $a$ are always below the lowest reservation value $\underline{r}$, ensuring that $\phi_a$ is convex for large~$n$. However, when the assumption is relaxed, the possibility of $a\geq \underline{r}$ must be considered. In such cases, consumers with $c\geq c_F(a)$ may stop upon receiving signals $x\in[\underline{r},a]$, affecting the shape of the interim demand function.

We show in Appendix~\ref{appendix:b} that $U_a$ constitutes an equilibrium if and only if the following two equilibrium conditions hold:
\begin{equation}
\frac{H(c_F(a))}{c_F(a)}\leq \min_{c\in[0,c_F(a)]}\frac{H(c)}{c},\quad\text{ and }  \quad   h'(c)\leq 0 \text{ for all } c\in[c_F(a),\overline c]. 
\label{eq:condition}
\end{equation}
The first condition mirrors the earlier condition \eqref{eq:unprofitable}, replacing $\overline c$ with $c_F(a)$. It focuses on global structure of $H$ and ensures $\phi_a(x)\geq D(x;U_a)$. Additionally, the second condition requires that $H$ is concave for $c\geq c_F(a)$. It ensures convexity of $\phi_a$ over $x\in[\underline{r},a]$ in the limit of intense competition as the density of marginal stoppers $h(c_F(x))$ weakly increases over the relevant range.


The graphical method of characterizing $a^M$ naturally extends to cost distributions failing Assumption \ref{assumption:diag}, including those with interior atoms.\footnote{Boundary atoms are separately discussed in Section~\ref{section:extensions} and Online Appendix \ref{section:online_appendix_c}.} In these cases, the equilibrium condition replaces the minimum with an infimum and requires continuity of $H$ over the interval $[c_F(a),\overline{c}]$.\footnote{
    Even if $H$ is discontinuous at the point $c^M=\text{arg}\inf_c\frac{H(c)}{c}$, equilibrium identification remains straightforward: it amounts to finding the largest linear function below $H$ passing through the origin.
}

\section{Comparative Statics}
\label{section:comparative}

In this section, we explore how the maximally informative equilibrium responds to shifts in the cost distribution $H$.
Specifically, we consider compositional shifts that increase search costs (first-order shifts), and that increase heterogeneity (second-order shifts). For each cost distribution $H_k$, denote $\underline{h}_k^{avg}$ as the minimum average density.

We begin by briefly discussing how increased informativeness (higher censorship threshold $a$) affects welfare. Greater informativeness increases consumers' likelihood of being matched with their best alternatives, but it also encourages more active search. Increased probability of searching, $F(a)$, result in longer expected search lengths, $\frac{1-F(a)^n}{1-F(a)}$ and higher cumulative search costs, $\frac{1-F(a)^n}{1-F(a)}c$. Nonetheless, the benefits of improved matching always outweigh these additional costs. Consequently, consumers always prefer more information, and the maximally informative equilibrium is both consumer-optimal and socially optimal.\footnote{Firm profits remain constant at $\frac{1}{n}$ across all equilibria.}

\begin{proposition}
    Let $CS(a;H)$ be the expected consumer surplus under $U_a$ with search cost distribution $H$. 
    \begin{enumerate}[label=(\alph*)]
        \item $CS(a;H)$ is strictly positive and strictly increasing in $a$.
        \item Consider two distributions $H_1\neq H_2$ with the same mean.
        If $a^M_1\leq a^M_2$ and $a^M_k\leq r(\overline{c};U_{a^M_k})$ for $k=1,2$,  then $CS(a^M_1;H_1)\leq CS(a^M_1;H_2)$.
    \end{enumerate}
    \label{proposition:surplus}
\end{proposition}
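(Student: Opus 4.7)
Under the hypothesis $a\leq r(\overline{c};U_a)=\underline{r}$, every consumer---regardless of type $c$---follows the same search rule under $U_a$: stop upon receiving the pooled signal $k_a$, otherwise continue until all $n$ firms are visited and then buy the best observed option. The expected match value $V(a)$ and expected number of searches $L(a)=(1-F(a)^n)/(1-F(a))$ are therefore deterministic functions of $a$ alone, which yields the clean decomposition
$$CS(a;H)=V(a)-\mathbb{E}_H[c]\cdot L(a).$$
This representation is the engine of both claims.

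For part (a), the plan is to write $V(a)=\int_0^a t\,nF(t)^{n-1}f(t)\,dt+(1-F(a)^n)k_a$ and differentiate. Using $k_a'=f(a)(k_a-a)/(1-F(a))$, the cross-terms collapse to $V'(a)=f(a)(k_a-a)[L(a)-nF(a)^{n-1}]$. The crucial algebraic identity, verifiable by telescoping, is $(1-F(a))\sum_{k=1}^{n-1}kF(a)^{k-1}=L(a)-nF(a)^{n-1}$, so $V'(a)/L'(a)=(k_a-a)(1-F(a))=c_F(a)$ (the last equality by integration by parts). Therefore
$$CS'(a)=L'(a)\,[c_F(a)-\mathbb{E}_H[c]].$$
Within the equilibrium range $a\in[0,a^M]$, $c_F$ is strictly decreasing with $c_F(a^M)=1/h(c^M)\geq \overline{c}$ by Proposition~\ref{proposition:charcteriztion} and Assumption~\ref{assumption:diag}, and $\overline{c}>\mathbb{E}_H[c]$ since $H$ is continuous with positive density; hence $CS'(a)>0$. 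Positivity then follows from $CS(0;H)=\mu-\mathbb{E}_H[c]>0$ (since $\overline{c}<\mu$) together with the monotonicity just established.

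For part (b), the decomposition shows that $CS(a;H)$ depends on $H$ only through its mean in the easy region $\{a:a\leq r(\overline{c};U_a)\}$. Taking $H_1,H_2$ to share support $[0,\overline{c}]$, the hypothesis for $H_2$ combined with $a^M_1\leq a^M_2$ gives $\overline{c}\leq c_F(a^M_2)\leq c_F(a^M_1)$, so $a^M_1$ belongs to the easy region under both distributions. The equal-mean hypothesis then yields $CS(a^M_1;H_1)=CS(a^M_1;H_2)$, establishing the (weak) claim; if the intended substantive conclusion is $CS(a^M_1;H_1)\leq CS(a^M_2;H_2)$, it follows by additionally applying part (a) to $H_2$ along $[a^M_1,a^M_2]$. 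The main obstacle throughout is the algebraic identity that reduces $V'(a)/L'(a)$ to the incremental-benefit function $c_F(a)$: this is where the link between expected match value, search intensity, and incremental search surplus crystallizes, and once it is in hand the remaining steps are packaging.
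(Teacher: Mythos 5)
Your decomposition $CS(a;H)=V(a)-\mathbb{E}_H[c]\,L(a)$ and the resulting identity $CS'(a)=L'(a)\bigl[c_F(a)-\mathbb{E}_H[c]\bigr]$ are correct, and the algebra matches what the paper obtains (the paper writes $CS_c'(a)=f(a)\,nJ_F(a)\,(r(c;U_a)-a)$, which after expanding $nJ_F(a)$ and $r(c;U_a)$ is exactly your $L'(a)[c_F(a)-c]$ with $c$ in place of $\mathbb{E}_H[c]$). Part (b) is also essentially the paper's argument: once $a^M_1\le a^M_2\le a_{\overline c}$, the term $V(a^M_1)-L(a^M_1)\mathbb{E}_H[c]$ depends on $H$ only through its mean, so the equal-mean hypothesis closes the comparison.

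The gap is in part (a). The proposition claims $CS(a;H)$ is strictly increasing in $a$ without restriction, but your clean aggregated decomposition is valid only in the region where every consumer follows the same search rule, namely $a\le r(\overline c;U_a)$, i.e.\ $a\le a_{\overline c}=c_F^{-1}(\overline c)$. Once $a>a_{\overline c}$, some high-cost consumers have $r(c)<a$ and will stop upon seeing a search signal in $[r(c),a]$ rather than exhausting all $n$ firms, so $L(a)$ and $V(a)$ are no longer type-independent and $CS(a;H)\neq V(a)-\mathbb{E}_H[c]L(a)$. You patch this by restricting to $a\in[0,a^M]$ and invoking Proposition~\ref{proposition:charcteriztion} under Assumption~\ref{assumption:diag} to get $c_F(a^M)\ge\overline c$, but the proposition's claim is not restricted to the equilibrium range, nor is Assumption~\ref{assumption:diag} imposed. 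The paper instead works pointwise in $c$: for each type it splits at the threshold $a_c$, showing $CS_c'(a)>0$ for $a\le a_c$ by your computation and for $a>a_c$ by the separate observation that only $k_a$ moves with $a$ and $\frac{d}{da}k_a>0$. That second branch is the missing piece; without it, your part~(a) is only established on $[0,a_{\overline c}]$.
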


Part (a) is intuitive: since consumer faces a decision problem, Blackwell improvement of information always increases her payoff.
Part (b) compares welfare across different search cost distributions, demonstrating that under mild conditions, welfare rankings depend solely on comparing censorship thresholds. Thus, by  Corollary~\ref{corollary:a^M}, comparing $\underline{h}_k^{avg}$
suffices to rank welfare across distributions that satisfy Assumption~\ref{assumption:diag}. Practically, this comparison can be easily conducted graphically.

\subsection{Effect of Higher Search Costs}

We begin by analyzing how first-order shifts in the search cost distribution $H$ affect the maximally informative threshold $a^M$. We consider two classes of such shifts. First, we examine \emph{$\alpha$-scale stretches}, where the shape of the density is preserved but the support is proportionally expanded. Second, we consider shifts that preserve the support but alter the distribution in the sense of first-order stochastic dominance (FOSD), skewing it toward higher search costs.

Formally, $H_2$ is an $\alpha$-scale stretch of $H_1$ if $H_1\left(c\right)=H_2(\alpha c)$ for all $c\in[0,\overline{c}]$. This transformation uniformly scales all consumers' search costs upward, stretching the support from $[0,\overline{c}]$ to $[0,\alpha \overline{c}]$ while preserving the density shape. Consequently, key sufficient statistics scale proportionally: $c^M_2=\alpha c^M_1$ and
    $\underline{h}_2^{avg}=\frac{1}{\alpha}\underline{h}_1^{avg}<\underline{h}_1^{avg}$.
From Corollary~\ref{corollary:a^M}, it immediately follows:

\begin{proposition}
    Let $1<\alpha<\frac{\mu}{\overline{c}}$ and
    $H_2$ be an $\alpha$-scale stretch of $H_1$.
    Then, $a^M_2\leq a^M_1$.
    Furthermore, if $a^M_1>0$, then $a^M_2<a^M_1$.
    \label{proposition:comp_stat_shift}
\end{proposition}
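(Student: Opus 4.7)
The plan is to leverage the explicit characterization in Proposition~\ref{proposition:charcteriztion} together with the transformation rules for the key sufficient statistics $c^M$ and $h(c^M)$ under an $\alpha$-scale stretch. The bulk of the work is bookkeeping rather than new ideas, because Corollary~\ref{corollary:a^M} already tells us that $a^M$ is monotone in $h(c^M)$.

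First, I would record the transformation: if $H_1(c)=H_2(\alpha c)$, then $h_2(\alpha c)=h_1(c)/\alpha$, and $H_2(c)/c=(1/\alpha)\cdot H_1(c/\alpha)/(c/\alpha)$. Substituting $c'=c/\alpha$ shows that the global minimum of $H_2(\cdot)/\cdot$ on $[0,\alpha\overline{c}]$ equals $(1/\alpha)$ times the minimum of $H_1(\cdot)/\cdot$ on $[0,\overline{c}]$, attained at $c^M_2=\alpha c^M_1$. In particular $c^M_2<\alpha\overline{c}$, so Assumption~\ref{assumption:diag} is inherited by $H_2$, and
\[
h_2(c^M_2)=\frac{1}{\alpha}h_1(c^M_1)<h_1(c^M_1).
\]
I would also note that the hypothesis $\alpha<\mu/\overline{c}$ ensures $\alpha\overline{c}<\mu$, so the standing assumption that every consumer searches at least once continues to hold for $H_2$.

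Next, I would split into cases according to Proposition~\ref{proposition:charcteriztion}. If $h_1(c^M_1)\leq 1/\mu$, then Proposition~\ref{proposition:charcteriztion}(a) gives $a^M_1=0$, and since $h_2(c^M_2)<h_1(c^M_1)\leq 1/\mu$, the same part yields $a^M_2=0$, so $a^M_2=a^M_1$. If instead $h_1(c^M_1)>1/\mu$, then $a^M_1=c_F^{-1}(1/h_1(c^M_1))>0$ by Proposition~\ref{proposition:charcteriztion}(b). When $h_2(c^M_2)\leq 1/\mu$, Proposition~\ref{proposition:charcteriztion}(a) gives $a^M_2=0<a^M_1$; otherwise $a^M_2=c_F^{-1}(1/h_2(c^M_2))$, and I would conclude from strict monotonicity of $c_F^{-1}$ (see below) and $1/h_2(c^M_2)>1/h_1(c^M_1)$ that $a^M_2<a^M_1$.

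The one small but necessary technical point is that $c_F(x)=\int_x^1(1-F(t))\,dt$ is strictly decreasing on $[0,1]$ because $F$ admits a strictly positive density, so its inverse $c_F^{-1}$ (defined on $[0,\mu]$) is strictly decreasing and continuous. This is where the strict inequality under $a^M_1>0$ comes from; everything else reduces to applying Proposition~\ref{proposition:charcteriztion}. I do not expect any conceptual obstacle: the proposition is essentially an immediate consequence of Corollary~\ref{corollary:a^M}, and the only care needed is to verify that $H_2$ satisfies the maintained assumptions (Assumption~\ref{assumption:diag} and $\overline{c}_2=\alpha\overline{c}<\mu$) after the stretch.
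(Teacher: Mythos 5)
Your proof is correct and follows essentially the same route as the paper's: reduce to comparing $h_k(c^M_k)$ via Corollary~\ref{corollary:a^M}, verify the $1/\alpha$ scaling of the minimum average density under the stretch, and conclude via the explicit formula in Proposition~\ref{proposition:charcteriztion}. The only difference is cosmetic: you spell out the subcase where $h_1(c^M_1)>1/\mu$ but $h_2(c^M_2)\leq 1/\mu$ (giving $a^M_2=0<a^M_1$), which the paper's proof folds into a single appeal to Corollary~\ref{corollary:a^M}; and you make explicit the strict monotonicity of $c_F^{-1}$, which the paper leaves implicit.
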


This result indicates that uniformly scaling up consumer search costs leads to weakly less information disclosure. If there was positive disclosure initially, informativeness strictly decreases after scaling. Aligned with the standard intuition, increased search friction weakens competition, granting firms greater local monopoly power and exacerbating hold-up incentives, thus discouraging information disclosure.

Another interpretation focuses on partial-purchase signals becoming more profitable under $H_2$. Specifically, the net gain from deviating to any $x=r(c)$ (equation \eqref{eq:net_gain}) is greater under $H_2$ than $H_1$:

\begin{equation}
    J_F(a)\left(\frac{c}{c_F(a)}-H_2(c)\right)>(\geq)\; J_F(a)\left(\frac{c}{c_F(a)}-H_1(c)\right) \quad \forall c\in[0,\overline{c}]
    \label{eq:comparative_stretch}
\end{equation}
Under the stretched distribution $H_2$, fewer low-cost consumers are forgone, and each partial-purchase signal captures a larger range of high-cost consumers. Thus, deviations become uniformly more attractive, reducing the equilibrium's maximal informativeness.

A further observation arises directly from expression \eqref{eq:comparative_stretch}. Deviating to partial-purchase signals becomes weakly more profitable under $H_2$ compared to $H_1$ precisely when $H_2$ first-order stochastically dominates $H_1$. 

\begin{proposition}
    Assume $H_1$ and $H_2$ satisfies Assumption \ref{assumption:diag},
    and $H_2$ first order stochastically dominates $H_1$.
    Then,
    $$a^M_2\leq a^M_1.$$
    \label{proposition:comp_stat_fosd}
    Moreover, if $H_2(c)<H_1(c)$ for all $c\in(0,\overline{c})$ and $a^M_1>0$, then $a^M_2<a^M_1$.
\end{proposition}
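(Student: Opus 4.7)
The plan is to reduce the comparison between $H_1$ and $H_2$ to the scalar statistic $m_k := \min_{c \in [0,\overline{c}]} H_k(c)/c$, which by Proposition~\ref{proposition:charcteriztion} (under Assumption~\ref{assumption:diag}) fully pins down the maximally informative threshold via $a^M_k = c_F^{-1}(1/m_k)$ when $m_k > 1/\mu$ and $a^M_k = 0$ otherwise. Since $c_F(x) = \int_x^1(1-F(t))\,dt$ is strictly decreasing in its argument, $c_F^{-1}$ is strictly decreasing too, so it suffices to prove $m_2 \leq m_1$ for the weak inequality and $m_2 < m_1$ under the strengthened hypothesis.

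\textbf{Weak inequality.} FOSD yields $H_2(c) \leq H_1(c)$ for every $c \in [0,\overline{c}]$, hence $H_2(c)/c \leq H_1(c)/c$ for every $c>0$ (and $h_2(0) \leq h_1(0)$ at $c=0$ via the convention in Assumption~\ref{assumption:diag} and a limit as $c\to 0^+$). Evaluating at the $H_1$-minimizer $c^M_1 \in [0, \overline{c})$,
\[
m_2 \;\leq\; \frac{H_2(c^M_1)}{c^M_1} \;\leq\; \frac{H_1(c^M_1)}{c^M_1} \;=\; m_1.
\]
Applying Proposition~\ref{proposition:charcteriztion} to both distributions and using monotonicity of $c_F^{-1}$ (and the cutoff at $1/\mu$) yields $a^M_2 \leq a^M_1$.

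\textbf{Strict inequality and main obstacle.} Assume now $H_2(c) < H_1(c)$ for all $c \in (0,\overline{c})$ and $a^M_1 > 0$, so that $m_1 > 1/\mu$. When the $H_1$-minimizer satisfies $c^M_1 \in (0,\overline{c})$, the middle inequality in the display above becomes strict, giving $m_2 < m_1$ and hence $a^M_2 < a^M_1$ by strict monotonicity of $c_F^{-1}$. The main technical obstacle is the boundary case $c^M_1 = 0$, where strict pointwise separation on the interior does not by itself imply strict separation of the tangent slopes $h_1(0) = m_1$ and $h_2(0)$. To deal with it, I would invoke the graphical interpretation of $m_k$ as the slope of the lowest line through the origin supporting the graph of $H_k$ from below (Figure~\ref{fig:maximal_H}); combining the strict pointwise dominance on $(0,\overline{c})$ with the twice-continuous differentiability of $H_1,H_2$ and the finite-extrema condition on the densities forces the origin-tangent to $H_2$ to be strictly shallower than that to $H_1$.

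Alternatively, and perhaps more directly, one can argue from equation~\eqref{eq:comparative_stretch}: at $a = a^M_1$, the equilibrium constraint $1/c_F(a) \leq H(c)/c$ for $H=H_1$ binds at the targeted type $c^M_1$, while the same deviation evaluated under $H_2$ yields strictly positive net gain $J_F(a^M_1)\bigl(H_1(c^M_1) - H_2(c^M_1)\bigr)/c^M_1 > 0$ whenever $c^M_1 \in (0,\overline{c})$, and a short continuity/Taylor argument near $c=0$ covers the remaining boundary case. Either route forces the constraint to be strictly violated at $a = a^M_1$ under $H_2$, so $a^M_2 < a^M_1$. I expect the edge case $c^M_1 = 0$ under strict interior dominance to be the only nontrivial step; everything else reduces to invoking Proposition~\ref{proposition:charcteriztion} and strict monotonicity of $c_F^{-1}$.
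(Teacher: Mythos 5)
Your argument for the weak inequality $a^M_2 \leq a^M_1$ is correct and takes the same route as the paper: FOSD gives $H_2(c)/c \leq H_1(c)/c$ pointwise, hence $\min_c H_2(c)/c \leq \min_c H_1(c)/c$, and Proposition~\ref{proposition:charcteriztion} (or Corollary~\ref{corollary:a^M}) converts this into a comparison of thresholds.

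For the strict inequality you correctly flag $c^M_1 = 0$ as the delicate case, but neither of your proposed fixes can close it: the strict conclusion is actually false when $c^M_1 = 0$. Take $\overline{c}=1/2$ and
$$H_1(c) = 1.6\,c + 0.8\,c^2, \qquad H_2(c) = 1.6\,c + 1.6\,c^3,$$
both on $[0,1/2]$. Each is a valid CDF with positive, twice-differentiable density, and $H_1(c)-H_2(c) = 0.8\,c^2(1-2c) > 0$ on $(0,1/2)$, so $H_2$ strictly FOSD-dominates $H_1$ in the interior. Yet $S_1(c)=1.6+0.8c$ and $S_2(c)=1.6+1.6c^2$ are both minimized at $c^M_k=0$ with the same value $h_1(0)=h_2(0)=1.6$, so $a^M_1=a^M_2$; any prior with $\mu\in(0.625,1]$ and $\mu>\overline{c}$ makes this common threshold strictly positive. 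Thus strict interior dominance does not force separation of the origin-tangent slopes when the $H_1$-minimizer sits at $0$, and no continuity or Taylor argument at $0$ can rescue it---the obstacle you flagged is not a removable technicality but a genuine failure of the stated proposition (the paper's own proof also asserts strictness without addressing $c^M_1 = 0$). The part of your argument for $c^M_1\in(0,\overline{c})$ is correct; a true strengthening of the result would add the hypothesis $c^M_1>0$.
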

Proposition \ref{proposition:comp_stat_fosd} can also be directly derived from Corollary \ref{corollary:a^M}: $H_2(c)/c\leq H_1(c)/c$ for all $c$
implies $\underline{h}_2^{avg}\leq \underline{h}_1^{avg}$.

Both notions of increased search costs (the $\alpha$-scale stretch or FOSD) lead firms to disclose less information in equilibrium. In both cases, deviations to partial-purchase signals become more profitable, since under $H_2$, fewer low-cost consumers are lost and more high-cost consumers can be captured. An $\alpha$-scale stretch disperses consumers more widely, while FOSD shifts the distribution toward higher costs, enhancing profitability of deviations.

Next, we establish a convergence results toward the benchmarks studied in Section \ref{section:benchmark}. Consider an initial distribution $H_0$ and define a sequence $\{H_k\}$ obtained by repeatedly applying an $\frac{1}{\alpha}$-scale stretch (shrinking) $k$ times. As $k$ increases, the support shrinks toward the degenerate distribution $\delta_0$, corresponding to the frictionless scenario. Because 
$\underline{h}_k^{avg}=\alpha^k \underline{h}_0^{avg}$
diverges, the maximally informative threshold
$a^M_k=c_F^{-1}(\frac{1}{\underline{h}_k^{avg}})$ converges to $c_F^{-1}(0)=1$. Hence, the maximally informative equilibrium converges in distribution to full information. This generalizes to arbitrary sequence of distributions with shrinking support.

\begin{corollary}[Convergence to the no-friction case]
    Consider a sequence of cost distributions 
    $\{H_k\}_{k=1}^{\infty}$ converging to $\delta_0$ in distribution, with $\supp(H_k)=[0,\overline{c}_k]$ and $\overline{c}_k\searrow0$.
    Then $U_{a_k^M}\xrightarrow{d}F$, or equivalently, $a^M_k\rightarrow 1$.\footnote{
        This convergence is only in distribution since $a^M_k<1$ for all $k$.
    }
    \label{corollary:convergence}
\end{corollary}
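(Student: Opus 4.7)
The plan is to reduce the convergence $a_k^M \to 1$ to showing that $h_k(c_k^M) := \min_{c\in[0,\overline{c}_k]} H_k(c)/c$ diverges, and then to establish that divergence from the shrinking support. By Proposition~\ref{proposition:charcteriztion} (or its generalization in Theorem~\ref{thm:appendix_maximal} when Assumption~\ref{assumption:diag} may fail), whenever $h_k(c_k^M) > 1/\mu$ we have the closed form $a_k^M = c_F^{-1}(1/h_k(c_k^M))$. Because $c_F(x) = \int_x^1 (1-F(t))\,dt$ is continuous and strictly decreasing on $[0,1]$ with $c_F(0) = \mu$ and $c_F(1) = 0$, the inverse $c_F^{-1}$ is continuous on $[0,\mu]$ with $c_F^{-1}(0) = 1$. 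Hence $h_k(c_k^M) \to \infty$ immediately yields $a_k^M \to 1$, and then $U_{a_k^M} \xrightarrow{d} F$ follows from the pointwise convergence of $U_a$ to $F$ as $a \to 1^-$ via the explicit form \eqref{eq:upper_censorship}.

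To establish $h_k(c_k^M) \to \infty$, I would rescale to the unit interval via $\tilde{H}_k(u) := H_k(\overline{c}_k u)$, which is a probability distribution on $[0,1]$ with density $\tilde{h}_k(u) = \overline{c}_k h_k(\overline{c}_k u)$. The change of variable $c = \overline{c}_k u$ gives the identity
\[
\min_{c \in [0,\overline{c}_k]} \frac{H_k(c)}{c} \;=\; \frac{1}{\overline{c}_k}\,\min_{u \in [0,1]} \frac{\tilde{H}_k(u)}{u}.
\]
Since $\overline{c}_k \searrow 0$, the prefactor $1/\overline{c}_k$ diverges, so the claim reduces to a uniform lower bound on $\min_u \tilde{H}_k(u)/u$. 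For the $1/\alpha$-scale stretch family highlighted in the preceding paragraph, $\tilde{H}_k$ is independent of $k$, so this rescaled minimum is constant in $k$ and $h_k(c_k^M) = \alpha^k h_0(c_0^M) \to \infty$, recovering the paper's direct calculation.

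The main obstacle is securing that uniform lower bound on $\min_u \tilde{H}_k(u)/u$ for an arbitrary sequence. Pathological sequences in which $H_k$ concentrates mass near the upper boundary $\overline{c}_k$---so that the density near $0$ vanishes faster than $1/\overline{c}_k$---would drive the rescaled minimum toward zero and break the scaling argument. The proof therefore implicitly requires a regularity condition on the sequence, such as the rescaled densities $\tilde{h}_k$ being uniformly bounded below at the origin, or the rescaled family $\{\tilde{H}_k\}$ converging weakly to a non-degenerate limit on $[0,1]$. Under any such condition---automatic for the scale-stretch construction and in the spirit of the preceding discussion---the remaining ingredients (continuity of $c_F^{-1}$ at $0$ and continuity of $a \mapsto U_a$ at $a = 1$ in the weak topology) are routine, closing the argument.
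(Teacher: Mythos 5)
Your concern is exactly right, and the gap is genuine: the corollary as stated is false under the paper's maintained assumptions. Here is a counterexample. Set $\overline{c}_k = 1/k$ and let $h_k \equiv 1$ on $[0,\,0.9/k]$, followed by a smooth, strictly positive bump on $[0.9/k,\,1/k]$ carrying the residual mass $1 - 0.9/k$. Then $H_k(c)=c$ on $[0,\,0.9/k]$, so $h_k(c^M_k)=\min_c H_k(c)/c=1$ for every $k$, while $H_k\to\delta_0$ weakly, $\overline{c}_k\searrow 0$, the density is smooth and strictly positive, and Assumption~\ref{assumption:diag} holds. Since $1\le 1/\mu$ (as $\mu\le 1$), Proposition~\ref{proposition:charcteriztion}-(a) gives $a^M_k=0$ for all large $k$, not $a^M_k\to 1$. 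This is precisely the pathological sequence you anticipated: the rescaled density at the origin, $\tilde h_k(0)=\overline{c}_k\,h_k(0)=\overline{c}_k$, vanishes, dragging $\min_u\tilde H_k(u)/u$ to zero at the same rate as $\overline{c}_k$.

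It is also worth noting that the paper's own proof does not evade this difficulty. The Fatou step rests on the claimed identity $\liminf_{k\to\infty,\,c'\to 0}\mathbb{I}[c'\le c^M_k]=1$, which is false: because $c^M_k\to 0$, one can pick $c'_k>c^M_k$ with $c'_k\to 0$, so the lower-limit function appearing in Fatou's lemma for weakly converging measures vanishes at $0$, and the displayed chain collapses to the trivial $0\le 0$ rather than the contradictory $1\le 0$. Your blind proposal and the paper's proof therefore share essentially the same missing step, and your diagnosis is the correct repair: one must add a hypothesis such as $\liminf_k \overline{c}_k\cdot \min_c H_k(c)/c > 0$ (equivalently, a uniform lower bound on $\min_u\tilde H_k(u)/u$). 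This is automatic for the $\alpha$-scale-stretch families the paper uses to motivate the corollary, but not for arbitrary sequences with shrinking support.
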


Similarly, we establish a convergence result toward the homogeneous friction scenario. As the lower bound $\underline{c}$
approaches the upper bound $\overline{c}$, the distribution approaches a degenerate point mass at $\overline{c}$, a scenario already analyzed in Proposition~\ref{proposition:positive_c_h}. Even when the support remains fixed as $[0,\overline{c}]$, convergence to homogeneous friction can still be established: equilibrium informativeness ultimately collapses, resulting in no information disclosure.

\begin{corollary}[Convergence to the homogeneous friction case]    \label{corollary:convergence_homo}
    Consider a sequence of cost distributions $\{H_k\}_{k=1}^{\infty}$ with 
    $\text{supp}(H_k)=[0,\overline{c}]$
    converging to $\delta_{\overline{c}}$ in distribution.
    There exists some $K>0$ such that $U_{a^M_k}=\delta_{\mu}$ for all $k>K$, or equivalently, 
    $a^M_k=0$ for all $k>K$.\footnote{
        $H_{k+1}\succeq_{FOSD}H_{k}$ for all $k$ is an example that satisfies the assumption. Moreover, while Proposition \ref{proposition_homogeneous} states $\delta_{\mu}$ is the essentially unique equilibrium in the output-equivalence sense, Corollary \ref{corollary:convergence_homo} states it is the \textit{unique} equilibrium.}
\end{corollary}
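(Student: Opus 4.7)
The plan is to reduce the claim to Proposition~\ref{proposition:charcteriztion}(a): once we show that $\min_c \tfrac{H_k(c)}{c} \leq \tfrac{1}{\mu}$ for all sufficiently large $k$, the proposition immediately yields $a^M_k = 0$, and hence $U_{a^M_k} = U_0 = \delta_{\mu}$. Economically, this captures the intuition that as mass concentrates near $\overline{c}$, the low-cost fraction $H_k(c)$ on any fixed interior interval shrinks to zero, so the gain from deviating to a partial-purchase signal (cf.\ equation~\eqref{eq:net_gain}) eventually dominates its cost, ruling out any upper-censorship equilibrium with threshold $a>0$.

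To execute this reduction I would fix any interior point $c^* \in (0, \overline{c})$. Since every such $c^*$ is a continuity point of $\delta_{\overline{c}}$, weak convergence $H_k \to \delta_{\overline{c}}$ gives $H_k(c^*) \to 0$, and therefore $\tfrac{H_k(c^*)}{c^*} \to 0$. I then choose $K$ large enough that $\tfrac{H_k(c^*)}{c^*} < \tfrac{1}{\mu}$ for every $k > K$. Because the standing assumption $\overline{c} < \mu$ gives $\tfrac{1}{\mu} < \tfrac{1}{\overline{c}}$, this same bound simultaneously verifies Assumption~\ref{assumption:diag} for such $k$, via the equivalent condition recorded in the footnote of the assumption (existence of some $c$ with $\tfrac{H_k(c)}{c} \leq \tfrac{1}{\overline{c}}$). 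Under the assumption, $h_k(c^M_k) = \min_c \tfrac{H_k(c)}{c} \leq \tfrac{H_k(c^*)}{c^*} < \tfrac{1}{\mu}$, so Proposition~\ref{proposition:charcteriztion}(a) delivers $a^M_k = 0$.

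There is essentially no hard step: the argument is mechanical once the key observation that $\tfrac{H_k(c^*)}{c^*}\to 0$ is noted, and the definition $\tfrac{H(0)}{0}:=h(0)$ at the left boundary is irrelevant since the upper bound $\min_c \tfrac{H_k(c)}{c} \leq \tfrac{H_k(c^*)}{c^*}$ does not rely on it. The only mild subtlety is ensuring Assumption~\ref{assumption:diag} holds before invoking Proposition~\ref{proposition:charcteriztion}, but this is immediate from $\tfrac{1}{\mu} < \tfrac{1}{\overline{c}}$. For the uniqueness clause in the footnote, one invokes Theorem~\ref{thm:upper}(c): in the intense-competition limit all equilibria are upper censorship, so the equilibrium set collapses to $\{U_a : a\leq a^M_k\}=\{U_0\}=\{\delta_\mu\}$, strengthening the essentially-unique conclusion of Proposition~\ref{proposition_homogeneous} to exact uniqueness.
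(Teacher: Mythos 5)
Your proof is correct and follows essentially the same route as the paper's: both fix a single interior evaluation point (the paper uses $\overline{c}-\varepsilon$, you use $c^*$), use weak convergence to $\delta_{\overline{c}}$ to drive $H_k$ to zero there, conclude that Assumption~\ref{assumption:diag} holds and $h_k(c^M_k)<\tfrac{1}{\mu}$ for large $k$, and invoke Proposition~\ref{proposition:charcteriztion}(a). The extra remark on uniqueness via Theorem~\ref{thm:upper}(c) correctly tracks the footnote's claim, which the paper does not spell out in its proof.
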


\subsection{Effect of Greater Cost Heterogeneity}
A more nuanced issue  concerns the effect of second-order shifts in the search costs on equilibrium information provision. Greater heterogeneity can manifest either as a more evenly dispersed distribution of search costs or, alternatively, as a more polarized distribution with costs concentrated at the extremes. 

Our main finding is that the equilibrium informativeness is increasing in the \textit{evenness} of the distribution---specifically, how closely it resembles the uniform distribution---rather than with dispersion alone.
In fact, informativeness responds non-monotonically with respect to MPS shifts, precisely because their impact on evenness is non-monotonic. 

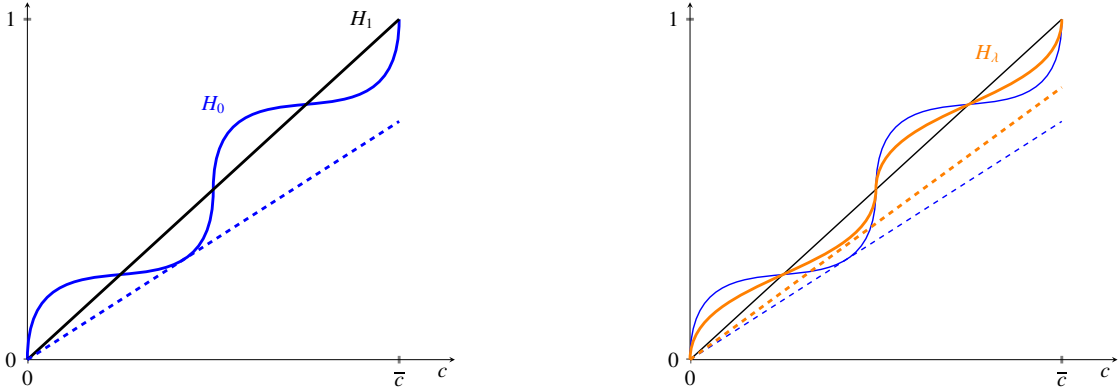
\begin{figure}[ht]
    \begin{subfigure}[b]{.46\textwidth}
        \centering
        \begin{adjustbox}{width=.8\linewidth}
        \begin{tikzpicture}[x=0.75pt,y=0.75pt,yscale=1,xscale=1]
            \begin{axis}[
                xlabel style = {at={(axis description cs:1,0)},anchor=north east},
                ylabel style = {at={(axis description cs:0,1)},anchor=south east},
                axis lines = left,
                xmin = 0,
                xmax = 2.3,
                ymin = 0,
                ymax = 2.1,
                clip = false,
                legend pos = north west,
                xtick = {0, 2.0},
                xticklabels = {$0$,$\overline c$},
                ytick = {0, 2.0},
                yticklabels = {$0$, $1$},
                xlabel = {$c$},
            ]      
            \node at (axis cs: 1.0, 1.5) {$\textcolor{blue}{H_0}$};
            \node at (axis cs: 1.8, 2.0) {$H_1$};
            \draw[ultra thick, blue]  (axis cs: 0,0) .. controls (axis cs: 0,0.9) and (axis cs: 1.0,0.1) .. (axis cs: 1.0,1.0) ;
            \draw[ultra thick, blue]  (axis cs: 1,1) .. controls (axis cs: 1,1.9) and (axis cs: 2.0,1.1) .. (axis cs: 2.0,2.0) ;
            \draw[ultra thick, black] (axis cs: 0,0)-- (axis cs: 2.0,2.0);
            \draw[ultra thick, dashed, blue]    (axis cs: 0.0,0.00) -- (axis cs: 2.0,1.4) ;
            \end{axis}
            \end{tikzpicture}
        \end{adjustbox}
    \end{subfigure}
    \hfill
    \begin{subfigure}[b]{.46\textwidth}
        \centering
        \begin{adjustbox}{width=.8\linewidth}
            \begin{tikzpicture}[x=0.75pt,y=0.75pt,yscale=1,xscale=1]
            \begin{axis}[
                xlabel style = {at={(axis description cs:1,0)},anchor=north east},
                ylabel style = {at={(axis description cs:0,1)},anchor=south east},
                axis lines = left,
                xmin = 0,
                xmax = 2.3,
                ymin = 0,
                ymax = 2.1,
                clip = false,
                legend pos = north west,
                xtick = {0, 2.0},
                xticklabels = {$0$,$\overline c$},
                ytick = {0, 2.0},
                yticklabels = {$0$, $1$},
                xlabel = {$c$},
            ]      
            \draw[thick, blue]  (axis cs: 0,0) .. controls (axis cs: 0,0.9) and (axis cs: 1.0,0.1) .. (axis cs: 1.0,1.0) ;
            \draw[thick, blue]  (axis cs: 1,1) .. controls (axis cs: 1,1.9) and (axis cs: 2.0,1.1) .. (axis cs: 2.0,2.0) ;
            \draw[thick, black] (axis cs: 0,0)-- (axis cs: 2.0,2.0);
            \draw[thick, dashed, blue]    (axis cs: 0.0,0.00) -- (axis cs: 2.0,1.4) ;
        \node at (axis cs: 1.6, 1.8) {$\textcolor{orange}{H_{\lambda}}$};
        \draw[ultra thick, orange]  (axis cs: 0,0) .. controls (axis cs: 0,0.5) and (axis cs: 1.0,0.5) .. (axis cs: 1.0,1.0) ;
        \draw[ultra thick, orange]  (axis cs: 1,1) .. controls (axis cs: 1,1.5) and (axis cs: 2.0,1.5) .. (axis cs: 2.0,2.0) ;
        \draw[ultra thick, dashed, orange]    (axis cs: 0.0,0.00) -- (axis cs: 2.0,1.6) ;
        \end{axis}
    \end{tikzpicture}
        \end{adjustbox}
    \end{subfigure}
    \label{fig:evenness}
    \caption{$H_0\neq H_1=U[0,\overline{c}]$ and the interpolation $H_{\lambda}=\lambda U[0,\overline{c}]+(1-\lambda)H_0$.}
\end{figure}

Consider an arbitrary cost distribution $H_0$.
For each $\lambda\in[0,1]$, define a new distribution $H_{\lambda}$ as a convex combination of $H_0$ and the uniform distribution, given by: $H_{\lambda}:=\lambda U[0,\overline{c}]+(1-\lambda)H_0$. This family of distributions continuously interpolates between $H_0$ and the uniform distribution $U[0,\overline{c}]$, becoming increasingly even as $\lambda$ rises. Importantly, this representation provides a natural interpretation of our measure of evenness, 
    $\underline{h}_{\lambda}^{avg}=\lambda \frac{1}{\overline{c}}+(1-\lambda)\underline{h}_0^{avg}$
which strictly increases in $\lambda$ if $H_0$ satisfies Assumption~\ref{assumption:diag} (since $\underline{h}_0^{avg}<\frac{1}{\overline{c}}$).
The following proposition directly follows from Corollary~\ref{corollary:a^M}. 

\begin{proposition}
    Consider $H_0\neq U[0,\overline{c}]$, and 
    define $H_{\lambda}:=\lambda U[0,\overline{c}]+(1-\lambda)H_0$
    for each $\lambda\in[0,1]$.
    \begin{enumerate}[label=(\alph*)]
        \item $a^M_{\lambda}$ is weakly increasing in $\lambda\in[0,1)$.
        \item If $H_0$ satisfies Assumption \ref{assumption:diag},
        then $a^M_{\lambda}$ is strictly increasing in $\lambda\in[0,1]$.
    \end{enumerate}
    \label{proposition:even}
\end{proposition}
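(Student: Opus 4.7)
The proposition is presented as a direct consequence of Corollary~\ref{corollary:a^M}, so my plan is to reduce it to how the summary statistic $m_\lambda := \min_c H_\lambda(c)/c$ depends on $\lambda$. The first step is the identity, valid for every $c \in (0, \overline{c}]$,
\[
\frac{H_\lambda(c)}{c} = \lambda \cdot \frac{1}{\overline{c}} + (1-\lambda)\cdot \frac{H_0(c)}{c}.
\]
Because the first summand is constant in $c$, the argmin of $H_\lambda(c)/c$ coincides with that of $H_0(c)/c$ for every $\lambda \in [0,1)$, giving
\[
m_\lambda = \lambda \cdot \frac{1}{\overline{c}} + (1-\lambda)\, m_0.
\]
This is affine in $\lambda$ with slope $1/\overline{c} - m_0 \geq 0$, since $m_0 \leq H_0(\overline{c})/\overline{c} = 1/\overline{c}$. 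Hence $m_\lambda$ is weakly increasing, and strictly so when $m_0 < 1/\overline{c}$.

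For part~(b), I will next verify that Assumption~\ref{assumption:diag} transfers from $H_0$ to every $H_\lambda$: the shared minimizer $c^M_0 < \overline{c}$ remains a minimizer of $H_\lambda(c)/c$. Proposition~\ref{proposition:charcteriztion} then applies uniformly in $\lambda$, giving $h_\lambda(c^M_\lambda) = m_\lambda$, and Corollary~\ref{corollary:a^M} delivers monotonicity of $a^M_\lambda$ in $m_\lambda$. For strictness, the only obstruction is the degenerate case $m_0 = 1/\overline{c}$, which would force $H_0$ to lie weakly above $U[0,\overline{c}]$ and touch it at an interior point; in the non-degenerate regime $m_\lambda \in (1/\mu, 1/\overline{c}]$, the closed form $a^M_\lambda = c_F^{-1}(1/m_\lambda)$ from Proposition~\ref{proposition:charcteriztion} is strictly monotone in $m_\lambda$, so $a^M_\lambda$ is strictly monotone in $\lambda$.

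For part~(a), without Assumption~\ref{assumption:diag}, I will invoke the general equilibrium characterization from Theorem~\ref{thm:appendix_maximal} in Appendix~\ref{appendix:b} and check that each of its conditions weakly relaxes as $\lambda$ rises. The local condition $h'(c_F(x)) \leq 0$ on $[\underline{r}, a]$ is $\lambda$-invariant on $[0, 1)$ since $h_\lambda' = (1-\lambda) h_0'$ has the same sign as $h_0'$. The global condition \eqref{eq:condition} splits by regime: when $a \leq \underline{r}$ (equivalently $c_F(a) \geq \overline{c}$), the extension $H(c) = 1$ for $c \geq \overline{c}$ reduces it to $1/c_F(a) \leq m_\lambda$, which relaxes monotonically in $\lambda$; when $a > \underline{r}$, substituting the affine form of $H_\lambda(c)/c$ makes the $\lambda/\overline{c}$ term appear on both sides of \eqref{eq:condition} and cancel, leaving a $\lambda$-independent inequality. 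Either way the equilibrium set weakly expands in $\lambda$, so $a^M_\lambda$ is weakly increasing on $[0,1)$. The main obstacle I anticipate is the careful case-split bookkeeping for the $a > \underline{r}$ regime in part~(a); beyond that, the proposition reduces to a one-line application of Corollary~\ref{corollary:a^M} to the affine formula for $m_\lambda$.
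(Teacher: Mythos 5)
Your approach matches the paper's in all essentials: the affine identity $H_\lambda(c)/c = \lambda/\overline{c} + (1-\lambda)H_0(c)/c$, the observation that the argmin of the average density is $\lambda$-invariant (so Assumption~\ref{assumption:diag} transfers and the relevant sufficient statistic is $m_\lambda = \lambda/\overline{c} + (1-\lambda)m_0$), the conclusion of part (b) via Corollary~\ref{corollary:a^M} and Proposition~\ref{proposition:charcteriztion}, and, for part (a), verifying that the general equilibrium conditions of Theorem~\ref{thm:appendix_maximal} are $\lambda$-invariant (the paper does this by case analysis, showing $a^M_\lambda$ is literally constant in cases (c) and (d); your ``equilibrium set weakly expands'' phrasing is equivalent).

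One remark worth making: you flag the ``degenerate case'' $m_0 = 1/\overline{c}$ as a potential obstruction to strictness in part (b) and then set it aside; the paper handles this less carefully still, simply asserting that $H_0 \neq U[0,\overline{c}]$ together with Assumption~\ref{assumption:diag} forces $m_0 < 1/\overline{c}$ without argument. You are right to be uneasy --- a distribution that agrees with the uniform on an initial segment and bulges above it thereafter has $m_0 = 1/\overline{c}$, satisfies Assumption~\ref{assumption:diag}, and yet differs from $U$, so the paper's one-line dismissal is not airtight. You do not resolve this either (you implicitly restrict to $m_\lambda \in (1/\mu, 1/\overline{c}]$), so you share the paper's gap rather than introducing a new one. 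Your identification of the issue is a genuine observation, but your proof would be stronger if you either showed the degenerate case cannot arise under the stated hypotheses or explicitly noted that strictness only holds when $m_0 < 1/\overline{c}$ and $m_\lambda > 1/\mu$.
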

The intuition is clear: firms exploit the unevenness of $H$ by providing partial-purchase signals that minimize the loss from forgoing low-cost consumers. Increased evenness hinders the profitability of such deviations, resulting in higher equilibrium informativeness.

This finding also suggests why equilibrium informativeness is generally non-monotone with respect to MPS shifts. The minimum average density $\underline{h}^{avg}$
is highly sensitive to the global structure of $H$. Since MPS shifts can either spread mass locally or globally, their impact on $\underline{h}^{avg}$ is unclear a priori. The following proposition illustrates this ambiguity by examining commonly used class of distributions, and where dispersion is introduced at a broader scale.

\begin{proposition}
    Suppose that $H_2$ is a mean-preserving spread of $H_1$.
    \begin{enumerate}[label=(\alph*)]
        \item If both $H_1$ and $H_2$ admit strictly quasi-convex densities with interior dips, then the maximal informativeness decreases, i.e.,
        $a_1^M\geq   a_2^M$.
        \item If both $H_1$ and $H_2$ admit strictly quasi-concave  densities with interior peaks, then the maximal informativeness increases, i.e., $a_1^M\leq a_2^M$.
    \end{enumerate}
    \label{proposition:mps}
\end{proposition}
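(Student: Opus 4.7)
The strategy relies on Corollary \ref{corollary:a^M} (and its extension to general distributions via the tangent-from-origin interpretation discussed at the end of Section \ref{section:equilibrium}): the maximally informative threshold $a^M$ is monotonically increasing in the evenness measure $\min_c H(c)/c$. Hence the proposition reduces to comparing $\min_c H_1(c)/c$ with $\min_c H_2(c)/c$ under each hypothesis.

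For part (a): Computing $(H(c)/c)' = (h(c) - H(c)/c)/c$ shows that, for strictly quasi-convex $h$ with interior dip, $H(c)/c$ is strictly decreasing on the concave portion of $H$ (where $h < H/c$) and then strictly increasing on the convex portion, attaining its minimum at some interior $c^M$ strictly past the dip of $h$. Next, I exploit the class-specific MPS structure: when both densities are quasi-convex, $h_2 - h_1$ exhibits sign pattern $+,-,+$ (positive at the extremes, negative in the dip region), which integrates to a single-crossing of CDFs at a unique $c^* \in (0, \overline{c})$, with $H_2 \geq H_1$ on $[0, c^*]$ and $H_2 \leq H_1$ on $[c^*, \overline{c}]$. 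The decisive step is to establish $c^M_1 \geq c^*$; I expect this to follow because the sign-change points $t_1 < t_2$ of $h_2 - h_1$ bracket $c^*$ and lie in the shared dip region of both densities, while $c^M_1$ lies strictly past the dip of $h_1$. Once $c^M_1 \geq c^*$ is established, $H_2(c^M_1) \leq H_1(c^M_1)$, yielding $\min_c H_2(c)/c \leq H_2(c^M_1)/c^M_1 \leq h_1(c^M_1) = \min_c H_1(c)/c$, hence $a^M_2 \leq a^M_1$.

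For part (b): For strictly quasi-concave $h$ with interior peak, $H(c)/c$ is first strictly increasing (on the convex part of $H$, where $h > H/c$) and then strictly decreasing (on the concave part, where $h < H/c$), so $\min_c H(c)/c$ is attained at a boundary and equals $\min(h(0), 1/\overline{c})$. The MPS condition $\int_0^c (H_2(t) - H_1(t)) dt \geq 0$ combined with the Taylor expansion $\int_0^c (H_2 - H_1) dt \approx (h_2(0) - h_1(0)) c^2/2$ for small $c$ forces $h_2(0) \geq h_1(0)$ (this holds generally for MPS with continuous densities). Since $1/\overline{c}$ is preserved, we get $\min_c H_2(c)/c = \min(h_2(0), 1/\overline{c}) \geq \min(h_1(0), 1/\overline{c}) = \min_c H_1(c)/c$, giving $a^M_2 \geq a^M_1$. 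When $h_k(0) > 1/\overline{c}$ so that Assumption \ref{assumption:diag} fails, I appeal to the general characterization in Theorem \ref{thm:appendix_maximal} in Appendix \ref{appendix:b}, under which the same comparative-static conclusion continues to apply since $\min_c H(c)/c$ plays the same role in the binding optimality condition.

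The main obstacle is rigorously establishing $c^M_1 \geq c^*$ in part (a). A clean route is to decompose any MPS between quasi-convex densities into a sequence of elementary (infinitesimal) mean-preserving local transfers and verify, via the explicit formula for the change in $H(c)/c$ under each transfer (increases on $[c_L, c_0)$, decreases on $[c_0, c_R)$, unchanged elsewhere), that each step weakly decreases $\min_c H(c)/c$ whenever the transferred-out point $c_0$ lies in the dip region. This bypasses the need to locate $c^*$ precisely and replaces the global claim with a simpler local verification along a continuous MPS path from $H_1$ to $H_2$.
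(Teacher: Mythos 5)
Your strategy coincides with the paper's: reduce the comparison to the evenness measure $\min_c H_k(c)/c$ via Corollary~\ref{corollary:a^M} (and Theorem~\ref{thm:appendix_maximal} for the general case), use the first- and second-order conditions to place $c^M_k$ relative to the dip (resp.\ peak) of $h_k$, and invoke the single-crossing of $H_1,H_2$ implied by the MPS. Part~(b) is effectively identical to the paper's argument, which shows $c^M_k=0$ from the strict monotonicity of $S_k=H_k/c$ on $[0,p_k]$ together with the second-order condition, and then compares $h_2(0)\geq h_1(0)$. The step you flag as the main obstacle in part~(a) --- establishing $c^M_1$ lies at or beyond the single-crossing point $c^*$ --- is precisely the step the paper dispatches with the unelaborated line ``Note each $c^M_k$ must both lie above $s$''; you have correctly located the crux, and the published proof offers no more detail there than your sketch. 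Your proposed fallback via elementary mean-preserving transfers is not obviously sound as stated: a local transfer out of $c_0$ raises $H(c)/c$ on $[c_L,c_0)$ and lowers it on $[c_0,c_R)$ while keeping $H(\overline{c})/\overline{c}=1/\overline{c}$ fixed, so it is not automatic that $\min_c H(c)/c$ weakly falls (the current minimizer need not lie in $[c_0,c_R)$), and one must additionally verify that the interpolating distributions remain strictly quasi-convex with interior dips so the shape hypothesis is preserved along the path.
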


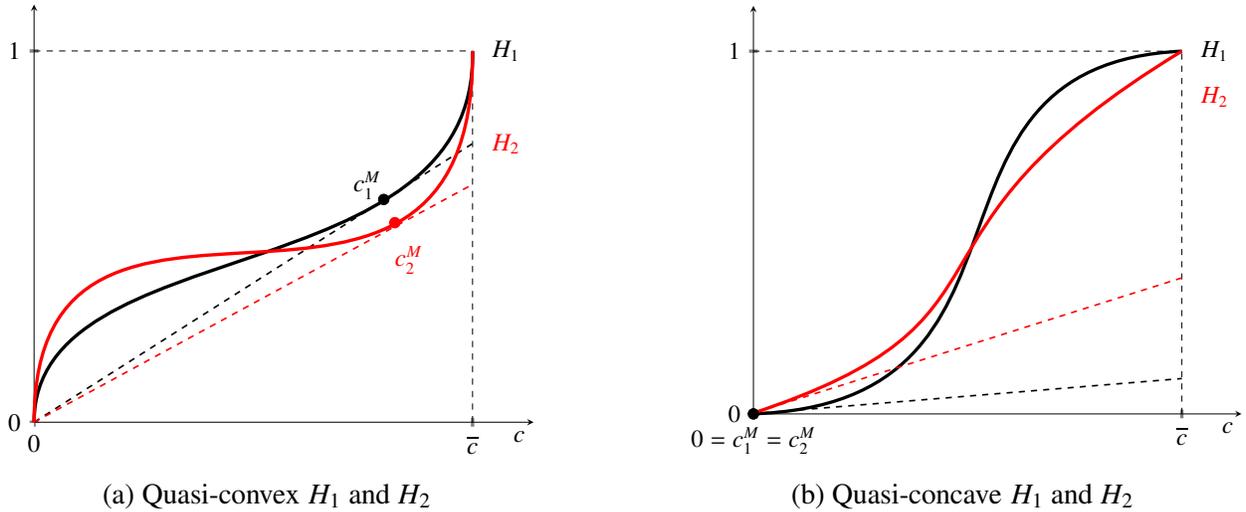
\begin{figure}[ht]
    \begin{subfigure}[b]{.43\textwidth}
        \centering
        \begin{adjustbox}{width=\linewidth}    
            \begin{tikzpicture}
                \begin{axis}[
                    xlabel style = {at={(axis description cs:1,0)},anchor=north east},
                    ylabel style = {at={(axis description cs:0,1)},anchor=south east},
                    axis lines = left,
                    xmin = 0,
                    xmax = 1.8,
                    ymin = 0,
                    ymax = 0.9,
                    clip = false,
                    legend pos = north west,
                    xtick = {0, 1.58},
                    xticklabels = {$0$,$\overline{c}$},
                    ytick = {0, 0.8},
                    yticklabels = {$0$,$1$},
                    xlabel = {$c$}
                ]            
                \draw[ultra thick]    (axis cs: -0.00,0.00) .. controls (axis cs: -0.00,0.42) and (axis cs: 1.60,0.27) .. (axis cs: 1.58,0.8) ;
                \node[circle, draw=black, fill=black, inner sep=0pt, minimum size = 5pt] at (axis cs: 1.26, 0.48) {};
                \node at (axis cs: 1.20, 0.51) {$c^M_1$};
                \node at (axis cs: 1.7, 0.8) {$H_1$};
                \addplot[
                    domain = 0:1.58,
                    color=black,
                    style = thick, dashed]
                    {0.38*x};
                \draw[dashed, thin] (axis cs: 1.58,0) -- (axis cs: 1.58,0.8);
                \draw[dashed, thin] (axis cs: 0,0.8) -- (axis cs: 1.58,0.8);
                \draw[thick, dashed ,red]    (axis cs: -0.00,0.00) -- (axis cs: 1.58,0.5124) ;
                \draw[ultra thick, red]    (axis cs: -0.00,0.00) .. controls (axis cs: -0.00,0.70) and (axis cs: 1.59,0.01) .. (axis cs: 1.58,0.8) ;
                \node[circle, draw=red, fill=red, inner sep=0pt, minimum size = 5pt] at (axis cs: 1.3, 0.43) {};
                \node at (axis cs: 1.7, 0.6) {\textcolor{red}{$H_2$}};
                \node at (axis cs: 1.35, 0.35) {\textcolor{red}{$c^M_2$}};
                \end{axis}
            \end{tikzpicture}
        \end{adjustbox}
        \caption{Quasi-convex $H_1$ and $H_2$}
        \label{subfig:quasi_convex}
    \end{subfigure}
    \hfill
    \begin{subfigure}[b]{.45\textwidth}
        \begin{adjustbox}{width=\linewidth}    
            \begin{tikzpicture}
                \begin{axis}[
                    xlabel style = {at={(axis description cs:1,0)},anchor=north east},
                    ylabel style = {at={(axis description cs:0,1)},anchor=south east},
                    axis lines = left,
                    xmin = 0,
                    xmax = 1.8,
                    ymin = 0,
                    ymax = 0.9,
                    clip = false,
                    legend pos = north west,
                    xtick = {0, 1.58},
                    xticklabels = {$0=c_1^M=c_2^M$,$\overline{c}$},
                    ytick = {0, 0.8},
                    yticklabels = {$0$, $1$},
                    xlabel = {$c$}
                ]            
            
                \draw[ultra thick, black]    (axis cs: -0.01,0.00) .. controls (axis cs: 1.16,0.02) and (axis cs: 0.51,0.77) .. (axis cs: 1.58,0.8) ;
                \draw[thick, dashed, black]    (axis cs: -0.01,0.00) -- (axis cs: 1.58,0.078) ;
                \draw[dashed, thin] (axis cs: 1.58,0) -- (axis cs: 1.58,0.8);
                \draw[dashed, thin] (axis cs: 0,0.8) -- (axis cs: 1.58,0.8);
                \node[circle, draw=black, fill=black, inner sep=0pt, minimum size = 5pt] at (axis cs: 0, 0) {};
                \draw[ultra thick, red]    (axis cs: -0.01,0.00) .. controls (axis cs: 1.06,0.21) and (axis cs: 0.45,0.39) .. (axis cs: 1.58,0.8) ;
                \draw[thick, dashed, red]    (axis cs: 1.58,0.30) -- (axis cs: -0.01,0.00) ;
                \node[circle, draw=black, fill=black, inner sep=0pt, minimum size = 5pt] at (axis cs: 0, 0) {};
                \node at (axis cs: 1.7, 0.7) {\textcolor{red}{$H_2$}};
                \node at (axis cs: 1.7, 0.8) {$H_1$};
                \end{axis}
            \end{tikzpicture}
        \end{adjustbox}
        \caption{Quasi-concave $H_1$ and $H_2$}
        \label{subfig:quasi_concave}
    \end{subfigure}
    \caption{Non-monotonicity of MPS on $h(c^M_i)$}
    \label{fig:quasi_MPS}
\end{figure}

Proposition \ref{proposition:mps} highlights how dispersion of distribution can either enhance or reduce equilibrium informativeness depending on its effect on evenness. If the initial density is quasi-convex, an MPS shift further polarizes the distribution, making it less even ($\underline{h}_2^{avg}<\underline{h}_1^{avg}$,
    see Figure \ref{subfig:quasi_convex}) and consequently decreasing informativeness. Conversely, if the initial density is quasi-concave, an MPS shift redistributes mass more evenly 
    ($\underline{h}_2^{avg}>\underline{h}_1^{avg}$
    , see Figure \ref{subfig:quasi_concave}), thereby enhancing informativeness.

\section{Extensions and Discussions}
\label{section:extensions}

\paragraph{Small number of firms} We relax the large market assumption to study equilibria in small markets (small $n$). While the large-market assumption significantly simplifies the analysis and allows us to directly compare our results to the literature, one can still characterize equilibria in small markets under additional assumptions on the cost distribution $H$ (Proposition \ref{proposition:small_market}) or the prior distribution $F$ (Proposition \ref{proposition:small_market_binary}). In equilibrium, firms still place an atom at the highest reservation value $\overline{r}$ to retain consumers, while competing below the lowest reservation value $\underline{r}$. However, weaker competitive pressure reduces the incentive for complete unraveling. Instead, equilibrium information provision for low signals displays an alternating structure: intervals of full disclosure $(G=F)$ alternate with intervals where information is garbled to ensure $G^{n-1}$-linearity. Unlike large markets, where competitive pressure forces full disclosure at lower signals, small markets permit more localized pooling, reflecting the reduced intensity of competition.

To summarize, under these variations, the central equilibrium forces we identify---the balance between pooling incentives driven by search frictions and competitive pressures for more disclosure---remain robust. The differences mainly arise in local equilibrium details, specifically in how pooling is achieved and the extent of competitive unraveling at lower signals. By concentrating on large markets, we deliver clear, prior-independent comparative statics, as equilibrium structure in smaller markets becomes more sensitive to the underlying distributions. Moreover, our continuous search cost framework clearly distinguishes between the effects of increased average costs and increased cost heterogeneity, although the key insights extend naturally to any distributions with finite discontinuities.

\paragraph{Comparison with price competition literature}
We now compare our results to the classic literature on the price competition in search markets. A natural benchmark is provided by \textcite{stahl1996oligopolistic}.\footnote{We thank an anonymous referee for highlighting this connection.} He studies an oligopoly pricing game in a search market, under the same assumption on search costs as ours: a continuum of consumer types with $\supp(H)=[0,\overline{c}]$ and no mass of shoppers $(H(0)=0)$.  A striking parallel emerges between his result and ours. Specifically, he shows that a continuum of pure-strategy symmetric equilibria exists, including one exhibiting the Diamond paradox (i.e., monopoly price). This mirrors our finding that a continuum of upper-censorship equilibria exists, one of which features the informational Diamond paradox (i.e., no information disclosure). 

However, the mechanisms that sustain “non-Diamond” equilibria differ fundamentally, shedding light on how the nature of competition diverges when firms compete on price versus information. In \textcite{stahl1996oligopolistic}, the existence of an equilibrium with a price below the monopoly level depends critically on the \textit{local} density of shoppers, i.e., the value of $h(0)$. The logic is straightforward: when a firm slightly raises its price, it trades off revenue gains against a potential loss in market share. If $h(0)=0$, there are essentially no consumers at the margin to lose, so the firm always prefers to increase price, pushing the equilibrium toward the Diamond outcome. If $h(0)>0$, however, the threat of losing marginal consumers makes such deviations unprofitable, sustaining lower equilibrium prices.

In contrast, the maximally informativeness of equilibrium outcomes in our model depends on the \textit{global} structure of $H$, particularly the minimum average density $\underline{h}^{avg}$. The key distinction stems from the constraints firms face: in price competition, firms can choose any price freely, whereas in information competition, strategies must satisfy Bayes plausibility. Therefore, when a firm designs a partial-purchase signal targeting cost type $c$, it must reallocate probability mass from both high and low signals. This tradeoff involves gaining high-cost consumers ($1-H(c)$) at the expense of losing low-cost ones ($H(c)$). 
Moreover, the Bayes plausibility constraint implies that targeting lower $c$ (higher reservation value) requires shifting more mass away from the purchase signal $x=\overline{r}$, which can increase the cost of deviation. Hence, it is not the local behavior of $h$ near zero, but the global shape of $H$---captured by $\frac{H(c)}{c}$---that determines the profitability of deviations.\footnote{In \textcite{stahl1996oligopolistic}, all consumers stop upon their initial visit, whereas in our setting, the consumer actively searches with probability $F(a^M)>0$, again due to the Bayes plausibility constraint of our model: To send a high signal, you must reveal low signals as well. Such tradeoffs are not captured when choosing prices.}

\paragraph{Distributions with atom at $c=0$}
We demonstrate that well-known discontinuities in the price search literature—particularly those related to properties of the search cost distribution $H$ \parencite{stahl1989oligopolistic,stahl1996oligopolistic}—carry over to the context of competitive information disclosure. A concurrent paper by \textcite{boleslavsky2023information} studies a similar problem as ours but under different assumptions on $H$: specifically, they assume that search costs are discrete and that $H(0) > 0$ (i.e., there exist a positive mass of “shoppers”). Their setting can be viewed as the informational analogue of \textcite{stahl1989oligopolistic}, while our setup parallels \textcite{stahl1996oligopolistic}, as discussed above.

\textcite{boleslavsky2023information} show that a unique symmetric equilibrium always exists in their setting, and when competition is sufficiently intense, firms reveal no information below the reservation value. The key intuition is that the non-zero mass of shoppers---who visit all firms---intensify competitive pressure even at high signals. This stands in contrast to our setting, where such pressure vanishes at the top due to the absence of a mass point at zero.

Another important difference lies in the support of $H$: namely, the existence of a ‘gap’ around zero. While  \textcite{boleslavsky2023information} assumes that $\supp(H)$ contains a gap near zero (as $H$ is discrete), we assume no such gap—$H$ has full support on $[0, \overline{c}]$. The next result highlights the significance of this distinction, showing that full information disclosure may arise an equilibrium when $H(0)>0$ and there is no gap in $\supp(H)$.
\begin{proposition}\label{proposition:atom_full_info}
If there is a mass of shoppers ($H(0)>0$) and $H$ is weakly concave over its support $[0,\overline{c}]$,  there exists some $N>0$ such that full information $G=F$ is an equilibrium if $n>N$.
\end{proposition}
The above proposition is similar to \textcite{stahl1996oligopolistic}, who show that the Bertrand outcome can arise in the limit when $H(0)>0$ and $h(0)\rightarrow \infty$. In our informational counterpart, full information disclosure arises if $H(0)>0$ and $H$ is weakly concave. Once again, it is the global structure of $H$—not just local behavior near zero—that determines the equilibrium outcome.\footnote{To understand this result, recall that the gain from marginally increasing the signal can be decomposed into two components: the local monopoly gain, $(1-H(c_G(x)))’J_G(x)$, and the competition gain, $\Big(G^{n-1}\Big)'H(c_G(x))$. When $H(0)=0$,  the competition gain vanishes as $x \to \overline{r}$. But if $H(0)>0$, it remains positive even at the top. Furthermore, if $H$ is weakly concave, the local monopoly gain increases with $x$: higher signals attract more marginal consumers who stop, as there are more low-cost consumers. As a result, $D(x;F)$ is globally convex. The incentive to fully reveal informatinon at the top unravels downwards, resulting in full disclosure in equilibrium.}

In summary, the upper-censorship result of Theorem~\ref{thm:upper} (no shoppers, no gap), the non-disclosure at the bottom result of \textcite{boleslavsky2023information} (shoppers, gap), and the full-disclosure result of Proposition~\ref{proposition:atom_full_info} (shoppers, no gap) collectively show that both (i) the presence of an atom at zero and (ii) a gap in $\supp(H)$ can critically affect equilibrium behavior. This pattern closely parallels the classic price competition literature: depending on the structure of $H$, one may observe pure-strategy equilibria (no shoppers, no gap), mixed-strategy equilibria (shoppers, gap), or marginal-cost pricing (shoppers, no gap).

\section{Conclusion}
\label{section:conclusion}

This paper analyzes how competition and heterogeneous search costs jointly shape firms' incentives for information disclosure in search market. Firms balance the opposing incentives of competition and the local monopoly power granted by search frictions by adopting a two-faced strategy: aggressively providing information for low match values, while pooling information for high match values to induce consumers to stop searching, regardless of their search costs. This balance results in partial disclosure rather than complete transparency or full concealment.
We highlight that the evenness of the search cost distribution $H$ is the critical determinant of equilibrium informativeness. Firms can strategically induce different purchase probability across different consumer groups by providing partial-purchase signals. Greater evenness of the distribution weakens firms' ability to exploit consumer heterogeneity, resulting in more informative equilibria. 

\begin{onehalfspace}
	
\printbibliography

@article{wolinsky1986true,
  title={True monopolistic competition as a result of imperfect information},
  author={Wolinsky, Asher},
  journal={The Quarterly Journal of Economics},
  volume={101},
  number={3},
  pages={493--511},
  year={1986},
  publisher={MIT Press}
}

@article{hwang2023,
  title={Competitive advertising and pricing},
  author={Hwang, Ilwoo and Kim, Kyungmin and Boleslavsky, Raphael},
  journal={Working Paper},
  year={2023}
}

@article{au2023attraction,
  title={Attraction versus persuasion: Information provision in search markets},
  author={Au, Pak Hung and Whitmeyer, Mark},
  journal={Journal of Political Economy},
  volume={131},
  number={1},
  pages={202--245},
  year={2023},
  publisher={The University of Chicago Press Chicago, IL}
}

@article{sun2006exact,
  title={The exact law of large numbers via Fubini extension and characterization of insurable risks},
  author={Sun, Yeneng},
  journal={Journal of Economic Theory},
  volume={126},
  number={1},
  pages={31--69},
  year={2006},
  publisher={Elsevier}
}

@article{au2020competitive,
  title={Competitive information disclosure by multiple senders},
  author={Au, Pak Hung and Kawai, Keiichi},
  journal={Games and Economic Behavior},
  volume={119},
  pages={56--78},
  year={2020},
  publisher={Elsevier}
}

@article{he2023competitive,
  title={Competitive information disclosure in random search markets},
  author={He, Wei and Li, Jiangtao},
  journal={Games and Economic Behavior},
  volume={140},
  pages={132--153},
  year={2023},
  publisher={Elsevier}
}

@article{dworczak2019simple,
  title={The simple economics of optimal persuasion},
  author={Dworczak, Piotr and Martini, Giorgio},
  journal={Journal of Political Economy},
  volume={127},
  number={5},
  pages={1993--2048},
  year={2019}
}

@article{weitzman1978optimal,
 ISSN = {00129682, 14680262},
 URL = {http://www.jstor.org/stable/1910412},
 abstract = {This paper completely characterizes the solution to the problem of searching for the best outcome from alternative sources with different properties. The optimal strategy is an elementary reservation price rule, where the reservation prices are easy to calculate and have an intuitive economic interpretation.},
 author = {Martin L. Weitzman},
 journal = {Econometrica},
 number = {3},
 pages = {641--654},
 publisher = {[Wiley, Econometric Society]},
 title = {Optimal Search for the Best Alternative},
 urldate = {2024-07-10},
 volume = {47},
 year = {1979}
}

@article{mccall1970economics,
  title={Economics of information and job search},
  author={McCall, John Joseph},
  journal={The Quarterly Journal of Economics},
  volume={84},
  number={1},
  pages={113--126},
  year={1970},
  publisher={MIT Press}
}

@article{diamond,
  title={A Model of Price Adjustment},
  author={Diamond, Peter},
  journal={Journal of Economic Theory},
  year={1971},
}

@article{moraga2017prices,
  title={Prices and heterogeneous search costs},
  author={Moraga-Gonz{\'a}lez, Jos{\'e} Luis and S{\'a}ndor, Zsolt and Wildenbeest, Matthijs R},
  journal={The RAND Journal of Economics},
  volume={48},
  number={1},
  pages={125--146},
  year={2017},
  publisher={Wiley Online Library}
}

@article{mekonnen2023persuaded,
  title={Persuaded search},
  author={Mekonnen, Teddy and Murra-Anton, Zeky and Pakzad-Hurson, Bobak},
  journal={arXiv preprint arXiv:2303.13409},
  year={2023}
}

@article{dogan2022consumer,
  title={Consumer search and optimal information},
  author={Dogan, Mustafa and Hu, Ju},
  journal={The RAND Journal of Economics},
  volume={53},
  number={2},
  pages={386--403},
  year={2022},
  publisher={Wiley Online Library}
}

@article{au2024attraction,
  title={Attraction via prices and information},
  author={Au, Pak Hung and Whitmeyer, Mark},
  journal={arXiv preprint arXiv:2402.11754},
  year={2024}
}

@article{stahl1989oligopolistic,
  title={Oligopolistic pricing with sequential consumer search},
  author={Stahl, Dale O},
  journal={The American Economic Review},
  pages={700--712},
  year={1989},
  publisher={JSTOR}
}

@article{arieli2023optimal,
  title={Optimal persuasion via bi-pooling},
  author={Arieli, Itai and Babichenko, Yakov and Smorodinsky, Rann and Yamashita, Takuro},
  journal={Theoretical Economics},
  volume={18},
  number={1},
  pages={15--36},
  year={2023},
  publisher={Wiley Online Library}
}

@article{kim2023choosing,
  title={Choosing Your Own Luck: Strategic Risk Taking and Effort in Contests},
  author={Kim, Kyungmin and Krishna, R Vijay and Ryvkin, Dmitry},
  year={2023}
}

@article{myerson1981optimal,
  title={Optimal auction design},
  author={Myerson, Roger B},
  journal={Mathematics of operations research},
  volume={6},
  number={1},
  pages={58--73},
  year={1981},
  publisher={INFORMS}
}

@article{boleslavsky2023information,
  title={Limits of Disclosure in Search Markets},
  author={Boleslavsky, Raphael and Krasteva, Silvana},
  journal={arXiv preprint arXiv:2506.06319},
  year={2025}
}

@article{lyu2023information,
  title={Information design for selling search goods and the effect of competition},
  author={Lyu, Chen},
  journal={Journal of Economic Theory},
  volume={213},
  pages={105722},
  year={2023},
  publisher={Elsevier}
}

@article{kamenica2011bayesian,
  title={Bayesian persuasion},
  author={Kamenica, Emir and Gentzkow, Matthew},
  journal={American Economic Review},
  volume={101},
  number={6},
  pages={2590--2615},
  year={2011},
  publisher={American Economic Association}
}

@article{kolotilin2022censorship,
  title={Censorship as optimal persuasion},
  author={Kolotilin, Anton and Mylovanov, Timofiy and Zapechelnyuk, Andriy},
  journal={Theoretical Economics},
  volume={17},
  number={2},
  pages={561--585},
  year={2022},
  publisher={Wiley Online Library}
}

@inproceedings{choi2019optimal,
  title={Optimal information design for search goods},
  author={Choi, Michael and Kim, Kyungmin and Pease, Marilyn},
  booktitle={AEA Papers and Proceedings},
  volume={109},
  pages={550--556},
  year={2019},
  organization={American Economic Association 2014 Broadway, Suite 305, Nashville, TN 37203}
}

@article{anderson2006advertising,
  title={Advertising content},
  author={Anderson, Simon P and Renault, R{\'e}gis},
  journal={American Economic Review},
  volume={96},
  number={1},
  pages={93--113},
  year={2006},
  publisher={American Economic Association}
}

@article{hortaccsu2004product,
  title={Product differentiation, search costs, and competition in the mutual fund industry: A case study of S\&P 500 index funds},
  author={Horta{\c{c}}su, Ali and Syverson, Chad},
  journal={The Quarterly journal of economics},
  volume={119},
  number={2},
  pages={403--456},
  year={2004},
  publisher={MIT Press}
}

@article{stahl1996oligopolistic,
  title={Oligopolistic pricing with heterogeneous consumer search},
  author={Stahl, Dale O},
  journal={International Journal of Industrial Organization},
  volume={14},
  number={2},
  pages={243--268},
  year={1996},
  publisher={Elsevier}
}

@article{gitmez2022informational,
  title={Informational Autocrats, Diverse Societies},
  author={Gitmez, A Arda and Molavi, Pooya},
  journal={arXiv preprint arXiv:2203.12698},
  year={2022}
}

@article{board2018competitive,
  title={Competitive information disclosure in search markets},
  author={Board, Simon and Lu, Jay},
  journal={Journal of Political Economy},
  volume={126},
  number={5},
  pages={1965--2010},
  year={2018},
  publisher={University of Chicago Press Chicago, IL}
}

@article{ivanov2013information,
  title={Information revelation in competitive markets},
  author={Ivanov, Maxim},
  journal={Economic Theory},
  volume={52},
  pages={337--365},
  year={2013},
  publisher={Springer}
}

@article{curello2022comparative,
  title={The comparative statics of persuasion},
  author={Curello, Gregorio and Sinander, Ludvig},
  journal={arXiv preprint arXiv:2204.07474},
  year={2024}
}

@article{nishida2018determinants,
  title={The determinants and consequences of search cost heterogeneity: Evidence from local gasoline markets},
  author={Nishida, Mitsukuni and Remer, Marc},
  journal={Journal of Marketing Research},
  volume={55},
  number={3},
  pages={305--320},
  year={2018},
  publisher={SAGE Publications Sage CA: Los Angeles, CA}
}

@article{de2018consumer,
  title={Consumer search on the internet},
  author={De los Santos, Babur},
  journal={International Journal of Industrial Organization},
  volume={58},
  pages={66--105},
  year={2018},
  publisher={Elsevier}
}

@article{salz2022intermediation,
  title={Intermediation and competition in search markets: An empirical case study},
  author={Salz, Tobias},
  journal={Journal of Political Economy},
  volume={130},
  number={2},
  pages={310--345},
  year={2022},
  publisher={The University of Chicago Press Chicago, IL}
}

@article{allen2014effect,
  title={The effect of mergers in search markets: Evidence from the Canadian mortgage industry},
  author={Allen, Jason and Clark, Robert and Houde, Jean-Fran{\c{c}}ois},
  journal={American Economic Review},
  volume={104},
  number={10},
  pages={3365--3396},
  year={2014},
  publisher={American Economic Association 2014 Broadway, Suite 305, Nashville, TN 37203}
}

@article{allen2019search,
  title={Search frictions and market power in negotiated-price markets},
  author={Allen, Jason and Clark, Robert and Houde, Jean-Fran{\c{c}}ois},
  journal={Journal of Political Economy},
  volume={127},
  number={4},
  pages={1550--1598},
  year={2019},
  publisher={The University of Chicago Press Chicago, IL}
}

@article{feinberg2014fatou,
  title={Fatou's lemma for weakly converging probabilities},
  author={Feinberg, Eugene A and Kasyanov, Pavlo O and Zadoianchuk, Nina V},
  journal={Theory of Probability \& Its Applications},
  volume={58},
  number={4},
  pages={683--689},
  year={2014},
  publisher={SIAM}
}

@inproceedings{du2024competitive,
  title={Competitive information design with asymmetric senders},
  author={Du, Zhicheng and Tang, Wei and Wang, Zihe and Zhang, Shuo},
  booktitle={Proceedings of the 25th ACM Conference on Economics and Computation (EC)},
  year={2024}
}

@article{sato2025feasible,
  title={Feasible Search Behavior},
  author={Sato, Hiroto and Shirakawa, Ryo},
  journal={Available at SSRN},
  year={2025}
}
\end{onehalfspace}

\newpage
\appendix
\section{Omitted Proofs}
\label{appendix:a}
In Appendix \ref{appendix:a}, all proofs of the results are provided, 
except the equilibrium characterization of Theorem \ref{thm:upper} and Proposition \ref{proposition:charcteriztion}, which are deferred to Appendix \ref{appendix:b}. 
\paragraph{Proof of Proposition \ref{proposition_nofriction}}
See Corollary 1 and Theorem 1 of \textcite{hwang2023} for existence and uniqueness, respectively.\qed

\paragraph{Proof of Proposition \ref{proposition_homogeneous} and Proposition \ref{proposition:positive_c_h}}
We first prove that if $G\in MPC(F)$ is an equilibrium, then $G(r-)=0$ where $r=r(c;G)$. Let $M=\max(\supp(G))$. Assume $G$ is an equilibrium but $G(r-)>0$, and choose intervals $[a,b]\subset[0,r)$ and $[M-\delta,M]\subset (r,M]$ with positive mass under $G$. 
We construct a profitable deviation $G_{\epsilon}$, which collapses an $\epsilon\in(0,1)$ fraction of mass of $[a,b]$ and total the mass of $[M-\delta,M]$ to its barycenter $\mu_{\epsilon}$. Formally, for $\epsilon>0$, define $G_{\epsilon}:=(1-p_{\epsilon})G + p_{\epsilon}\delta_{\mu_{\epsilon}}$ where $p_{\epsilon}:=\epsilon G([a,b])+G([M-\delta,M])$ and $\mu_{\epsilon}:=\frac{\epsilon G([a,b])}{p_{\epsilon}}\int_{[a,b]}xdG(x)+\frac{G([M-\delta,M])}{p_{\epsilon}}\int_{[M-\delta, M]}xdG(x)$. 
Since $\lim_{\epsilon\rightarrow 0}\mu_{\epsilon}=\int_{[M-\delta, M]}xdG(x)>r$ and 
$\mu_{\epsilon}$ is continuous in $\epsilon$, 
there exist some $\epsilon_1>0$ such that 
$\mu_{\epsilon_1}>r$. Such $G_{\epsilon_1}$ is indeed a profitable deviation: $$\mathbb{E}_{G_{\epsilon_1}}[D(x,G)]-\mathbb{E}_{G}[D(x,G)]
= \frac{\epsilon_1 G([a,b])}{p_{\epsilon_1}}\left( D(r,G)-\int_{a}^{b}D(x,G)dG(x)\right)>0.
$$
The inequality holds because $D(x;G)$ weakly increases and exhibits a discrete jump at $x=r$.

Now assume $G(r-)=0$.
Then, $r=\mu-c$ and $D(x;G)=\frac{1}{n}\mathbb{I}(x\geq \mu-c)$.
Any best response to $G$ must put all the probability measure above $\mu-c$. 
Hence $G$ is a best response to itself. This proves Proposition \ref{proposition_homogeneous}.
Replacing $c$ with $\underline{c}$ and $r$ with $r(\underline{c};G)$ proves Proposition \ref{proposition:positive_c_h}.
\qed

\paragraph{Proof of Theorem \ref{thm:upper} and Proposition \ref{proposition:charcteriztion}}
See Appendix \ref{appendix:b}.\qed

\paragraph{Proof of Proposition \ref{proposition:surplus}}
We first start by showing for all $c>0$, there exists an unique censorship threshold $a_c$ such that the reservation value of a $c$-consumer under $U_{a_c}$ matches the censorship threshold itself.

\begin{lemma}
    For each $c>0$, define $a_c:=c_F^{-1}(c)$. Then, $a_c=r(c;U_{a_c})$. Furthermore, $a\leq r(c;U_a)$ if and only if $a\leq a_c$. 
    \label{lemma:fixed_point}
\end{lemma}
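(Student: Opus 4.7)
The plan is to reduce both parts of the lemma to a single identity: $c_{U_a}(a) = c_F(a)$ for every $a \in [0,1]$. Once this identity is in hand, both claims will follow immediately from the strict monotonicity of the incremental-benefit function and the characterization of $r(c;G)$ as $c_G^{-1}(c)$.

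First, I would establish the identity. Since $U_a$ coincides with $F$ on $[0,a)$, the integrals $\int_0^a U_a(t)\,dt$ and $\int_0^a F(t)\,dt$ agree. Feasibility of $U_a$, i.e., $U_a \in \mathrm{MPC}(F)$, together with mean-preservation, gives $\int_0^1 U_a(t)\,dt = \int_0^1 F(t)\,dt = 1-\mu$. Subtracting yields $\int_a^1 U_a(t)\,dt = \int_a^1 F(t)\,dt$, which is precisely $c_{U_a}(a) = c_F(a)$. Equivalently, one can verify the identity by direct computation: on $[a, k_a)$ the function $U_a$ is constant at $F(a)$ and on $[k_a,1]$ equals $1$, so $c_{U_a}(a) = (k_a - a)(1 - F(a))$; combining $k_a(1-F(a)) = \int_a^1 t\,dF(t)$ with integration by parts on $c_F(a)$ gives the same expression.

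With the identity established, the first claim is immediate: $c_{U_{a_c}}(a_c) = c_F(a_c) = c$ by the definition $a_c = c_F^{-1}(c)$, and since $r(c;U_{a_c})$ is the unique solution in $x$ to $c_{U_{a_c}}(x) = c$, we conclude $r(c;U_{a_c}) = a_c$. For the second claim, recall that $c_{U_a}(\cdot)$ is strictly decreasing, so $a \leq r(c;U_a)$ is equivalent to $c_{U_a}(a) \geq c$. The identity then turns this into $c_F(a) \geq c$, and strict monotonicity of $c_F$ makes this equivalent to $a \leq c_F^{-1}(c) = a_c$.

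There is no serious obstacle here; the proof is a one-line identity combined with monotonicity arguments. The only subtle point worth emphasizing is the justification of $c_{U_a}(a) = c_F(a)$ via the mean-preservation/MPC argument rather than a brute-force computation involving $k_a$, as the former makes transparent why the identity holds: any mean-preserving reshuffling above the censorship threshold leaves the tail-integral $\int_a^1(1-G)$ invariant.
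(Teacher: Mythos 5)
Your proof is correct and follows essentially the same route as the paper's: both reduce everything to the identity $c_{U_a}(a) = c_F(a)$ (the paper derives it via the rewriting $c_G(x) = \mu - \int_0^x(1-G(t))\,dt$ together with $U_a = F$ on $[0,a]$, which is the same mean-preservation argument you give), and then both invoke monotonicity of the incremental-benefit function to convert between the inequalities $a \leq r(c;U_a)$, $c_{U_a}(a) \geq c$, and $a \leq a_c$.
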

\begin{proof}
    Note for any feasible $G$, $c_G(x)=\int_x^1(1-G(t))dt=\mu-\int_0^x(1-G(t))dt$.
    Since $U_{a_c}=F$ over $[0,a_c]$, we have $c=c_{F}(a_c)=c_{U_{a_c}}(a_c)$. Taking the inverse, this shows $a_c=r(c;U_{a_c})$. For the second part, observe that $a\leq r(c;U_a)$ if and only if $c_{U_a}(a)\geq c$. Since $U_a=F$ over $[0,a]$, $c_{U_a}(a)=c_F(a)$. Hence, $a\leq r(c;U_a)$ if and only if $c_F(a)\geq c$, which is equivalent to $a\leq c_F^{-1}(c)=a_c$.
\end{proof}

We first prove (a). For each $c$-consumer with $c>0$, 
let $a_c$ as in Lemma \ref{lemma:fixed_point}. Define the expected value of the purchased product as $B_c(a)$
and the expected accumulated search cost as $C_c(a)$:
$$
B_c(a):=
\begin{dcases}
    \int_0^a xd(F(x)^n) + k_a(1-F(a)^n) & \text{ if } a\leq a_c \\ 
    \int_0^{a_c}xd(F(x)^n)+k_a(1-F(a_c)^n) & \text{ if } a> a_c
\end{dcases}
$$
$$
C_c(a):=
\begin{dcases}
    (1-F(a))\cdot \Big(c+2cF(a)+\dots +(n-1)cF(a)^{n-2}\Big)+ncF(a)^{n-1}
    =\frac{1-F(a)^n}{1-F(a)}c & \text{ if } a\leq a_c \\ 
    (1-F(a_c))\cdot \Big(c+2cF(a_c)+\dots +(n-1)cF(a_c)^{n-2}\Big)+ncF(a_c)^{n-1}
    =\frac{1-F(a_c)^n}{1-F(a_c)}c & \text{ if } a> a_c
\end{dcases}
$$
Define the consumer surplus of the $c$-consumer $CS_c(a):=B_c(a)-C_c(a)$,
and the total consumer surplus $CS(a):=\int_0^{\overline{c}}CS_c(a)dH(c)$. We show that $CS_c(a)$ is strictly positive and strictly increasing in $a$, which implies the same for $CS(a)$. Since $CS_c(0)=\mu-c>0$, it suffices to show $CS_c'(a)>0$. Since $\frac{d}{da}k_a>0$, $CS_c(a)$ is strictly increasing for $a> a_c$. For $a\leq  a_c$, direct calculation yields:
\begin{align*}
    CS_c'(a)&=f(a)\times\left( \frac{1-F(a)^n}{1-F(a)}-nF(a)^{n-1}\right)\times \left( k_a-\frac{c}{1-F(a)}-a\right)=f(a)\times nJ_F(a)\times\Big(r(c;U_a)-a \Big)
\end{align*}
The last bracket is strictly positive from Lemma \ref{lemma:fixed_point}.

To prove (b), observe $a^M_k\leq r(\overline{c};U_{a^M_k})$ for each $k=1,2$ and $a^M_1\leq a^M_2$ both holding is equivalent to $a^M_1\leq a^M_2\leq a_{\overline{c}}$. Defining $B(a):=\int_0^{\overline{c}}B_c(a)dH(c)$, we have $CS(a;H)=B(a)-\frac{1-F(a)^n}{1-F(a)}\mathbb{E}_H[c]$ for $a\leq a_{\overline{c}}$. Since $\mathbb{E}_{H_1}[c]=\mathbb{E}_{H_2}[c]$ by assumption, $CS(a^M_1;H_1)=CS(a^M_1;H_2)\leq CS(a^M_2;H_2)$ follows.
\qed

\paragraph{Proof of Proposition \ref{proposition:comp_stat_shift}}
We prove the result maintaining Assumption \ref{assumption:diag}. (The general case without the assumption is deferred to Online Appendix \ref{appendix:general_comp_stat}.)  Fix $H_1$ satisfying Assumption \ref{assumption:diag} and $\alpha\in(1,\frac{\mu}{\overline{c}}).$ Note an $\alpha$-scale shift $H_2$ of $H_1$ also satisfies the assumption. By Corollary \ref{corollary:a^M}, it suffices to show $h_2(c^M_2)\leq h_1(c^M_1)$. Define $S_k(c):=\frac{H_k(c)}{c}$ for $c>0$ and $S_k(0):=h_k(0)$. Then, $$h_2(c_2^M)=\min_{c\in[0,\alpha \overline{c}]} S_2(c) = \frac{1}{\alpha}\min_{c\in[0,\overline{c}]} S_1(c) < \min_{c\in[0,\overline{c}]} S_1(c)=h_1(c_1^M).$$
If $h_1(c_1^M)>\frac{1}{\mu}$, they by Proposition \ref{proposition:charcteriztion}, $a^1_M>0$. By Corollary \ref{corollary:a^M}, $a^M_2<a^M_1$. If $h_1(c_1^M)\leq \frac{1}{\mu}$, we obtain $a_1^M=a_2^M=0$.
\qed

\paragraph{Proof of Proposition \ref{proposition:comp_stat_fosd}}
Assume $H_1, H_2$ both satisfy Assumption \ref{assumption:diag} and $H_2$ first order stochastically dominates $H_1$. Since $H_2(c)\leq H_1(c)$ for all $c$, we have $S_2(c)\leq S_1(c)$, implying $h_2(c^M_2)=\min_{c\in[0,\overline{c}]}S_2(c)\leq \min_{c\in[0,\overline{c}]}S_1(c)=h_1(c^M_1).$
By Corollary \ref{corollary:a^M}, $a^M_2\leq a^M_1$.
If $H_2(c)<H_1(c)$ for all $c$, then the inequality is strict, hence implying $a^M_2<a^M_1$ whenever $a^M_1>0$.
\qed

\paragraph{Proof of Corollary \ref{corollary:convergence}}
Assume each $H_k$ satisfies Assumption \ref{assumption:diag}.
(The general case without the assumption is deferred to Online Appendix \ref{appendix:general_comp_stat}.) Suppose $H_k\rightarrow \delta_0$ weakly and $\supp(H_k)\searrow \{0\}$. We claim that $h_k(c^M_k)\rightarrow \infty$, which, by continuity of $c_F(\cdot)$, implies $a^M_k=c_F^{-1}\left(\frac{1}{h_k(c^M_k)}\right)\rightarrow c_F^{-1}(0)=1$.

Assume to the contrary that $\{h_k(c^M_k)\}_{k=1}^{\infty}$ is bounded above 
by some constant $M$. Then, $H_k(c^M_k)\leq Mc^M_k$ for all $k$.
Since $\supp(H_k)\searrow \{0\}$, we have $c^M_k\rightarrow 0$, and thus $H_k(c_k^M)\rightarrow 0$.
Define the indicator functions $f_k=\mathbb{I}[c\leq c_k^M]$, so that $\int f_kdH_k=H_k(c^M_k)\rightarrow 0$.
Also, since $\supp(H_k)\searrow \{0\}$, we have $\liminf_{k\rightarrow\infty, c\rightarrow c'}f_k(c')=\mathbb{I}[c\leq 0]$.
Applying Fatou's lemma for weakly converging measures (Theorem 1.1, \textcite{feinberg2014fatou}) yields:
$$1=\int\mathbb{I}[c\leq 0]d\delta_0=
\int \liminf_{k\rightarrow\infty, c\rightarrow c'}
f_k(c')d\delta_0
\leq 
\liminf_{k\rightarrow \infty}\int f_kdH_k 
=\liminf_{k\rightarrow\infty}H(c_k^M)=0,$$
which is a contradiction.
Hence $h_k(c_k^M)\rightarrow \infty$.
\qed

\paragraph{Proof of Corollary \ref{corollary:convergence_homo}}
Assume the sequence $\{H_k\}_{k=1}^{\infty}$ converges to $\delta_{\overline{c}}$ weakly and $\supp(H_k)=[0,\overline{c}]$ for all $k$. We show that for large enough $k$, each $H_k$ satisfies Assumption \ref{assumption:diag} and $h_k(c^M_k)\leq \frac{1}{\mu}$, which by Proposition \ref{proposition:charcteriztion} implies $a^M_k=0$. Fix $\e>0$. Since $H_k\rightarrow \delta_{\overline{c}}$ weakly and the point of discontinuity of $\mathbb{I}[c\leq \overline{c}-\e]$ is measure zero under $\delta_{\overline{c}}$,
$$\int_0^1\mathbb{I}[c\leq \overline{c}-\e]dH_k(c)=H_k(\overline{c}-\e)\rightarrow \int_0^1\mathbb{I}[c\leq \overline{c}-\e]d\delta_{\overline{c}}(c)=0.$$
Hence, there exists some $K>0$ such that for all $k>K$, $H_k(\overline{c}-\e)<(\overline{c}-\e)\frac{1}{\mu}$ which implies $\frac{H_k(\overline{c}-\e)}{\overline{c}-\e}<\frac{1}{\mu}<\frac{1}{\overline{c}}$. This inequality guarantees that Assumption \ref{assumption:diag} holds for $H_k$, and moreover, $h_k(c^M_k)<\frac{1}{\mu}$.

\qed

\paragraph{Proof of Proposition \ref{proposition:even}}
We show part (a). Part (b) is deferred to Appendix \ref{appendix:general_comp_stat}. Assume $H_0$ satisfies Assumption \ref{assumption:diag}. By Corollary \ref{corollary:a^M}, it suffices to show $h_{\lambda}(c^M_{\lambda})\geq h_0(c^M_{0})$ and that $h_{\lambda}(c^M_{\lambda})$ is strictly increasing in $\lambda$.
Note the average density associated with $H_{\lambda}$ is given by $S_{\lambda}(c)=\frac{\lambda}{\overline{c}}+(1-\lambda)S_0(c)$, so  $H_{\lambda}(c)$ also satisfies the assumption. Because $H_0\neq U[0,\overline{c}]$, Assumption \ref{assumption:diag} implies $\min_c S_0(c)=h_0(c_0^M)< \frac{1}{\overline{c}}$.
Hence, for all $\lambda\in(0,1]$, 
$$h_0(c_0^M)< \lambda \frac{1}{\overline{c}}+(1-\lambda)h_0(c_0^M)
=\min_c S_{\lambda}(c)=h_{\lambda}(c_{\lambda}^M).$$
Finally, differentiating with respect to $\lambda$ gives $\frac{d}{d\lambda}h_{\lambda}(c^M_{\lambda})=\frac{1}{\overline{c}}-h_0(c_0^M)>0$.

\paragraph{Proof of Proposition \ref{proposition:mps}}
Assume $H_1$ and $H_2$ both satisfy Assumption \ref{assumption:diag} and $H_2$ is an MPS of $H_1$. The general case without the assumption is proved in Online Appendix \ref{appendix:general_comp_stat}. Under Assumption~\ref{assumption:diag}, the global minimum $c^M_k\in[0,\overline{c})$ of $S_k(c)$ must satisfy the first- and second-order conditions: $S_k(c_k^M)=h_k(c^M_k)$ and $h_k'(c^M_k)\geq 0$, respectively.

To prove (a), assume $H_1$ and $H_2$ both admit strictly quasi-convex densities. Note $H_2$ single crosses $H_1$ from above at some $s\in(0,\overline{c})$, i.e. $H_2(c)>H_1(c)$ for $c\in(0,s)$ and $H_2(c)<H_1(c)$ for $c\in(s,\overline{c})$. Note each $c^M_k$ must both lie above $s$. Hence, we have $S_2(c^M_2)\leq S_2(c^M_1)<S_1(c^M_1)$, i.e. $h_2(c^M_2)<h_1(c^M_1)$. By Corollary~\ref{corollary:a^M}, $a^M_2\leq a^M_1$.

To prove (b), assume both $H_1$ and $H_2$ have strictly quasi-concave densities. We show $c^M_k=0$ for $k=1,2$, which proves the desired result: since $H_2$ is a MPS of $H_1$, it follows that $h_2(0)\geq h_1(0)$, and thus $a^M_2\geq a^M_1$ by Corollary \ref{corollary:a^M}. Because $H_k$ admits a strictly quasi-concave density with some interior peak $p_k\in(0,\overline{c})$, the density $h_k$ is strictly increasing over $[0,p_k]$ and strictly decreasing over $[p_k,\overline{c}]$. By the mean value theorem, for every $c\in(0,p_k)$, there exists some $m_c\in (0,c)$ such that $\frac{H_k(c)}{c}=S_k(c)=h_k(m_c)$. It follows that $S_k'(c)=\frac{1}{\overline{c}}\left(h_k(c)-S_k(c)\right)=\frac{1}{\overline{c}}\left(h_k(c)-h_k(m_c)\right)>0$, where the last inequality holds because $h_k$ is increasing over $[0,p_k]$. Hence, $S_k(c)$ is strictly increasing over $(0,p_k)$ and $m_c<c$. Since $h_k'(c^M_k)\geq 0$ by the second order condition, we must have $c^M_k\in[0,p_k]$. But $S_k$ is strictly increasing over this interval, so the global minimum is attained only at the left endpoint. Therefore, $c^M_k=0$. \qed

\paragraph{Proof of Proposition \ref{proposition:atom_full_info}}
Let $m_f$ be the minimum of $f$ over $[0,1]$ and $M_{f'}$, $M_f$ and $M_h$ be the maximum of $f'$, $f'$, $h$ over the same interval, which exists by assumption of continuity.
Observe that
\begin{align*}
    D''(x;F)&=(n-1)F^{n-3}\Big[
    (n-1)f^2 H(c_F)+f'FH-2(1-F)F fh(c_F)
    \Big]-(1-F)^2 J_F h'(c_F) \\ 
    & \geq  (n-1)F^{n-3}\Big[
        (n-1)m_f^2 H(0) -M_{f'}H(0)-2M_fM_h
    \Big]
\end{align*}
The inequality follows from the weak concavity of $H$, $h'\leq 0$.
The term in the square bracket is strictly positive for large enough $n$ since $m_f>0$. Hence, $D(x;F)$ is strictly convex over $x\in[0,1]$, implying $F$ is an equilibrium.

\section{Existence and Uniqueness of the Upper Censorship Equilibria}
\label{appendix:b}
This appendix aims to prove the existence and uniqueness of upper censorship equilibria. Section~\ref{appendix:b_1} provides preliminary analysis essential for the subsequent results. Section~\ref{appendix:b_2} proves Theorem~\ref{thm:upper}-(b), showing that if an upper censorship equilibrium exists, there must be a continuum of them. Section~\ref{appendix:b_3} proves the existence (Theorem~\ref{thm:upper}-(a)) by explicitly constructing the maximally informative upper censorship equilibrium, without relying on Assumption~\ref{assumption:diag} and thus extending the results to arbitrary cost distributions $H$ (Theorem~\ref{thm:appendix_maximal} generalizes Proposition~\ref{proposition:charcteriztion}). Finally, Section~\ref{appendix:b_4} establishes that upper censorship equilibria are the only limit equilibria.

First, we begin with Lemma \ref{appendix_lemma:dworczak_martini}, which restates relevant results in \textcite{dworczak2019simple}.

\begin{lemma}[\textcite{dworczak2019simple}]
    Assume $u:[0,1]\rightarrow \mathbb{R}$ is 
    continuous and has bounded slope. Then,
    \begin{equation*}
        G^*\in \argmax_{G\in MPC(F)}\int_0^1 u(x)dG(x)
        \label{appendix_eq:original_DM}
    \end{equation*}
    if and only if there exists some convex function 
    $\phi:[0,1]\rightarrow \mathbb{R}$ such that (a) $\int_0^1 \phi dF=\int_0^1 \phi dG^*$,  (b) $\phi(x)\geq u(x)$, and (c) $\supp(G^*)\subseteq \{x|\phi(x)=u(x)\}$.
    Moreover, for any $[a,b]\subseteq [0,1]$:
    \begin{enumerate}
        \item if $\phi$ is strictly convex in $[a,b]$, then $\phi(x)=u(x)$ and $G^*(x)=F(x)$ for all $x\in[a,b]$, or
        \item if $\phi$ is affine in $[a,b]$ and $[a,b]$ is the maximal interval on which $\phi$ is affine,
        then 
        \begin{equation}
            G(a)=F(a),\;\; G(b)=F(b),\;\; \int_a^btG(t)=\int_a^btF(t),\;\; \phi(c)=u(c) \text{ for some } c\in[a,b]    
            \label{eq:strict_MPC}
        \end{equation}
    \end{enumerate}
    \label{appendix_lemma:dworczak_martini}
\end{lemma}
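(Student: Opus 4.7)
Since the lemma is explicitly a restatement of the price-function characterization of \textcite{dworczak2019simple}, my plan is to outline how the two directions of the main equivalence and the moreover-part follow from their duality argument, rather than reprove everything from scratch. The sufficiency direction is elementary and I would write it out in full; the necessity direction is the content of their strong duality theorem and would be invoked by citation.

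For the sufficiency direction, assume $\phi$ is convex and satisfies (a)--(c). For any $G\in\text{MPC}(F)$,
\[
\int_0^1 u\,dG \;\leq\; \int_0^1 \phi\,dG \;\leq\; \int_0^1 \phi\,dF \;=\; \int_0^1 \phi\,dG^* \;=\; \int_0^1 u\,dG^*,
\]
where the first inequality uses (b), the second uses convexity of $\phi$ combined with $G\in\text{MPC}(F)$ (the defining Jensen-type characterization of mean-preserving contractions), the third uses (a), and the last uses (c) together with the fact that $G^*$ is supported on $\{\phi=u\}$. For the necessity direction, I would appeal to the strong duality theorem in \textcite{dworczak2019simple}: given $G^*$ optimal, a convex price function $\phi$ satisfying (a)--(c) is constructed as the Lagrange multiplier associated with the continuum of majorization constraints defining $\text{MPC}(F)$. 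The continuity and bounded-slope assumption on $u$ guarantees the primal has finite optimum and qualifies strong duality.

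For the moreover-part I would argue by complementary slackness. On an interval $[a,b]$ where $\phi$ is strictly convex, the global equality (a) together with the majorization inequalities $\int_0^x G^*(t)\,dt\leq \int_0^x F(t)\,dt$ forces equality pointwise on $[a,b]$ (a mean-preserving contraction of $F$ strictly lowers the integral of a strictly convex function unless no contraction occurs), hence $G^*=F$ on $[a,b]$, and then (c) combined with $G^*=F$ yields $\phi(x)=u(x)$ on this interval. For the affine case, the two endpoint conditions $G^*(a)=F(a)$ and $G^*(b)=F(b)$ follow because $[a,b]$ is maximal: just outside $[a,b]$, $\phi$ is either strictly convex (so the previous argument gives $G^*=F$ there) or affine with a different slope (ruling out mass crossing the boundary in a mean-preserving way). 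The first-moment condition $\int_a^b t\,dG^*=\int_a^b t\,dF$ is then immediate from affineness of $\phi$ on $[a,b]$ together with (a) restricted to $[a,b]$. Finally, $\phi(c)=u(c)$ for some $c\in[a,b]$ follows because $G^*$ must place some mass on $[a,b]$ whenever $F(b)>F(a)$, and by (c) such mass lies on $\{\phi=u\}$.

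The only nontrivial step is the necessity direction of the main equivalence, which requires infinite-dimensional convex duality; but since the lemma is expressly a citation of \textcite{dworczak2019simple}, this step reduces to invoking their Proposition on price functions. The remaining sufficiency and moreover parts are short and essentially bookkeeping once the duality framework is in place.
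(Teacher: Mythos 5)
The paper does not actually prove this lemma: it is attributed to \textcite{dworczak2019simple} and invoked as a black box, so there is no paper proof to compare against. Your proposal correctly recognizes this and is, as a result, more explicit than the paper. The sufficiency chain you write out is correct and is the standard argument; the appeal to strong duality for necessity is the only reasonable route and is what the cited source does.

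One step in your ``moreover'' sketch does not work as stated. You claim the first-moment identity $\int_a^b t\,dG^* = \int_a^b t\,dF$ is ``immediate from affineness of $\phi$ on $[a,b]$ together with (a) restricted to $[a,b]$.'' Condition (a) is a single global scalar equality, $\int_0^1\phi\,dF = \int_0^1\phi\,dG^*$; it does not localize to subintervals, so there is no such thing as ``(a) restricted to $[a,b]$.'' The correct route is via complementary slackness on the integrated CDFs: at the endpoints $a$ and $b$ of a maximal affine piece, $\phi$ either becomes strictly convex or has a convex kink, and in either case the dual variable forces the majorization constraint to bind, giving $\int_0^a G^*\,dt = \int_0^a F\,dt$ and $\int_0^b G^*\,dt = \int_0^b F\,dt$. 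Combined with $G^*(a)=F(a)$ and $G^*(b)=F(b)$, integration by parts then yields $\int_a^b t\,dG^* = \int_a^b t\,dF$. Your endpoint-equality argument is in the right spirit but should be phrased in terms of the binding majorization constraint rather than ``ruling out mass crossing the boundary.'' Apart from this, your sketch tracks the Dworczak--Martini structure faithfully and is appropriate for what is, in the paper, a citation rather than an original result.
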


We introduce multiple definitions:
\begin{itemize}
    \item When the distribution $G$ satisfies the conditions in \eqref{eq:strict_MPC}, we say
    $G$ is a \textit{strict MPC of $F$ over $[a,b]$}, or just \textit{strict MPC} when the context is clear.
    \item Define the average slope of $H(\cdot)$ in $[0,c]$ as 
    $S(c):=
    \begin{dcases}
        h(0) & \text{ if } c=0 
        \\ \frac{H(c)}{c} & \text{ if } c>0
    \end{dcases}
    $
    \item Define $c_{cav}\in [0,\overline{c}]$ as
    \begin{equation}
        c_{cav}:=\inf_c\{ c: h'(t)\leq 0 \text{ for all } t\in (c,\overline{c}]\}\label{eq:c_{cav}}.
    \end{equation}
    If the set in \eqref{eq:c_{cav}} is empty, we define $c_{cav}:=\overline{c}$. If the set in \eqref{eq:c_{cav}} is nonempty, $[c_{cav},\overline{c}]$ characterizes the maximal interval including $\overline{c}$ where $H$ is concave. 
    \item Define $\overline{a}:=c_F^{-1}(\overline{c})$. By Lemma \ref{lemma:fixed_point}, $a\leq \underline{r}$ if and only if $a\leq \overline{a}$.
\end{itemize}

\subsection{Preliminary Analysis}
\label{appendix:b_1}

In this section, we prove the following proposition, which serves a key role in characterizing and verifying whether an upper censorship distribution $U_a$ is an equilibrium. It states that when the competition is sufficiently intense, it suffices to check conditions related to the cost distribution $H$.

\begin{proposition}
    There exists some $N>0$ such that $U_a$ is an equilibrium for $n>N$ 
    if and only if
    \begin{enumerate}[label=(\alph*)]
        \item $a\leq \overline{a}$ and $S(c)\geq\frac{1}{c_F(a)}$ for all $c\in[0,\overline{c}]$.
        \item $a>\overline{a}$ and the following conditions all hold:
        \begin{enumerate}[label=(\roman*)]
            \item $S(c)\geq S(c_F(a))>h(c_F(a))$ for all $c\in[0,c_F(a)]$
            \item $c_F(a)\geq c_{cav}$
        \end{enumerate}
    \end{enumerate}
    \label{proposition:full_iff_cost}
\end{proposition}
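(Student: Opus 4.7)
}

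The strategy is to invoke Lemma \ref{appendix_lemma:dworczak_martini} with $u(x)=D(x;U_a)$ and the candidate price function $\phi_a$ of \eqref{eq:upper_DM}. By construction, $\phi_a=D(\cdot;U_a)$ on $\supp(U_a)$ and $\int \phi_a dU_a=\int \phi_a dF$ (Lemma \ref{lemma:iff}), so Dworczak--Martini reduces the equilibrium check to two conditions: (i) $\phi_a$ is convex on $[0,1]$, and (ii) $\phi_a(x)\ge D(x;U_a)$ for all $x$. I would split the analysis according to whether $a\le \overline a$ (i.e., $a\le\underline r$ by Lemma \ref{lemma:fixed_point}) or $a>\overline a$, and verify (i)--(ii) piecewise on the intervals $[0,\underline r]$, $[\underline r,a]$, $[a,\overline r]$, and $[\overline r,1]$.

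For Case (a), when $a\le \overline a$, the partial-purchase region sits entirely in $[\underline r,\overline r]$ and $\phi_a$ agrees with $F^{n-1}$ on $[0,a]$ and with the secant joining $(a,F(a)^{n-1})$ and $(\overline r,D(\overline r;U_a))$ on $[a,1]$. Convexity on $[0,a]$ holds for all large $n$ by the boundedness of $f$ and $f'$, while the upward kink at $x=a$ follows because $(n-1)F(a)^{n-2}f(a)$ is eventually dominated by the secant slope. Thus (i) is automatic for large $n$, and (ii) is trivial on $[0,\underline r]\cup\{\overline r\}$. On $[\underline r,\overline r]$, the net-gain decomposition \eqref{eq:net_gain}--\eqref{eq:diff} translates $\phi_a(x)\ge D(x;U_a)$ into the inequality $\frac{c}{c_F(a)}\le H(c)$ for every $c\in[0,\overline c]$, which is exactly $S(c)\ge 1/c_F(a)$.

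For Case (b), when $a>\overline a$, the interval $[\underline r,a]$ is nonempty and $D(\cdot;U_a)$ is no longer constant there because higher-cost consumers stop while lower-cost ones continue. I would compute $D(\cdot;U_a)$ explicitly on $[\underline r,a]$ using \eqref{D(x;G)} with $G=U_a=F$; differentiating twice, the extensive-margin term in $D''$ carries a factor $J_F(x)$ (of order $F(x)^{1-n}/n$ relative to $F(x)^{n-1}$) multiplying $h'(c_F(x))$, while the intensive-margin term is $O(F(x)^{n-1})$. Hence for large $n$ the sign of $D''(x;U_a)$ on $[\underline r,a]$ is governed by $-h'(c_F(x))$, so convexity of $\phi_a$ on that interval is equivalent to $h'(c_F(x))\le 0$ throughout, i.e., $c_F(a)\ge c_{cav}$. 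The dominance condition on $[a,\overline r]$ again reduces via the net-gain identity to an inequality of the form $\frac{c}{c_F(a)}\le H(c)$ for $c\in[0,c_F(a)]$, which, together with the kink-at-$a$ convexity condition (the secant slope must strictly exceed the left-derivative $h(c_F(a))$ of $D$ at $x=a$), yields $S(c)\ge S(c_F(a))>h(c_F(a))$.

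The main obstacle is the asymptotic analysis in Case (b): extracting the clean limit form in which the intensive margin is uniformly swamped by the extensive margin and the curvature of $\phi_a$ on $[\underline r,a]$ cleanly inherits the sign of $-h'(c_F(\cdot))$. The key technical input is the ratio estimate $J_F(x)/F(x)^{n-1}\to\infty$ whenever $F(x)<1$, which, combined with the boundedness of $h$ and $h'$ (inherited from the paper's standing hypotheses on $H$), allows me to pick a single $N$ uniform in $a$ (over the compact range under consideration) such that convexity of $\phi_a$ for $n>N$ is captured exactly by $c_F(a)\ge c_{cav}$ and the kink conditions at $a$ and $\overline r$. Once this is in place, parts (a) and (b) follow by assembling the pieces.
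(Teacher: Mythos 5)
Your proposal follows the paper's own proof strategy almost exactly: reduce via Lemma~\ref{lemma:iff} (Dworczak--Martini) to checking convexity of $\phi_a$ and $\phi_a\ge D(\cdot;U_a)$; split on $a\le\overline a$ (i.e.\ $a\le\underline r$ by Lemma~\ref{lemma:fixed_point}) versus $a>\overline a$; verify convexity on $[0,a]$ and the kink at $a$ asymptotically using the $J_F/F^{n-1}$ divergence; for $a>\overline a$ show that $D''$ on $[\underline r,a]$ inherits the sign of $-h'(c_F(\cdot))$; and translate $\phi_a\ge D$ into a condition on $S=H/c$. The paper packages these steps into Lemmas~\ref{lemma:phi_a_convex_F}, \ref{lemma:phi_a_convex_kink}, and \ref{lemma:phi_a_geq_D}, but the substance is identical.

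One intermediate statement in your Case~(b) is imprecise: you write that the dominance condition on $[a,\overline r]$ reduces to $c/c_F(a)\le H(c)$, i.e.\ $S(c)\ge 1/c_F(a)$, on $c\in[0,c_F(a)]$. When $a>\overline a$ the secant slope carries the extra factor $H(c_F(a))<1$, so the correct reduction is $S(c)\ge S(c_F(a))=H(c_F(a))/c_F(a)$, which is strictly weaker than $S(c)\ge 1/c_F(a)$. You do state the correct final form $S(c)\ge S(c_F(a))>h(c_F(a))$, so the conclusion is right, but the intermediate inequality as written is not the one that falls out of the algebra in Lemma~\ref{lemma:phi_a_geq_D}-(b).
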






\paragraph{Proof of Proposition \ref{proposition:full_iff_cost}} We prove the proposition in multiple steps. First, we show that verifying whether $U_a$ is an equilibrium reduces to analyzing the virtual interim demand $\phi_a$. Then, we establish lemmas connecting the properties of $\phi_a$ to those of the cost distribution $H$. 

\begin{lemma}
    For any $a>0$, define the auxiliary function $\phi_a$ as follows:
    $$\phi_a(x):=
    \begin{cases}
        D(x;U_a) & x\leq a \\ 
        \dfrac{D(k_a;U_a)-D(a;U_a)}{k_a-a}(x-a)+D(a;U_a) & x>a 
    \end{cases}
    $$
    then, $U_a$ is an equilibrium if and only if (i) $\phi_a$ is convex and (ii) $\phi_a(x)\geq D(x;U_a)$.
    \label{lemma:iff}
\end{lemma}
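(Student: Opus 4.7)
The plan is to apply Lemma \ref{appendix_lemma:dworczak_martini} (the \textcite{dworczak2019simple} characterization) to the firm's best-response problem $\max_{G_i \in \text{MPC}(F)} \int_0^1 D(x; U_a) \, dG_i(x)$, with payoff $u(x) = D(x; U_a)$. That lemma characterizes $U_a$ as a best response by the existence of a convex $\phi$ satisfying (a) $\int \phi \, dF = \int \phi \, dU_a$, (b) $\phi \geq D(\cdot; U_a)$, and (c) $\supp(U_a) \subseteq \{\phi = D(\cdot; U_a)\}$. I will show that $\phi_a$ is the essentially unique valid price function, so that conditions (i)--(ii) of the lemma are equivalent to (a)--(c) of Lemma \ref{appendix_lemma:dworczak_martini}.

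For sufficiency, suppose $\phi_a$ is convex and $\phi_a \geq D(\cdot; U_a)$. By construction, $\phi_a = D(\cdot; U_a)$ on $\supp(U_a) = [0,a] \cup \{k_a\}$, yielding (c); condition (b) is the hypothesis. For (a), I use linearity of $\phi_a$ on $[a,1]$ together with $k_a = \mathbb{E}_F[v \mid v \geq a]$ to compute
\[
    \int_a^1 \phi_a \, dF = \phi_a(a)(1-F(a)) + \frac{\phi_a(k_a) - \phi_a(a)}{k_a - a}(1-F(a))(k_a - a) = (1-F(a))\,\phi_a(k_a) = \int_a^1 \phi_a \, dU_a,
\]
and on $[0,a]$ the measures $F$ and $U_a$ coincide, so (a) follows and $U_a$ is a best response.

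For necessity, suppose $U_a$ is an equilibrium and let $\phi$ be any valid price function given by Lemma \ref{appendix_lemma:dworczak_martini}. Condition (c) forces $\phi = D(\cdot; U_a) = \phi_a$ on $[0,a]$ and $\phi(k_a) = D(k_a; U_a)$. Combining (a) with the coincidence on $[0,a]$ gives $\int_a^1 \phi \, dF = (1-F(a))\,\phi(k_a)$, which by Jensen's inequality applied to the convex $\phi$ and the probability measure $F|_{[a,1]}/(1-F(a))$ (whose mean is $k_a$) must hold with equality. Since $F$ has strictly positive density on $[a,1]$ and the convex $\phi$ is continuous on $(0,1)$, equality forces $\phi$ to be affine on all of $[a,1]$; the endpoint conditions $\phi(a) = D(a; U_a)$ (continuity) and $\phi(k_a) = D(k_a; U_a)$ then pin down $\phi = \phi_a$ there. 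Hence $\phi_a = \phi$ inherits convexity and domination of $D(\cdot; U_a)$, giving (i) and (ii).

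The most delicate step is the Jensen-equality argument in the necessity direction, where I must rule out $\phi$ being affine only $F$-almost-everywhere on $[a,1]$ rather than globally; this is exactly where the positive-density assumption on $F$ and the continuity of the convex $\phi$ are used. A minor preliminary point is verifying continuity of $D(\cdot; U_a)$ at $x = a$ so that $\phi_a(a) = D(a; U_a)$ is unambiguously defined—this follows from continuity of $F$ and the fact that the support of reservation values lies at or above $\underline{r}$, so the jumps of $D(\cdot; U_a)$ occur at or above $\underline{r} \geq a$ when $a \leq \underline{r}$, with an analogous argument handling $a > \underline{r}$.
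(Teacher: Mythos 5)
Your proof is correct. The sufficiency direction matches the paper's argument exactly (verify the three Dworczak--Martini conditions for $\phi_a$, using $k_a=\mathbb{E}_F[v\mid v\geq a]$ and linearity to check the integral condition). The necessity direction reaches the same conclusion as the paper but via a genuinely different route: the paper asserts directly, from the Dworczak--Martini structure result, that because $U_a$ is a strict MPC of $F$ over $[a,1]$, any valid price function $\phi$ must be affine there; you instead derive this affinity from condition (a) together with Jensen's inequality applied to the convex $\phi$ and the conditional measure $F|_{[a,1]}/(1-F(a))$, using that equality in Jensen forces $\phi$ to coincide with an affine supporting line $F$-a.e.\ and hence everywhere by positivity of the density and interior continuity of $\phi$. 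Both approaches pin down $\phi=\phi_a$ on $[a,k_a]$ using the support condition $\{a,k_a\}\subseteq\{x:\phi(x)=D(x;U_a)\}$; your Jensen argument is slightly more self-contained because it does not lean on the converse direction of the ``strict MPC $\Leftrightarrow$ affine segment'' correspondence, whereas the paper's is more immediate given that the structure theorem is already quoted. One technical point you handle correctly but could state more crisply: the Jensen-equality argument gives affinity of $\phi$ only on $[a,1)$, leaving the value $\phi(1)$ potentially larger; this is harmless because $\phi(1)$ carries no mass under $F$ or $U_a$, and $\phi_a(1)>\phi_a(k_a)=D(k_a;U_a)=D(1;U_a)$ by monotonicity, so the domination condition at the right endpoint is automatic.
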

\begin{proof}
    The if direction is trivial by construction and Lemma \ref{appendix_lemma:dworczak_martini}. To prove the only if part, assume $U_a$ is an equilibrium. Then, there exists some function $\phi$ satisfying conditions of Lemma \ref{appendix_lemma:dworczak_martini}. We prove $\phi=\phi_a$. Clearly, $\phi(x)=D(x;U_a)=\phi_a(x)$ over $[0,a]$. Since $U_a$ is a strict MPC of $F$ over $[a,1]$, $\phi$ must be linear over $[a,1]$. Furthermore, $\{a,k_a\}\subseteq \{x|\phi(x)=D(x;U_a)\}$ implies $\phi=\phi_a$ at $x=a$ and $x=k_a$. Since both $\phi, \phi_a$ are both linear over $[a,1]$ and coincide in two points of the interval, they must be equal over $[a,1]$. Hence, $\phi=\phi_a$.
\end{proof}

We can restate the conditions of $\phi_a$ from Lemma \ref{lemma:iff} in terms of $H$. Since $\phi_a$ is linear over $[a,1]$, establishing the convexity of $\phi_a$ over $[0,1]$ reduces to verifying whether $\phi_a$ is (i) convex over $[0,a]$, and (ii) has an increasing kink at $x=a$. Each of these conditions are established in Lemma \ref{lemma:phi_a_convex_F} and Lemma \ref{lemma:phi_a_convex_kink}. Additionally, Lemma \ref{lemma:phi_a_geq_D} establishes conditions ensuring $\phi_a\geq D$.

If $a\geq \underline{r}$, then the convexity of $\phi_a(x)=D(x;U_a)$ over $x\in[\underline{r},a]$ is directly tied to the curvature of the search cost distribution $H$, specifically the sign of the second derivative $h'(c)$. This relationship is formalized in the lemma below.



\begin{lemma}
    Let $G\in MPC(F)$ and fix some $x\geq \underline{r}=r(\overline{c};G)$ such that $(x-\e, x+\e)\subset \supp(G)$ for some $\e>0$. Assume $D(x;G)$ is twice differentiable at $x$ for some $n=m$. Then, there exists some $N_{x,G}>0$ such that if $n>N_{x,G}$,
    \begin{enumerate}[label=(\alph*)]
        \item if $h'(c_G(x))>0$, then $D''(x;G)<0$.
        \item if $h'(c_G(x))\leq 0$, then $D''(x;G)>0$
    \end{enumerate}
    Furthermore, if $h'(c)\leq 0$ for all $c\in[c_F(x),\overline{c}]$,
    then there exists some $N>0$ such that $D''(t;F)>0$ for all $t\in[\underline{r},x]$.
    \label{lemma:partial_stop_convex}
\end{lemma}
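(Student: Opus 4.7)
}

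The plan is to compute $D''(x;G)$ explicitly from the expression for $D(x;G)$ on $x \geq \underline{r}$ (given in \eqref{D(x;G)}) and then perform an asymptotic analysis in $n$, separating an $O(1/n)$ leading term whose sign is governed by $h'(c_G(x))$ from remainder terms that are exponentially small in $n$. Starting from $D'(x;G) = (n-1) G^{n-2} g\, H(c_G) + h(c_G)(1-G) J_G(x)$ (as derived in the main text, with $c_G'(x) = -(1-G(x))$), a second differentiation, using $J_G(x) = \frac{1-G^n}{n(1-G)} - G^{n-1}$ and collecting terms, gives
\begin{align*}
    D''(x;G) &= -\frac{h'(c_G(x))(1-G(x))(1-G(x)^n)}{n} + R_n(x;G),
\end{align*}
where $R_n(x;G)$ is a finite sum of terms each of the form $p(n)\cdot G(x)^{n-k}\cdot q(x)$ with $p$ a polynomial of degree at most $2$, $k\in\{1,2,3\}$, and $q(x)$ a bounded expression in $g, g', h(c_G), h'(c_G), G$. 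The dominant contribution to $R_n$ is the strictly positive intensive-margin term $(n-1)(n-2) G(x)^{n-3} g(x)^2 H(c_G(x))$.

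Next I would use the hypothesis $(x-\epsilon, x+\epsilon) \subset \supp(G)$ to conclude $G(x) < 1$ (since mass lies strictly above $x$), so $G(x)^{n-k}$ decays exponentially in $n$ while $|R_n(x;G)| = O(n^2 G(x)^n)$. Hence $n\cdot R_n(x;G) \to 0$ as $n \to \infty$. For case (a), when $h'(c_G(x)) > 0$, the leading term is strictly negative of magnitude $\Theta(1/n)$; choosing $N_{x,G}$ so that $|R_n| < \tfrac{1}{2n}h'(c_G(x))(1-G(x))(1-G(x)^n)$ for all $n > N_{x,G}$ yields $D''(x;G) < 0$. For case (b) with $h'(c_G(x)) < 0$, the same argument applied with reversed sign gives $D''(x;G) > 0$. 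The boundary case $h'(c_G(x)) = 0$ is the delicate point: here the leading $O(1/n)$ term vanishes, so the sign of $D''$ is inherited from $R_n$. I would show that among the exponentially small terms, $(n-1)(n-2) G(x)^{n-3} g(x)^2 H(c_G(x))$ dominates the others (e.g., $-2(n-1) h(c_G)(1-G) g G^{n-2}$), since the ratio of the positive term to the competing negative term is $\Theta(n)$; using $g(x) > 0$ (from $x \in \supp(G)$ and $G$ admitting a positive density near $x$) and $H(c_G(x)) > 0$ (which holds because $x < \max(\supp(G))$, so $c_G(x) > 0$), the remainder is eventually positive, giving $D''(x;G) > 0$.

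For the ``furthermore'' statement, I would apply case (b) pointwise to each $t \in [\underline{r}, x]$ with $G = F$: since $c_F$ is strictly decreasing, $c_F(t) \in [c_F(x), \overline{c}]$ for all such $t$, so $h'(c_F(t)) \leq 0$ by assumption. The compactness of $[\underline{r}, x]$, together with the continuity in $t$ of all coefficients appearing in $D''(t;F)$ and the fact that $F$ has a positive, bounded density on $[0,1]$ (ensuring $F(t) \leq F(x) < 1$ uniformly on the interval), allows me to extract a \emph{uniform} bound $N$ from the pointwise thresholds $N_{t,F}$, so that $D''(t;F) > 0$ simultaneously for all $t \in [\underline{r}, x]$ whenever $n > N$.

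The main obstacle I anticipate is the borderline case $h'(c_G(x)) = 0$, since here the asymptotic argument cannot rely on the clean $\Theta(1/n)$ leading term and requires a careful ranking of the exponentially small remainder terms; this is also where the hypothesis that $x$ lies \emph{strictly inside} $\supp(G)$ (so that $g(x) > 0$ and $H(c_G(x)) > 0$) is essential.
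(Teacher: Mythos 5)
Your plan is correct and follows essentially the same route as the paper's proof: both compute $D''(x;G)$ explicitly, split off the term $-(1-G)^2 J_G h'(c_G)$ whose sign is governed by $h'$, and observe that since $G(x)<1$ the remaining intensive-margin terms of order $n^2 G(x)^{n-3}$ are exponentially dominated by the $\Theta(1/n)$ piece $-(1-G)(1-G^n)h'/n$ when $h'\neq 0$, while in the borderline $h'=0$ case the positive $\Theta(n^2 G^{n-3})$ term takes over (the paper handles $h'=0$ implicitly by dropping the then-nonnegative $-(1-G)^2 J_G h'$ term and bounding below by $(n-1)G^{n-3}\big[(n-1)g^2\e - |g'| - 2gM_h\big]$, and for the ``furthermore'' clause it writes a uniform lower bound using $F(\underline{r})^{n-3}$, $m_f$, $M_{|f'|}$, $M_h$ rather than invoking compactness abstractly). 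Two small notes: your remainder is $O(n^2 G(x)^{n-3})$ rather than $O(n^2 G(x)^n)$, which does not affect the argument; and both your proof and the paper's tacitly assume $g(x)>0$ (used to make the dominant positive term nondegenerate), which is fine in the contexts where the lemma is applied but is not strictly implied by the stated hypotheses.
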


\begin{proof}
    See Online Appendix \ref{online_appendix:proofs}.
\end{proof}

\begin{lemma}[Convexity of $\phi_a(x)$ over {$[0,a]$}]
    Fix $a$ and let $\underline{r}=r(\overline{c};U_a)$.
    \begin{enumerate}[label=(\alph*)]
        \item There exists some $N>0$ such that if $n>N$, 
        $\phi_a(x)$ is convex over $[0,\min(a,\underline{r})]$.
        \item Assume $a\in(\overline{a},1)$.
        There exists some $N>0$ such that if $n>N$, 
        $\phi_a$ is convex over $[0,a]$ if and only if $h'(c_F(x))\leq 0$ for all $x\in[\underline{r},a]$.
    \end{enumerate}
    \label{lemma:phi_a_convex_F}
\end{lemma}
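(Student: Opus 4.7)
The plan is to split $[0,a]$ into the two regions where the formula for $D(\cdot;U_a)$ takes different forms and handle each by a targeted analytic argument.

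For part (a), on $[0,\min(a,\underline{r})]$ every consumer's reservation value weakly exceeds the signal, so $D(x;U_a)=U_a(x)^{n-1}=F(x)^{n-1}$, using $U_a=F$ on $[0,a]$. A direct computation gives
\[
(F^{n-1})''(x)=(n-1)F(x)^{n-3}\bigl[(n-2)f(x)^2+F(x)f'(x)\bigr].
\]
Because $f$ is continuous and strictly positive on the compact set $[0,1]$, it is bounded below by some $\underline{f}>0$, and $f'$ is bounded by hypothesis. Hence the bracket is nonnegative for every $x$ whenever $n-2\geq \sup|f'|/\underline{f}^{\,2}$, which furnishes a uniform threshold $N$.

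For part (b), part (a) already delivers convexity on $[0,\underline{r}]$, and a one-sided-derivative check at $\underline{r}$ confirms that the kink there is upward: the right derivative exceeds the left by the strictly positive quantity $(1-F(\underline{r}))h(\overline{c})J_F(\underline{r})$ arising from consumers whose reservation value equals $\underline{r}$, which is compatible with convexity. Hence convexity of $\phi_a$ on $[0,a]$ reduces to convexity of $D(\cdot;U_a)$ on $[\underline{r},a]$. Since $c_{U_a}(x)=c_F(x)$ throughout $[0,a]$, Lemma~\ref{lemma:partial_stop_convex} applies at each interior point: for $n$ large, the sign of $D''(x;U_a)$ is opposite to that of $h'(c_F(x))$. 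The ``only if'' direction follows at once -- if $h'(c_F(x^\star))>0$ at some $x^\star\in[\underline{r},a]$, continuity locates such a point strictly inside $\supp(U_a)$, where Lemma~\ref{lemma:partial_stop_convex}(a) forces $D''<0$ past a pointwise threshold, precluding convexity.

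The ``if'' direction is the principal obstacle: it requires upgrading Lemma~\ref{lemma:partial_stop_convex}(b) from a pointwise to a uniform large-$n$ guarantee across $[\underline{r},a]$. My plan is to revisit the expansion underlying that lemma, which writes $D''(x;U_a)$ as a finite sum whose leading-order coefficient in $n$ is a continuous function of $x$ with sign $-h'(c_F(x))$ and is bounded away from zero wherever $h'(c_F(x))<0$. The standing finite-local-extrema hypothesis on $h$ partitions $[\underline{r},a]$ into finitely many closed subintervals on each of which $h'\circ c_F$ is either strictly negative or identically zero; on each strictly negative subinterval, compactness and continuity deliver a single threshold, while on any identically-zero subinterval a subsidiary analysis of the next-order term (nonnegative by direct computation) suffices. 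Taking the maximum of these finitely many thresholds produces the desired uniform $N$.
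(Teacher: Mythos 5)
Part (a) of your proposal is essentially the paper's proof: both compute $(F^{n-1})''=(n-1)F^{n-3}[(n-2)f^2+Ff']$ and invoke the positive-density / bounded-$f'$ hypotheses to obtain a uniform threshold.

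For part (b), your handling of the kink at $\underline{r}$ is more explicit than the paper's (which simply asserts that it ``suffices to prove the convexity of $\phi_a$ over $[\underline{r},a]$'' without computing the one-sided derivatives), and the ``only if'' direction by contraposition via Lemma~\ref{lemma:partial_stop_convex}(a) is correct. The issue is in your ``if'' direction. First, you overlooked that Lemma~\ref{lemma:partial_stop_convex} already contains a ``Furthermore'' clause stating that if $h'(c)\leq 0$ for all $c\in[c_F(x),\overline{c}]$, then there is a \emph{single} $N$ with $D''(t;F)>0$ for all $t\in[\underline{r},x]$; the paper's proof just cites this. The underlying argument is short and does not require any partition: when $h'\leq 0$ the term $-(1-G)^2J_G\,h'(c_G)$ in \eqref{eq:D''} is nonnegative and can be dropped outright, after which the remaining expression is bounded below by $(n-1)F(\underline{r})^{n-3}\big[(n-1)m_f^2H(c_F(x))-M_{|f'|}-2M_fM_h\big]$ uniformly in $t$, which is positive for large $n$. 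Second, the partition step you propose as a substitute is not quite sound: the finite-local-extrema hypothesis on $h$ gives piecewise (weak) monotonicity of $h$, i.e.\ a constant weak sign of $h'$ on each piece, but it does not rule out $h'$ vanishing at inflection points inside a ``negative'' piece. Hence the decomposition of $[\underline{r},a]$ into intervals where $h'\circ c_F$ is ``either strictly negative or identically zero'' is not available, and compactness would not yield a strict negative bound on a closed interval whose boundary hits a zero of $h'$. The dropped-term argument sidesteps all of this and is the one the paper uses; your plan would need to be restructured along those lines to close the gap.
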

\begin{proof}
    Since $\phi_a(x)=F^{n-1}(x)$ over the interval $[0,\min(a,\underline{r})]$, we prove the convexity of $F^{n-1}$ for large enough $n$. Direct calculation gives 
    $$
    (F^{n-1})'' = 
    (n-1)F^{n-1}\Big[(n-2)f^2+f'F\Big]\geq (n-1)F^{n-1}\Big[(n-2)m_f^2-M_{f'}\Big],
    $$
    where $m_f:=\min_{x\in[0,1]}f(x)$ and $M_{f'}:=\max_{x\in[0,1]}f'(x)$, which are guaranteed to exist because $f,f'$ is continuous over a compact interval $[0,1]$. Strictly positive density $f$ implies $m_f>0$, ensuring the term inside the square bracket is strictly positive for large enough $n$.
    This proves part (a) of the lemma. To prove (b), it suffices to prove the convexity of $\phi_a$ over $[\underline{r},a]$. Since $D(x;U_a)=D(x;F)$ over $[\underline{r},a]$, the result follows from Lemma \ref{lemma:partial_stop_convex}.
\end{proof}
Part (a) and (b) of the following lemma links the kink of $\phi_a$ at $x=a$ to the cost distribution $H$. Part (c) proves that if $\phi_a$ is convex over $[0,1]$, so is $\phi_b$ for all $b<a$.
\begin{lemma}[Convexity of $\phi_a(x)$ at $x=a$]
    There exists some $N>0$ such that
    $\lim_{x\rightarrow a+}\phi'_a(x)\geq \lim_{x\rightarrow a-}\phi'_a(x)$ for all $n>N$
    if and only if either
    \begin{enumerate}[label=(\alph*)]
        \item $a\leq\overline{a}$ or
        \item $a> \overline{a}$ and $S(c_F(a))>h(c_F(a))$.
    \end{enumerate}
    Moreover,
    \begin{enumerate}[label=(\alph*), start=3]
        \item Assume $\lim_{x\rightarrow a+}\phi_a'(x)\geq \lim_{x\rightarrow a-}\phi_a'(x)$ and $D(x;F)$ is convex over $[0,a]$, then
        \begin{enumerate}[label=(\roman*)]
            \item $\lim_{x\rightarrow b+}\phi_b'(x)\geq \lim_{x\rightarrow b-}\phi_b'(x)$ holds for all $b<a$.
            \item The right derivative $\lim_{x\rightarrow b+}\phi_b'(x)$ is a increasing function in $b\in[0,a]$ and $\frac{H(c_F(b))}{c_F(b)}$ is a increasing function in $b\in[\overline{a},a]$.
        \end{enumerate}  
    \end{enumerate}
    \label{lemma:phi_a_convex_kink}
\end{lemma}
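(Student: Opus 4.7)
The plan is to derive explicit formulas for the left and right derivatives of $\phi_a$ at $x=a$ and reduce the sign of their difference to conditions on $F$ and $H$. The right derivative is the secant slope $\frac{D(k_a;U_a)-D(a;U_a)}{k_a-a}$, while the left derivative $D'(a-;U_a)$ depends on whether $a\le\overline{a}$ (so that $a\le\underline{r}(U_a)$) or $a>\overline{a}$.

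For part (a), I would use that $D(x;U_a)=F(x)^{n-1}$ on $[0,a]$, giving a left derivative $(n-1)F(a)^{n-2}f(a)$ that vanishes as $n\to\infty$ whenever $F(a)<1$. Combined with $D(k_a;U_a)-D(a;U_a)=J_F(a)$ and the identity $(k_a-a)(1-F(a))=c_F(a)$ (from $k_a=\mathbb{E}[v\,|\,v\ge a]$ by integration by parts), the right derivative equals $J_F(a)(1-F(a))/c_F(a)$, of leading order $1/(n\,c_F(a))$, hence dominates the left for large $n$. For part (b), I would apply the formula for $D(x;U_a)$ on $[\underline{r},a]$ and Leibniz's rule to obtain $D'(a-;U_a)=(n-1)F(a)^{n-2}f(a)H(c_F(a))+h(c_F(a))(1-F(a))J_F(a)$. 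A parallel computation---splitting consumers with $r(c;U_a)\le a$ from those with $r(c;U_a)\in(a,k_a)$---yields $D(k_a;U_a)-D(a;U_a)=H(c_F(a))J_F(a)$. Dividing the kink inequality by $J_F(a)(1-F(a))$ and using the identity above isolates
\begin{equation*}
S(c_F(a))\ge h(c_F(a))+\frac{(n-1)F(a)^{n-2}f(a)H(c_F(a))}{J_F(a)(1-F(a))}.
\end{equation*}
Since $J_F(a)(1-F(a))=\frac{1-F(a)^n}{n}-F(a)^{n-1}(1-F(a))\to 1/n$ in leading order and $F(a)<1$, the correction is $O(nF(a)^{n-2})\to 0$; hence for large $n$ the kink inequality holds if and only if $S(c_F(a))>h(c_F(a))$ strictly.

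For part (c), the key observation is that $D(x;U_b)=D(x;F)$ on $[0,b]$ for every $b$: since $U_b=F$ on $[0,b]$ and the reservation values $r(c;U_b)$ for consumers with $c\ge c_F(b)$ coincide with $r(c;F)=c_F^{-1}(c)$, the interim-demand integrand on $[0,b]$ is identical under the two distributions. Consequently, the convexity of $D(\cdot;F)$ on $[0,a]$ transfers to $\phi_b$ on $[0,b]$ for every $b\le a$, and the left derivative of $\phi_b$ at $b$ equals $D'(b-;F)$. For (ii), monotonicity of $S(c_F(b))$ on $[\overline{a},a]$ follows from a differential-inequality argument: convexity of $D(\cdot;F)$ on $[\underline{r},a]$ plus Lemma \ref{lemma:partial_stop_convex} forces $h'\le 0$ on $[c_F(a),\overline{c}]$, and with boundary condition $S(c_F(a))>h(c_F(a))$ the function $T(c)=S(c)-h(c)$ satisfies $(cT)'\ge 0$, giving $T>0$ and hence $S'<0$ throughout $[c_F(a),\overline{c}]$. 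Monotonicity of $\psi(b):=\lim_{x\to b+}\phi_b'(x)$ on $[0,a]$ then follows by writing $\psi(b)=J_F(b)(1-F(b))\cdot S(c_F(b))$ uniformly (extending $H\equiv 1$ past $\overline{c}$) and checking the derivative on $[0,\overline{a}]$ separately. Finally, (i) follows without delicate propagation: for $b\le\overline{a}$ the kink-convexity at $b$ is automatic by the part-(a) argument applied at $b$; for $b\in(\overline{a},a]$ it reduces via part (b) to $S(c_F(b))>h(c_F(b))$, which is exactly the conclusion of the differential inequality above.

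The main obstacle is the monotonicity of $\psi(b)$ uniformly on $[0,a]$, and in particular on $[0,\overline{a}]$, where the auxiliary factor $S(c_F(b))=1/c_F(b)$ is increasing in $b$ but the factor $J_F(b)(1-F(b))$ is decreasing in $b$ (since its derivative in $y=F(b)$ equals $-(n-1)y^{n-2}(1-y)\le 0$). The monotonicity of the product reduces to the inequality $g(F(b))\ge(n-1)F(b)^{n-2}f(b)c_F(b)$ with $g(y)=(1-y^n)/n-y^{n-1}(1-y)$; for large $n$ this is immediate since $g(y)\to 1/n$ while $(n-1)y^{n-2}\to 0$ for $y<1$, but a careful derivation is needed to make the asymptotic precise and to confirm that the threshold $N$ works uniformly across $a$ in the relevant range.
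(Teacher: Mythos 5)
Your treatment of parts (a) and (b) mirrors the paper's: the same explicit derivative formulas (yours differ only by the identity $(k_a-a)(1-F(a))=c_F(a)$), the same recognition that the left derivative $(n-1)F(a)^{n-2}f(a)$ is exponentially dominated by the $O(1/n)$ right derivative, and the same reduction of the $a>\overline a$ case to the strict inequality $S(c_F(a))>h(c_F(a))$.

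Part (c) is where you diverge, and there are real gaps. The paper's proof of (c) is a propagation argument at fixed $n$: writing $\Delta(t)=R(t)-L(t)$ for the kink size at $t$, it verifies the exact identity $R'(t)=\frac{1}{k_t-t}\bigl(R(t)-L(t)\bigr)$, so $\Delta'(t)=C(t)\Delta(t)-L'(t)$ with $C(t)=\frac{1}{k_t-t}$; solving this linear ODE with integrating factor $E(t)=\exp\bigl(-\int_0^t C\bigr)$ shows $E(t)\Delta(t)=\Delta(0)-\int_0^t L'(s)E(s)\,ds$ is decreasing whenever $L'\ge 0$, which is precisely the hypothesis that $D(\cdot;F)$ is convex. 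Hence $\Delta(a)\ge 0$ propagates to $\Delta(b)\ge 0$ for all $b<a$ at the \emph{same} $n$, and (ii) drops out for free since $R'(t)=C(t)\Delta(t)\ge 0$ combined with a quotient-rule computation gives the monotonicity of $H(c_F(b))/c_F(b)$.

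Your argument for (c) does not close. First, ``apply part (a) at $b$'' is circular for the purposes of (c): the threshold $N$ in part (a) depends on the censorship point, while the hypothesis of (c) only fixes a single $n$ (one for which the kink at $a$ holds and $D(\cdot;F)$ is convex on $[0,a]$); nothing guarantees that $n$ exceeds $N_b$ for every $b<a$. Second, the inference ``convexity of $D(\cdot;F)$ plus Lemma \ref{lemma:partial_stop_convex} forces $h'\le 0$ on $[c_F(a),\overline c]$'' is only an asymptotic implication ($h'>0 \Rightarrow D''<0$ for $n$ large), not a converse valid at the fixed $n$ of the hypothesis, so the differential inequality $(cT)'\ge 0$ is unlicensed. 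Third, you yourself flag that the monotonicity of $\psi(b)=J_F(b)(1-F(b))S(c_F(b))$ on $[0,\overline a]$ remains to be shown. All three gaps vanish once you adopt the ODE propagation, which needs only $L'\ge 0$ and the boundary condition $\Delta(a)\ge 0$.
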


\begin{proof}
    We first prove the case of $a<\overline{a}$.
    Observe that for any $a>0$, 
    \begin{equation}
        \lim_{n\rightarrow \infty}\frac{J_F(a)}{(n-1)F(a)^{n-2}f(a)}=\infty
        \label{J(x)_diverge}
    \end{equation}
    since the denominator vanishes exponentially faster than the numerator $J_F(a)=\frac{1-F(a)^n}{n(1-F(a))}-F(a)^{n-1}$. Calculation gives $\lim_{x\rightarrow a+}\phi_a'(x)=\frac{J_F(a)}{k_a-a}$ and $\lim_{x\rightarrow a-}\phi_a'(x)=(n-1)F(a)^{n-2}f(a)$. Hence, $\lim_{x\rightarrow a+}\phi_a'(x)\geq \lim_{x\rightarrow a-}\phi_a'(x)$ if and only if $\frac{J_F(a)}{(n-1)F(a)^{n-2}f(a)}\geq k_a-a$, which holds for large enough $n$ from \eqref{J(x)_diverge}.

    Now assume $a\geq \overline{a}$ and $S(c_F(a))=\frac{H(c_F(a))}{c_F(a)}>h(c_F(a))$.
    We need to show
    \begin{equation*}
        \lim_{x\rightarrow a+}\phi_a'(x) = \frac{J_F(a)}{k_a-a}H(c_F(a))\geq  \lim_{x\rightarrow a-}\phi_a'(x)=
        (1-F(a))h(c_F(a))J_F(a)+ (n-1)F(a)^{n-2}f(a)H(c_F(a))
    \end{equation*}
    The inequality holds if and only if 
    $
    \frac{J_F(a)}{(n-1)F(a)^{n-2}f(a)}(1-F(a))\left(\frac{H(c_F(a))}{c_F(a)}-h(c_F(a)) \right)\geq H(c_F(a)).
    $
    The term inside the bracket is strictly positive by assumption. 
    Hence from \eqref{J(x)_diverge}, the inequality holds for large enough $n$.

    The proof of (c) is deferred to the Online Appendix~\ref{online_appendix:proofs}; we only sketch the argument here.
    Define $R(a)$ and $L(a)$ as the right and left derivatives of $\phi_a(x)$ at $x=a$, respectively. The kink size $\Delta(a):=R(a)-L(a)$ can be explicitly characterized by solving a first-order ordinary differential equation $\Delta'(a)=\frac{1}{k_a-a}\Delta(a)-L'(a)$. Using the boundary condition $\Delta(a)\geq 0$ and the monotonicity of $L(t)$, we show $\Delta(b)\geq 0$ for all $b<a$, which proves part (i). A straightforward calculation then establishes part (ii) as a corollary.
\end{proof}

\begin{lemma}[Conditions for $\phi_a\geq D$]
    \begin{enumerate}[label=(\alph*)]
        \item If $a<\overline{a}$, then $\phi_a(x)\geq D(x;U_a)$ for all $x$ 
        if and only if $S(c)\geq\frac{1}{c_F(a)}$ for all $c\in[0,\overline{c}]$.
        \item If $a\geq \overline{a}$, then $\phi_a(x)\geq D(x;U_a)$ for all $x$ 
        if and only if $S(c)\geq S(c_F(a))$ for all $c\in[0,c_F(a)]$.
    \end{enumerate}
    \label{lemma:phi_a_geq_D}
\end{lemma}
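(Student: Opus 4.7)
The plan is to reduce the inequality $\phi_a(x) \geq D(x; U_a)$ to a direct algebraic comparison on the interval $[a, k_a]$ (the only region where the two can differ) and translate the result into a condition on $H$ using the identity
\begin{equation*}
(k_a - a)(1 - F(a)) = c_F(a),
\end{equation*}
which follows from $k_a = \mathbb{E}[v \mid v \geq a]$ by integration by parts. This identity lets me rewrite the geometric ratio $(k_a - x)/(k_a - a)$ governing the secant as the cost ratio $c_{U_a}(x)/c_F(a)$, since $c_{U_a}(x) = (1-F(a))(k_a - x)$ on $[a, k_a]$. On $[0, a]$ I have equality by construction of $\phi_a$, and on $[k_a, 1]$ the inequality is automatic: $D(\cdot; U_a)$ is flat at $D(k_a; U_a) = \phi_a(k_a)$ while $\phi_a$ has nonnegative slope.

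For case (a), where $a < \overline{a}$ so $a < \underline{r}$, the demand on $[\underline{r}, k_a]$ takes the simple form $D(x; U_a) = F(a)^{n-1} + J_F(a)(1 - H(c_{U_a}(x)))$, while on $[a, \underline{r}]$ it is flat at $F(a)^{n-1}$ so the inequality holds trivially there. Using the secant slope $J_F(a)/(k_a - a)$ together with the key identity, a direct computation yields
\begin{equation*}
\phi_a(x) - D(x; U_a) = J_F(a) \Big[H(c_{U_a}(x)) - \tfrac{c_{U_a}(x)}{c_F(a)}\Big].
\end{equation*}
Since $c_{U_a}$ maps $[\underline{r}, k_a]$ bijectively onto $[0, \overline{c}]$, nonnegativity on $[a, k_a]$ is equivalent to $S(c) \geq 1/c_F(a)$ for all $c \in [0, \overline{c}]$, giving (a) in both directions.

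Case (b) is where the main algebraic work lies. When $a \geq \overline{a}$, so $\underline{r} \leq a$, the stoppers in the demand formula $D(x; U_a) = G(x)^{n-1}H(c_G(x)) + \int_{c_G(x)}^{\overline{c}} \frac{1 - G(r(c))^n}{n(1 - G(r(c)))} dH(c)$ split at a signal $x \in [a, k_a]$ into two groups: those with $c \in [c_{U_a}(x), c_F(a)]$, whose reservation values lie in the atom region of $U_a$ and whose purchase probability is $\frac{1 - F(a)^n}{n(1-F(a))}$, and those with $c \in [c_F(a), \overline{c}]$, whose reservation values lie in the revealed region $[0, a]$ and whose aggregate contribution $K = \int_{c_F(a)}^{\overline{c}} \frac{1 - F(r(c))^n}{n(1 - F(r(c)))} dH(c)$ is $x$-independent. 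After substituting and simplifying, $K$ cancels from $\phi_a - D$, yielding
\begin{equation*}
\phi_a(x) - D(x; U_a) = J_F(a)\Big[H(c_{U_a}(x)) - H(c_F(a)) \cdot \tfrac{c_{U_a}(x)}{c_F(a)}\Big],
\end{equation*}
so nonnegativity on $[a, k_a]$ is equivalent to $S(c) \geq S(c_F(a))$ for $c \in [0, c_F(a)]$, the range of $c_{U_a}$ over that interval. The hard part will be executing this decomposition cleanly and verifying the cancellation of $K$; once the bookkeeping is in order, the identity $(k_a - a)(1 - F(a)) = c_F(a)$ closes the argument in both directions.
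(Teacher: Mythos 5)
Your proposal is correct and follows essentially the same route as the paper's proof: both reduce the inequality to the interval $[a,k_a]$, use the identity $(k_a-a)(1-F(a))=c_F(a)$ (equivalently, $c_{U_a}(x)=(1-F(a))(k_a-x)$) to translate the secant ratio into a cost ratio, and obtain precisely the displayed differences $\phi_a(x)-D(x;U_a)=J_F(a)\bigl[H(c_{U_a}(x))-c_{U_a}(x)/c_F(a)\bigr]$ in case (a) and $J_F(a)\bigl[H(c_{U_a}(x))-H(c_F(a))\,c_{U_a}(x)/c_F(a)\bigr]$ in case (b). Your explicit tracking of the $x$-independent term $K$ in case (b) makes a step visible that the paper leaves implicit, but the substance and the final equivalences are the same.
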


\begin{proof}
    We only need to check over the interval $[a,1]$, where
    $\phi_a(x)=\frac{J_F(a)H(c_{U_a}(a))}{k_a-a}(x-a)+D(a;U_a).$
    First, consider $a<\overline{a}$. This implies $H(c_{U_a}(a))=1$. 
    Thus, $\phi_a(x)\geq D(x;U_a)$ for all $x\in[a,1]$ if and only if 
    \begin{equation}
        \frac{J_F(a)}{k_a-a}(x-a)\geq J_F(a)(1-H(c_{U_a}(x))) \quad \forall x\in[a,k_a].
        \label{lemma:a_3_1}
    \end{equation}
    Using $c_{U_a}(x)=(F(a)-1)(x-k_a)$ for $x\in[a,k_a]$, we rewrite the left-hand side as  $\frac{1}{k_a-a}(x-a)=1+\frac{c_{U_a}(x)}{(F(a)-1)(k_a-a)}$. Therefore, \eqref{lemma:a_3_1} holds if and only if $\frac{H(c_{U_a}(x))}{c_{U_a}(x)}\geq \frac{1}{(F(a)-1)(a-k_a)}=\frac{1}{c_F(a)}$ for all $x$. Therefore, the inequality holds if and only if $S(c)=\frac{H(c)}{c}\geq \frac{1}{c_F(a)}$ for all $c$.

    If $a\geq \overline{a}$, then $H(c_{U_a}(a))<1$. The condition $\phi_a(x)\geq D(x,U_a)$ becomes
    \begin{equation}
        \frac{J_F(a)}{k_a-a}H(c_{U_a}(a))(x-a)\geq J_F(a)(H(c_{U_a}(a))-H(c_{U_a}(x))).
        \label{lemma:a_3_2}
    \end{equation}
    Using the same substitution as in (a), this simplifies to $\frac{H(c_{U_a}(x))}{c_{U_a}(x)}\geq \frac{H(c_{U_a}(a))}{c_{U_a}(a)}=\frac{H(c_F(a))}{c_F(a)}$ for all $x\in[a,\overline{r}]$. Under $c_{U_a}(\cdot)$, this range corresponds to $c\in[0,c_F(a)]$. The lemma is proved.
\end{proof}

Now, Proposition \ref{proposition:full_iff_cost} follows from Lemmas \ref{lemma:iff}, \ref{lemma:phi_a_convex_F}, \ref{lemma:phi_a_convex_kink} and \ref{lemma:phi_a_geq_D}. \qed

\subsection{Proof of Theorem \ref{thm:upper}-(b)}
\label{appendix:b_2}

Assume $U_a$ is an equilibrium. Let $b<a$. Since $\phi_a(x)$ is convex, $\phi_b(x)=\phi_a(x)$ is convex over $[0,b]\subseteq [0,a]$. Furthermore, by Lemma~\ref{lemma:phi_a_convex_kink}-(c), $\phi_b$ exhibits an increasing kink at $x=b$. By Lemma~\ref{lemma:iff}, it remains to show $\phi_b(x)\geq D(x;U_b)$ for all $x$ to prove $U_b$ is an equilibrium. If $a\leq \overline{a}$, then by Lemma~\ref{lemma:phi_a_geq_D}-(a), we have $S(c)\geq \frac{1}{c_F(a)}$ for all $c\in[0,\overline{c}]$. Since $c_F(\cdot)$ is a strictly decreasing function, for any $b<a$, we have $S(c)\geq \frac{1}{c_F(a)}>\frac{1}{c_F(b)}$ for all $c$. Hence, $\phi_b(x)\geq D(x;U_b)$ by Lemma~\ref{lemma:phi_a_geq_D}-(a). 
If $a>\overline{a}$, then by Lemma~\ref{lemma:phi_a_geq_D}-(b), $S(c)\geq S(c_F(a))$ for all $c\in[0,c_F(a)]$. Lemma~\ref{lemma:phi_a_convex_kink}-(c) implies $S(c_F(a))\geq S(c_F(b))$ for $\overline{a}\leq b\leq a$. Plugging $b=\overline{a}$, this implies $S(c)\geq S(\overline{c})=\frac{1}{\overline{c}}$ for all $c$. Hence, $S(c)\geq \frac{1}{\overline{c}}>\frac{1}{c_F(b)}$ if $b<\overline{a}$. This prove $\phi_b(x) \geq D(x;U_b)$ for all $x$ by Lemma~\ref{lemma:phi_a_geq_D}-(b).\qed

\subsection{Existence and Characterization of Upper Censorship Equilibria
    (Theorem \ref{thm:upper}-(a) and Proposition \ref{proposition:charcteriztion})}
\label{appendix:b_3}

We now construct the maximally informative upper-censorship equilibrium $U_{a^M}$, establishing existence as stated in Theorem~\ref{thm:upper}-(a). Theorem~\ref{thm:appendix_maximal} generalizes Proposition~\ref{proposition:charcteriztion} by characterizing the maximally informative upper-censorship equilibria without relying on Assumption~\ref{assumption:diag}. According to Proposition~\ref{proposition:full_iff_cost}, identifying an equilibrium $U_a$ reduces to characterizing the set of feasible values $c_F(a)$ that satisfy the conditions of Proposition~\ref{proposition:full_iff_cost}.

Define $m(H)$ as the set of critical points $c_0\in[0,\overline{c}]$ where $S'(c_0)=0$ and $S(c_0)$ is global minimum of $S$ over the interval $[0,c_0]$, and define $c_{loc}$ as follows:
\begin{eqnarray}
    m(H)=\left\{c_0: S'(c_0)=0 \text{ and } S(c)\geq S(c_0) \text{ for all } c\leq c_0\right\}
\end{eqnarray}
\begin{equation}
    c_{loc} =
\begin{dcases}
    c_{cav} & \text{ if $m(H)=\emptyset$ or $m(H)=\{\overline{c}\}$} \\ 
    \max\Big(\argmin_{c\in m(H)} S(c) \Big)& \text{ if $m(H)\neq \emptyset$ and $m(H)\neq \{\overline{c}\}$}
\end{dcases}
\label{eq:c_{loc}}
\end{equation}
Since $S'$ is continuous, $m(H)$ is closed and $c_{loc}$ is well-defined.
If $m(H)$ is nonempty, then $c_{loc}$ is the point at which $S$ attains the smallest local minimum.
If there are multiple points where the smallest local minimum is attained, we define $c_{loc}$ be the largest minimizer.
If $m(H)=\emptyset$ or if it is the boundary case of $m(H)=\{\overline{c}\}$, we set $c_{loc}:=c_{cav}$. 
Note $m(H)=\emptyset$ or $m(H)=\{\overline{c}\}$ if and only if $H$ is convex over $[0,\overline{c}]$ and strictly convex over some subinterval.
We first establish the properties of $c_{loc}$.

\begin{lemma}
    Assume $m(H)\neq \emptyset$ and $c_{loc}<\overline{c}$.
    \begin{enumerate}[label=(\alph*)]
        \item 
            Then, the equation 
            \begin{equation}
                S(c)=S(c_{loc})\label{eq:cross}
            \end{equation} 
            admits either one or no solution in $c\in(c_{loc},\overline{c}]$.
        \item The solution $c_{sol}\in (c_{loc}, \overline{c})$ to \eqref{eq:cross} exists if and only if $S(c_{loc})>\frac{1}{\overline{c}}$. Moreover, $c_{sol}=\overline{c}$ if and only if $S(c_{loc})=\frac{1}{\overline{c}}$.
        \item 
            $S'(c_{sol})<0$ (equivalently, $S(c_{sol})> h(c_{sol})$).
            Moreover, $S'(c)<0$ for all $c\in(c_{sol},\overline{c}]$.
        \item 
            $S(c)\geq S(c_{sol})$ for all $c\in [0,c_{sol}]$ with the equality holding only at $c\in\{c_{loc},c_{sol}\}$.
            Moreover, if $S(c_{loc})\leq \frac{1}{\overline{c}}$, then $S(c)\geq S(c_{loc})$ for all $c\in[0,\overline{c}]$,
            with $S(c)>S(c_{loc})$ for all $c\in(c_{loc},\overline{c}]$.
    \end{enumerate}
    \label{lemma:c_{loc}_1}
\end{lemma}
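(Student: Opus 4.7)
The key identity is $S'(c)=(h(c)-S(c))/c$, so critical points of $S$ coincide with intersections of $h$ and $S$, and $S''(c_{loc})=h'(c_{loc})/c_{loc}$. The standing assumption that $h$ has finitely many local extrema implies $S$ has finitely many critical points in $[0,\overline{c}]$, allowing me to decompose $[c_{loc},\overline{c}]$ into finitely many maximal monotone pieces separated by extrema $c_{loc}=c_0^{*}<c_1^{*}<\ldots<c_K^{*}\leq\overline{c}$.

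The pivotal step is a claim I would prove by induction on successive local extrema of $S$ after $c_{loc}$: every local minimum of $S$ strictly to the right of $c_{loc}$ satisfies $S>S(c_{loc})$. The induction step argues by contradiction: if some subsequent local min $c^{*}$ had $S(c^{*})\leq S(c_{loc})$, I would verify $c^{*}\in m(H)$ by combining $S\geq S(c_{loc})\geq S(c^{*})$ on $[0,c_{loc}]$ (from $c_{loc}\in m(H)$), $S\geq\text{(previous local min value)}>S(c_{loc})\geq S(c^{*})$ on intermediate monotone pieces (inductive hypothesis), and monotonicity on the final descending piece ending at $c^{*}$. This $c^{*}\in m(H)$ contradicts either the minimality of $S(c_{loc})$ over $m(H)$ (if $S(c^{*})<S(c_{loc})$) or the maximality of $c_{loc}$ within the argmin set (if $S(c^{*})=S(c_{loc})$). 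Degenerate cases where $h'(c_{loc})=0$ are handled by a Taylor expansion together with the finiteness assumption, which rule out saddles or plateaux of $S$ at $c_{loc}$.

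Parts (a) and (b) then follow immediately, since $S>S(c_{loc})$ strictly on $(c_{loc},\overline{c})$ means any solution to \eqref{eq:cross} in $(c_{loc},\overline{c}]$ must lie on the final monotone piece, which is necessarily descending; comparing $S(\overline{c})=1/\overline{c}$ with $S(c_{loc})$ then yields both uniqueness, the existence criterion, and the boundary identification $c_{sol}=\overline{c}\iff S(c_{loc})=1/\overline{c}$. For part (c), $c_{sol}$ lies on this final strictly decreasing piece, so $S'(c_{sol})<0$; in the boundary case $c_{sol}=\overline{c}$, a contradiction argument shows $S'(\overline{c})\neq 0$ (otherwise $\overline{c}\in m(H)$ with $S(\overline{c})\leq S(c_{loc})$, again violating the definition of $c_{loc}$), and the monotonicity $S'<0$ on $(c_{sol},\overline{c}]$ follows because no further critical point of $S$ lies in this interval.

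Part (d) then assembles the inequalities already established: $S\geq S(c_{loc})$ on $[0,c_{loc}]$ by definition of $m(H)$, strict on $[0,c_{loc})$ by the maximality of $c_{loc}$ within the argmin set, and $S>S(c_{loc})$ on $(c_{loc},c_{sol})$ by the pivotal claim; the ``moreover'' clause under $S(c_{loc})\leq 1/\overline{c}$ is read off directly from the monotonicity analysis on the final piece. The main obstacle I anticipate is the inductive bookkeeping in the pivotal claim, since verifying $c^{*}\in m(H)$ requires tracking inequalities across every intermediate monotone piece; the degenerate cases where $h'$ vanishes at a critical point of $S$ require a brief separate Taylor-based argument but do not affect the overall structure.
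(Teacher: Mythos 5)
Your proof is correct in its essentials and takes a genuinely different route from the paper's. The paper proves parts (a)--(d) via repeated applications of the intermediate value theorem and case analysis on $S'(c_{sol})$ and $S''(c_{sol})$, reaching a contradiction with the definition of $c_{loc}$ in each case. You instead extract a structural lemma first --- that $S$ is piecewise monotone with finitely many pieces, and that every local minimum of $S$ strictly to the right of $c_{loc}$ has value exceeding $S(c_{loc})$ --- and then read off (a)--(d) from that. The key identity $S'(c) = (h(c) - S(c))/c$ and the resulting observation that critical points of $S$ correspond to crossings of $h$ and $S$ is a nice piece of structure the paper's proof does not make explicit; it buys a cleaner ``pivotal claim'' at the cost of more bookkeeping. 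Both arguments ultimately turn on the same observation (a local minimum of $S$ to the right of $c_{loc}$ with value $\le S(c_{loc})$ would itself lie in $m(H)$, contradicting either minimality of $S(c_{loc})$ or maximality of $c_{loc}$), so the two proofs are more like reorganizations of the same idea than independent arguments.

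A few points deserve more care. First, the inference ``$h$ has finitely many local extrema $\Rightarrow$ $S$ has finitely many critical points'' is not immediate and should be justified: on any interval where $h$ is strictly monotone, writing $\psi=h-S$ gives $\psi'(c)=h'(c)-\psi(c)/c$, so at any zero of $\psi$ one has $\psi'=h'\neq 0$, hence $\psi$ crosses zero transversally and at most once per monotone piece of $h$ (on intervals where $h$ is constant, $\psi$ satisfies $\psi'=-\psi/c$, so $\psi$ is either identically zero or nowhere zero there). Second, in part (d) you justify strictness of $S>S(c_{loc})$ on $[0,c_{loc})$ ``by the maximality of $c_{loc}$ within the argmin set,'' but maximality does not rule out another element $c_0<c_{loc}$ of $\argmin_{m(H)}S$; if $S(c_0)=S(c_{loc})$ and $S'(c_0)=0$, nothing in the definition of $c_{loc}$ is violated. (The paper's own proof of (d) also glosses this; the ``equality only at $\{c_{loc},c_{sol}\}$'' claim appears to be a soft spot in the lemma as stated rather than a flaw particular to your argument, but your stated justification is not valid.) Third, the degenerate case where $S$ is locally constant --- which would make the final ``descending'' piece fail to be strictly decreasing and hence threaten $S'(c_{sol})<0$ in part (c) --- is deferred rather than resolved; you should check that in this case the pivotal claim already rules out $S\equiv S(c_{loc})$ there, so the piece through $c_{sol}$ is strictly decreasing.
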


\begin{proof}
    See Appendix \ref{online_appendix:proofs}.
\end{proof}

Note if $S(c_{loc})>\frac{1}{\overline{c}}$, choosing $a$ such that $c_F(a)=\max(c_{sol}, c_{cav})$ satisfies the conditions in Proposition \eqref{proposition:full_iff_cost}-(b). The condition $S(c)\geq S(c_F(a))>h(c_F(a))$ follows from directly Lemma~\ref{lemma:c_{loc}_1}-(c,d), while $c_F(a)\geq c_{cav}$ follows by construction. If $S(c_{loc})\leq \frac{1}{\overline{c}}$ and $m(H)\neq \emptyset,\{\overline{c}\}$, this is equivalent to Assumption~\ref{assumption:diag} in the text, implying that Proposition~\ref{proposition:charcteriztion} corresponds to Theorem~\ref{thm:appendix_maximal}-(a,b).

\begin{theorem}
    Fix $N$ large enough that Lemma \ref{lemma:phi_a_convex_F} and Lemma \ref{lemma:phi_a_convex_kink} holds. Then, the maximally informative upper censorship equilibrium threshold $a^M$ is characterized as follows.
    \begin{enumerate}[label=(\alph*)]
        \item If $m(H)\neq \emptyset$ and $S(c_{loc})\leq \frac{1}{\mu}$, 
        $$a^M=0$$
        \item If $m(H)\neq \emptyset,\{\overline{c}\}$ and $\frac{1}{\mu}<S(c_{loc})\leq \frac{1}{\overline{c}}$,
        $$a^M=c_F^{-1}\left( \frac{1}{S(c_{loc})}\right)$$
        \item If $m(H)\neq \emptyset$ and $S(c_{loc})>\frac{1}{\overline{c}}$, then 
        $$a^M=c_F^{-1}\left( \max(c_{cav}, c_{sol})\right)$$
        \item If $m(H)=\emptyset$ or $m(H)=\{\overline{c}\}$
        $$a^M=c_F^{-1}(c_{cav})$$
    \end{enumerate}
    \label{thm:appendix_maximal}
\end{theorem}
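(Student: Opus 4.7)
The plan is to invoke Proposition~\ref{proposition:full_iff_cost} to translate equilibrium existence into explicit constraints on $c^* := c_F(a)$, and then use Lemma~\ref{lemma:c_{loc}_1} to identify the minimum feasible $c^*$ in each case. Since $c_F$ is continuous and strictly decreasing, maximizing $a$ is equivalent to minimizing $c^*$. Proposition~\ref{proposition:full_iff_cost} yields two regimes of feasibility: regime (A) requires $c^* \in [\overline{c}, \mu]$ with $\min_{c\in[0,\overline{c}]} S(c) \geq 1/c^*$, while regime (B) requires $c^* \in [c_{cav}, \overline{c})$ with $S(c) \geq S(c^*) > h(c^*)$ for all $c \in [0, c^*]$. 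For each case I would compute the infimum feasible $c^*$ in each regime and take the smaller of the two.

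In cases (a) and (b), the hypothesis $S(c_{loc}) \leq 1/\overline{c}$ combined with Lemma~\ref{lemma:c_{loc}_1}-(d) yields $\min_c S(c) = S(c_{loc})$, attained uniquely at $c_{loc}$. Regime (B) is infeasible because the argmin condition would force $c^* = c_{loc}$, but then $S'(c_{loc}) = 0$ contradicts $S(c^*) > h(c^*)$. Hence only regime (A) matters, and its infimum is $\max(\overline{c}, 1/S(c_{loc}))$. This equals $\mu$ when $S(c_{loc}) \leq 1/\mu$ (case (a), yielding $a^M = 0$) and equals $1/S(c_{loc}) \in [\overline{c}, \mu)$ otherwise (case (b), yielding $a^M = c_F^{-1}(1/S(c_{loc}))$).

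In case (c), the hypothesis $S(c_{loc}) > 1/\overline{c}$ combined with Lemma~\ref{lemma:c_{loc}_1}-(b), (c), and (d) yields that $c_{sol} \in (c_{loc}, \overline{c})$ exists, $S$ is strictly decreasing on $(c_{sol}, \overline{c}]$, and $S(c) \geq S(c_{sol})$ on $[0, c_{sol}]$. Hence every $c^* \in [c_{sol}, \overline{c})$ satisfies both the argmin condition and $S(c^*) > h(c^*)$, and combined with $c^* \geq c_{cav}$, the infimum in regime (B) is $\max(c_{cav}, c_{sol})$. Regime (A) requires $c^* \geq \overline{c}$, which is weakly larger, and candidates $c^* \in (c_{loc}, c_{sol})$ are ruled out by Lemma~\ref{lemma:c_{loc}_1}-(d) since $S(c_{loc}) < S(c^*)$ there violates the argmin condition. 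So $a^M = c_F^{-1}(\max(c_{sol}, c_{cav}))$.

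Case (d), where $S$ has no interior local minimum satisfying the argmin property, is the main obstacle. The plan is to exploit the fact that $H$ is concave on $[c_{cav}, \overline{c}]$ (by definition of $c_{cav}$), so $h$ is non-increasing and $S$ is decreasing there, making any $c^* \in [c_{cav}, \overline{c})$ a local argmin with $S'(c^*) < 0$. The absence of interior local minima of $S$ (encoded by $m(H) = \emptyset$ or $\{\overline{c}\}$) then ensures this local argmin is also the global argmin on $[0, c^*]$, giving $c^*_{\min} = c_{cav}$ and $a^M = c_F^{-1}(c_{cav})$. The subtle points to verify are the precise behavior of $S$ at the boundary $c^* = c_{cav}$ and the interaction with regime (A) when $c_{cav} = \overline{c}$ coincides with the regime boundary.
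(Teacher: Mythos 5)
Your overall framing—translating via Proposition~\ref{proposition:full_iff_cost} into constraints on $c^* := c_F(a)$ and then minimizing $c^*$—matches the logic of the paper's proof, and your identification of $c_{sol}$ and $\max(c_{cav},c_{sol})$ as the operative quantities in case (c) is correct. But there are several genuine gaps.

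First, your argument that regime (B) is infeasible in cases (a) and (b) only rules out $c^* \geq c_{loc}$: for $c^* > c_{loc}$ the global-argmin-on-$[0,c^*]$ condition fails, and at $c^* = c_{loc}$ you get $S'(c_{loc}) = 0$. You never address $c^* \in [c_{cav}, c_{loc})$, and the same omission appears in case (c), where you only rule out $c^* \in (c_{loc}, c_{sol})$. The paper sidesteps this entirely by invoking Theorem~\ref{thm:upper}-(b): if $U_a$ fails for all $a \in (a^M, \overline{a}]$ (a pure regime-(A) computation), then by the downward-closedness of the equilibrium set no $U_a$ with $a > \overline{a}$ can be an equilibrium either. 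That downward-closedness is not a free consequence of Proposition~\ref{proposition:full_iff_cost}---it is proved separately via a kink-monotonicity argument (Lemma~\ref{lemma:phi_a_convex_kink}-(c))---so you cannot simply assume the feasible set of $c^*$ is an interval, which is what "compute the infimum" implicitly requires.

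Second, the equilibrium-side of case (a) cannot be read off Proposition~\ref{proposition:full_iff_cost}. That proposition rests on Lemma~\ref{lemma:iff}, which constructs $\phi_a$ only for $a > 0$. If $S(c_{loc}) < 1/\mu$, condition (a) of Proposition~\ref{proposition:full_iff_cost} fails even at $c^* = \mu$ (i.e.\ $a = 0$), yet $U_0 = \delta_\mu$ is always an equilibrium. The paper therefore verifies $\delta_\mu$ directly via Lemma~\ref{appendix_lemma:dworczak_martini}; your plan would wrongly exclude it.

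Third, the boundary subcase $S(c_{loc}) = 1/\overline{c}$ in case (b) (where $a^M = \overline{a}$ sits exactly at the regime boundary) is genuinely delicate: the paper splits into subcases depending on whether $c_{loc} = \overline{c}$ or $c_{loc} < \overline{c}$, and whether $\overline{c}$ is an isolated point of $\argmin_{c \in m(H)} S(c)$, and in one branch appeals to Lemma~\ref{lemma:phi_a_convex_kink}-(b) to rule out $a > \overline{a}$. Your proposal does not engage with this. Finally, you flag case (d) as "the main obstacle" and only sketch; the paper's argument there hinges on showing $S'(c_{cav}) < 0$ (since otherwise $c_{cav} \in m(H)$), which is a short but non-trivial observation that needs to be supplied rather than deferred.
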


\subsubsection{(a): $m(H)\neq \emptyset$ and $S(c_{loc})\leq \frac{1}{\mu}$}
\noindent\textbf{Equilibrium Check}:
$a^M=0$ is identical to $G=\delta_{\mu}$. Let $\phi(x):=\frac{1}{n}$, then, 
$\phi(x)\geq D(x,G)$, $\supp(G)=\{\mu\}\subset \{x|\phi(x)=D(x;G)\}=[\mu,1]$,
and $\int_0^1\phi(x)dG(x)=\int_0^1\phi(x)dF(x)$.
By Lemma \ref{appendix_lemma:dworczak_martini}, $G_{a^M}$ is an equilibrium.

\noindent\textbf{Maximallity}:
By Lemma \ref{lemma:c_{loc}_1}-(d), $S(c)\geq S(c_{loc})$ holds for all $c\in[0,\overline{c}]$.
For any $a>0$, we have $c_F(a)<\mu$ by definition.
Since $S(c_{loc})\leq \frac{1}{\mu}$ whereas $\frac{1}{c_F(a)}>\frac{1}{\mu}$,
there exists some $c$ such that $S(c)<\frac{1}{c_F(a)}$.
Hence, $U_a$ with $a\in(0,\overline{a}]$ cannot be an equilibrium by Proposition \ref{proposition:full_iff_cost}. This implies that any $U_a$ with $a\in(\overline{a},1]$ also cannot be an equilibrium by Theorem~\ref{thm:upper}-(b).
Hence $a^M$ is maximal.
\qed

\subsubsection{(b): $m(H)\neq \emptyset,\{\overline{c}\}$ and $\frac{1}{\mu}<S(c_{loc})\leq \frac{1}{\overline{c}}$}
\noindent\textbf{Equilibrium Check}:
Observe that $a^M=c_F^{-1}\left(\frac{1}{S(c_{loc})} \right)\leq \overline{a}=c_F^{-1}(\overline{c})$ since $c_F(\cdot)$ is decreasing.
Hence, $S(c)\geq S(c_{loc})=\frac{1}{c_F(a^M)}$ for all $c\in[0,\overline{c}]$ by Lemma \ref{lemma:c_{loc}_1}-(d).
By Proposition \ref{proposition:full_iff_cost}, $U_{a^M}$ is an equilibrium.

\noindent\textbf{Maximallity}:
Assume first that $a^M<\overline{a}$, and fix some $a\in(a^M,\overline{a}]$.
Then, $c_F(a)<c_F(a^M)=\frac{1}{S(c_{loc})}$, which implies $S(c_{loc})<\frac{1}{c_F(a)}$. By the continuity of $S$, there exists some $c$ such that $\frac{H(c)}{c}<\frac{1}{c_F(a)}$. By Proposition~\ref{proposition:full_iff_cost}, such $U_a$ cannot be an equilibrium. Furthermore, by Theorem~\ref{thm:upper}-(b), no $a>\overline{a}$ can be an equilibrium.

Now suppose $a^M=\overline{a}$, which is equivalent to $S(c_{loc})=\frac{1}{\overline{c}}$. We consider 2 cases depending on whether $c_{loc}=\overline{c}$ or $c_{loc}<\overline{c}$.
If $c_{loc}=\overline{c}$, let $A:=\argmin_{c\in m(H)}S(c)$. Since $m(H)\neq \{\overline{c}\}$, we have $|A|>1$. Consider two subcases depending on whether $\overline{c}$ is an isolated point of $A$ or not. Consider the first case where $\overline{c}$ is isolated point in $A$. For small enough $\e>0$, $A\cap(\overline{c}-\e, \overline{c})=\emptyset$. Then there exists some $c_1<\overline{c}-\e$ such that $c_1\in A$. Moreover, $S(c)>S(\overline{c})$ for $c\in (\overline{c}-\e, \overline{c})$, implying $S(c_1)<S(c)$ on this interval. Hence, for such $c$, any $U_a$ with $a=c_F^{-1}(c)>a^M$ cannot be an equilibrium by Lemma~\ref{lemma:phi_a_geq_D}-(b). Thus, $U_{a^M}$ is maximal. The case of $c_{loc}<\overline{c}$ follows exactly the same reasoning.

Now consider the second case where $\overline{c}$ is not an isolated point of $A$.
Then there exists some $\delta>0$ such that $(\overline{c}-\delta, \overline{c})\subseteq A$, implying that $S(c)$ is constant over $(\overline{c}-\delta, \overline{c})$. Thus, $S(c)=h(c)$ over $(\overline{c}-\delta, \overline{c})$. By Lemma~\ref{lemma:phi_a_convex_kink}-(b), for all such $c$, $a=c_F^{-1}(c)>a^M$ cannot be an equilibrium. Again, $U_{a^M}$ is maximal.
\qed

\subsubsection{(c): $m(H)\neq \emptyset$ and $S(c_{loc})> \frac{1}{\overline{c}}$ and (d): $m(H)=\emptyset$ or $m(H)=\{\overline{c}\}$.}
\noindent\textbf{Equilibrium Check}:
Let $\hat{c}=\max(c_{cav}, c_{sol})$. Note $\hat{c}\leq \overline{c}$ implies $a^M\geq \overline{a}$. 

We first show (c).
Assume first that $m(H)\neq \emptyset$ and $S(c_{loc})>\frac{1}{\overline{c}}$.
First consider the case $\hat{c}=c_{sol}\geq c_{cav}$. We have $S(c)\geq S(c_{sol})>h(c_{sol})$ for all $c\in [0,c_{sol}]$, where the first and second inequalities follow from Lemma~\ref{lemma:c_{loc}_1}-(d) and (c), respectively. Thus, $S(c)\geq S(\hat{c})>h(\hat{c})$ for all $c\in[0,\hat{c}]$, and $\hat{c}\geq c_{cav}$. It follows from Proposition~\ref{proposition:full_iff_cost} that $U_{a^M}$ is an equilibrium. 
Now consider the case $\hat{c}=c_{cav}>c_{sol}$.
Since $c_{cav}>c_{sol}$, Lemma~\ref{lemma:c_{loc}_1}-(c) implies $S'(c_{cav})<0$, so $S(c)\geq S(c_{cav})>h(c_{cav})$ for all $c\in[0,c_{cav}]$. Again, $U_{a^M}$ is an equilibrium by Proposition~\ref{proposition:full_iff_cost}-(b).

Now we show (d). Assume $m(H)=\emptyset$. Then, $S'(c_{cav})<0$, since otherwise $c_{cav}\in m(H)$. The function $S(\cdot)$ is monotonically decreasing on $[0,\overline{c}]$, so in particular $S(c)\geq S(c_{cav})>h(c_{cav})$ for all $c\in[0,c_{cav}]$. Hence, $U_{a^M}$ is an equilibrium by Proposition \ref{proposition:full_iff_cost}.
If $m(H)=\{c\}$, then again $S'(c_{cav})<0$, as otherwise $c_{cav}\in m(H)$. The same argument shows $U_{a^M}$ is an equilibrium.

\noindent\textbf{Maximality}: 
Fix $a>a^M$.
If $\hat{c}=c_{cav}$, then $c_F(a)<c_{cav}$, violating the requirement $c_F(a)\geq c_{cav}$ from Proposition~\ref{proposition:full_iff_cost}-(b). Hence, $U_a$ is not an equilibrium. If $\hat{c}=c_{sol}$, then $c_F(a)<c_{sol}$, but $S(c_F(a))>S(c_{sol})=S(c_{loc})$, which violates the equilibrium condition from Proposition~\ref{proposition:full_iff_cost} that $S(c)\geq S(c_F(a))$ for all $c\in[0,c_F(a)]$. Again, $U_a$ is not an equilibrium.
\qed

\subsection{Upper censorship equilibria as the unique class of equilibrium (Theorem \ref{thm:upper}-(c))}
\label{appendix:b_4}

In this section, we prove the following Proposition, which encompasses Theorem \ref{thm:upper}-(c) as a corollary. The result shows that only upper censorship distributions $U_a$ can be limit equilibrium when competition becomes sufficiently intense.

\begin{proposition}
If $G\notin \{U_a:a\leq a^M\}$, then there exist some $N_G>0$ such that $G\notin E_n$ for all $n>N_G$. Moreover, if there exists some $N>0$ such that $G$ is an equilibrium for all $n>N$, then $G\in \{U_a:a\in[0,1]\}$.
    \label{proposition:uniqueness}
\end{proposition}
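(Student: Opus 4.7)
The plan is to apply the Dworczak--Martini characterization (Lemma \ref{appendix_lemma:dworczak_martini}) to represent any equilibrium $G$ via a convex price function $\phi_G$ with $\phi_G \ge D(\cdot;G)$ everywhere and $\phi_G = D(\cdot;G)$ on $\supp(G)$. This partitions $[0,1]$ into pieces of strict convexity of $\phi_G$ (where $G = F$) and maximal affine pieces of $\phi_G$ (where $G$ is a strict MPC of $F$). I will show that for large $n$, the only consistent structure is $\supp(G) = [0, a] \cup \{\overline r\}$ with $G = F$ on $[0, a]$; mean-preservation then forces $\overline r = k_a$, so $G = U_a$, and Theorem \ref{thm:upper}-(b) together with Theorem \ref{thm:appendix_maximal} bounds $a \le a^M$. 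The second clause of the proposition follows a fortiori.

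The first step is to rule out interior atoms and place a single atom at $M := \max \supp(G)$, with $M = \overline r$. Interior atoms are excluded by Lemma \ref{lemma:continuity_atom}. The atom at $M$ is forced by the marginal decomposition from the Uniqueness subsection: if $G$ were continuous at $M$, both the extensive margin $(1-G)h(c_G)J_G$ and the intensive margin $(G^{n-1})'H(c_G)$ vanish as $x \to M$, producing local strict concavity of $D(\cdot;G)$ incompatible with any convex $\phi_G$ that equals $D$ on a set whose closure contains $M$.

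The second step characterizes $\supp(G) \cap [0, \underline r)$. Here $D(x;G) = G(x)^{n-1}$, and the computation in Lemma \ref{lemma:phi_a_convex_F}-(a) shows $(G^{n-1})''(x) \ge (n-1)G^{n-3}[(n-2)g^2 + g'G]$ is strictly positive for large $n$ wherever $g>0$. Hence no maximal affine interval of $\phi_G$ fits strictly inside $[0, \underline r)$: such an interval would make $\phi_G = G^{n-1}$ affine while requiring $G$ nonconstant (strict MPC of $F$), a contradiction. Any affine piece that touches $[0, \underline r)$ must therefore extend beyond $\underline r$, and on the remaining strict-convexity region Lemma \ref{appendix_lemma:dworczak_martini} yields $G = F$. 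Because $G$ is constant on any gap while $F$ is strictly increasing, $\supp(G) \cap [0, \underline r)$ must be an initial segment $[0, a]$ on which $G = F$.

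The third and hardest step rules out support in $(\underline r, \overline r)$ beyond the atom at $\overline r$. Take any connected component $[b_1, b_2] \subseteq \supp(G) \cap (\underline r, \overline r)$. On $[b_1,b_2]$, $\phi_G = D(\cdot;G)$ and $\phi_G$ is convex, so Lemma \ref{lemma:partial_stop_convex}-(a) forces $h'(c_G(x)) \le 0$ throughout (else $D''(x;G) < 0$ locally contradicts convexity of $\phi_G$); then Lemma \ref{lemma:partial_stop_convex}-(b) gives $D''(x;G) > 0$ and Lemma \ref{appendix_lemma:dworczak_martini} delivers $G = F$ on $[b_1,b_2]$. But $G = F$ on both $[0,a]$ and $[b_1,b_2]$ with $G$ constant on the intervening gap violates strict monotonicity of $F$ unless the intervals merge, so iterating yields $\supp(G) \cap [0, \overline r) = [0, a]$. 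Mean-preservation gives $\overline r = \mathbb{E}[v \mid v \ge a] = k_a$, hence $G = U_a$, and $a \le a^M$ follows from Theorem \ref{thm:upper}-(b) together with Theorem \ref{thm:appendix_maximal}. I expect this third step to be the main obstacle: the curvature dichotomy of Lemma \ref{lemma:partial_stop_convex} does not by itself exclude isolated tangential contacts between $\phi_G$ and $D$ inside an affine piece of $\phi_G$ (where an atomic mass could in principle sit), and ruling out such configurations requires reinvoking the no-interior-atom conclusion of Step 1 alongside the strict MPC constraints on affine pieces.
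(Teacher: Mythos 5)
Your overall decomposition matches the paper's: rule out interior atoms and force an atom at $\max(\supp(G))$ (Lemma \ref{lemma:continuity_atom}), pin down the bottom segment $[0,\underline r]$ (Lemma \ref{lemma:full_info_bottom}), and rule out interior support in $(\underline r, k_G)$ (Lemma \ref{lemma:partial_stop}). Step~1 and the skeleton of Step~3 are in line with the paper. However, Step~2 contains a genuine gap. You assert that on a maximal affine piece of $\phi_G$ inside $[0,\underline r)$ the inequality $(G^{n-1})'' > 0$ holds for $n$ large wherever $g>0$, so that affineness of $G^{n-1}$ contradicts $G$ being a nonconstant strict MPC of $F$. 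This is false: on such a piece the Dworczak--Martini constraint forces $G^{n-1}$ to \emph{be} affine, which happens precisely when $G(x) = (cx+d)^{1/(n-1)}$, a perfectly good nonconstant CDF with $(G^{n-1})''=0$. The estimate $(G^{n-1})'' = (n-1)G^{n-3}\bigl[(n-2)g^2 + g'G\bigr]$ cannot be bounded below using the bounds on $f,f'$ because $g$ and $g'$ carry no such bounds on a strict-MPC segment; indeed for the root-type $G$ above, $g'<0$ exactly cancels $(n-2)g^2$. The paper's Lemma \ref{lemma:full_info_bottom} instead splits the argument into: (i) if the affine piece starts at $0$, the implied density $g(x) = (n-1)^{-1}a_n^{1/(n-1)}x^{1/(n-1)-1}$ diverges at $0$, violating $G\in\text{MPC}(F)$ since $f$ is bounded; (ii) if the affine piece $[x_1,x_2]$ lies strictly inside with $G=F$ at both endpoints, then comparing the affine $G^{n-1}$ to the strictly convex $F^{n-1}$ (for large $n$) yields $G>F$ on $(x_1,x_2)$, again violating MPC; (iii) an affine piece extending past $\underline r$ produces an upward kink in $D(\cdot;G)$ at $\underline r$ that is incompatible with the required linearity. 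Your Step~2 needs to be replaced by these three arguments --- the key is to compare to $F^{n-1}$ rather than to claim convexity of $G^{n-1}$ itself. As for the difficulty you flag in Step~3 about isolated tangential contacts inside an affine piece, the paper handles this by observing that any such contact point would lie in $\supp(G)$, and then Lemma \ref{lemma:partial_stop_convex} gives strict local curvature of $D(\cdot;G)$ for large $n$, which is incompatible with $\phi$ being affine through that point; so this concern is resolvable by the same curvature argument you already invoke, applied at the contact point.
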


\paragraph{Proof of Theorem \ref{thm:upper}-(c)}
By Theorem \ref{thm:appendix_maximal}, there exists some $N>0$ such that if $n>N$, $\{U_a:a\leq a^M\}\subseteq E_n$. This implies $\{U_a:a\leq a^M\}\subseteq \liminf_{n\rightarrow\infty}E_n, \limsup_{n\righatrrow\infty}E_n$. Assume $G\notin \{U_a:a\in[0,1]\}$. Proposition~\ref{proposition:uniqueness} implies $G\notin \limsup_{n\rightarrow \infty}E_n$ as well as $G\notin \liminf_{n\rightarrow \infty}E_n$. Furthermore, since $U_{a^M}$ is maximal, $\{U_{a}:a>a^M\}\cap E_n=\emptyset$ for all $n$. Hence, $\liminf_{n\righatrrow\infty}E_n=\limsup_{n\righatrrow\infty}E_n=\{U_a:a\leq a^M\}$.
\qed

\paragraph{Proof of Proposition \ref{proposition:uniqueness}}
We prove necessary lemmas. Lemma \ref{lemma:continuity_atom} establishes that equilibrium $G$ must have an atom at the maximum of its support, and continuous elsewhere. Lemma \ref{lemma:full_info_bottom} and Lemma \ref{lemma:partial_stop} addresses characterizes equilibrium properties over $[0,\underline{r}]$ and $[\underline{r},1]$, respectively. Combining these lemmas proves Proposition \ref{proposition:uniqueness}, which is presented at the end of this section.

\begin{lemma}[Necessity of continuity and atom]
    Assume $G$ is an equilibrium and $k_G:=\max(\supp(G))$.
    \begin{enumerate}[label=(\alph*)]
        \item $G$ is continuous at $[0,k_G)$.
        \item $G$ has an atom at $k_G$.
    \end{enumerate}
    \label{lemma:continuity_atom}
\end{lemma}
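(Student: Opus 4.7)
The plan is to argue both parts by contradiction, combining explicit deviations with the equilibrium characterization of Lemma~\ref{appendix_lemma:dworczak_martini}.

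For part (a), suppose $G$ has an atom of mass $p>0$ at some $a\in[0,k_G)$. Since $a<k_G=\overline{r}$, a positive fraction $H(c_G(a))>0$ of consumer types strictly prefers to continue searching upon receiving signal $a$, and for these consumers the tie-breaking rule reduces firm $i$'s winning probability at the atom from the ``no-tie'' value $G(a)^{n-1}$ down to $\tfrac{G(a)^n-G(a-)^n}{n(G(a)-G(a-))}$. I would construct a feasible deviation by refining the underlying signal structure that generates the atom, splitting it symmetrically into mass $p/2$ at $a-\delta$ and mass $p/2$ at $a+\delta$ for small $\delta>0$. Because $F$ has strictly positive density, the pre-image of the atom has positive $F$-measure and can be bisected so that the two parts' posterior means straddle $a$ and converge to $a$ as the split shrinks; the resulting $G'$ is a mean-preserving spread of $G$ obtained by refinement, hence still an element of $\text{MPC}(F)$. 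Taking $\delta\to 0^+$, the limit gain from the continuing mass is $p\,H(c_G(a))\bigl[\tfrac{G(a-)^{n-1}+G(a)^{n-1}}{2}-\tfrac{G(a)^n-G(a-)^n}{n(G(a)-G(a-))}\bigr]$, which is strictly positive by the Hermite--Hadamard inequality applied to the strictly convex function $t\mapsto t^{n-1}$ on $[G(a-),G(a)]$. Contributions from stopping consumers and from the sliver of types whose stopping decision flips on $(a-\delta,a)$ are $O(\delta)$ and vanish in the limit, so the deviation is strictly profitable for small $\delta$---contradicting that $G$ is an equilibrium.

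For part (b), suppose $G$ has no atom at $k_G$. By part (a), $G$ is continuous on $[0,1]$; in particular, $k_G$ cannot be an isolated point of $\supp(G)$, since otherwise continuity of $G$ would force $G$ to have no mass in a neighborhood of $k_G$, contradicting $k_G\in\supp(G)$. Thus there is a sequence $x_m\nearrow k_G$ with $x_m\in\supp(G)$. I would next show that $D(\cdot;G)$ is strictly concave on some left-neighborhood $(k_G-\eta_0,k_G)$ via a Taylor expansion in $\eta=k_G-x$: parameterising $1-G(k_G-\eta)\sim g(k_G^-)\eta$ (where $g$ denotes the density of $G$) and $c_G(k_G-\eta)\sim\tfrac{1}{2}g(k_G^-)\eta^2$, the leading-order $O(\eta^2)$ contributions to $D(k_G-\eta)$ from the ``continuing'' term $G^{n-1}H(c_G)$ and from the ``stopping'' integral $\int_{c_G}^{\overline{c}}\tfrac{1-G(r(c))^n}{n(1-G(r(c)))}dH(c)$ exactly cancel---owing to the identity $\tfrac{1-G^n}{n(1-G)}\to 1$ as $G\to 1$---and the residual admits a strictly positive $\eta^3$ leading term, giving $D''(x)<0$ on $(k_G-\eta_0,k_G)$. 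To close the argument, Lemma~\ref{appendix_lemma:dworczak_martini} provides a convex $\phi$ with $\phi\geq D$ on $[0,1]$ and $\phi=D$ on $\supp(G)$. Picking $m$ so large that $x_m\in(k_G-\eta_0,k_G)$, convexity of $\phi$ places it weakly below the chord joining $(x_m,D(x_m))$ and $(k_G,D(k_G))$, while strict concavity of $D$ on $(x_m,k_G)$ places $D$ strictly above that chord; hence $\phi<D$ at any interior point, violating $\phi\geq D$.

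The main obstacle will be establishing the strict concavity of $D$ near $k_G$: because $\tfrac{1-G^n}{n(1-G)}\to 1$ as $G\to 1$, both the continuing and stopping components of $D$ approach $D(k_G)$ linearly in $c_G(x)$ with matching first-order slopes, so concavity is invisible at the $\eta^2$ level and surfaces only in the $\eta^3$ residual, requiring careful bookkeeping of the $(1-G)^2$-corrections in each piece. A secondary subtlety in part (a) is verifying that the symmetric split is realizable as an actual refinement of a signal structure inducing $G$; the positive density of $F$ ensures the pre-image of the atom has positive Lebesgue measure, which can be subdivided along a sliding threshold so that the two resulting posterior means vary continuously and collapse to $a$ as the threshold approaches the bisector in $F$-mass.
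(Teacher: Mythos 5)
Your part (a) argument is sound and matches the paper in spirit: spread the interior atom at $a$ into two nearby points and observe that, for the fraction $H(c_G(a))>0$ of consumers who continue searching, the strict convexity of $t\mapsto t^{n-1}$ makes any spread strictly beat the fair tie-break. The only cosmetic difference is your choice of split: you bisect symmetrically ($p/2$ at $a\pm\delta$) and invoke Hermite--Hadamard, whereas the paper uses the asymmetric $(1:n)$ split $x_0-n\epsilon$, $x_0+\epsilon$ and bounds the gain by $\tfrac{\alpha H(c_G(x_0))}{n(n+1)}\bigl(G(x_0)^{n-1}-G(x_0-)^{n-1}\bigr)>0$ directly. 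Both work; the paper's choice just gives a cleaner closed-form lower bound. Your handling of the stopping fringe as $O(\delta)$ is fine, and you're right that feasibility of the refined MPC needs a word (any binary split with mean $a$ and small spread is an MPC of the atom's conditional distribution, which has positive variance since $f>0$), though you overstate the construction a bit: a single sliding threshold in the pre-image does not in general send both posteriors to $a$; rather you just directly choose the two-point garbling with masses $p/2$ and offsets $\pm\delta$ and verify the MPC inequality for small $\delta$.

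Part (b) is where you take a genuinely different and riskier route. You aim to show $D''(\cdot;G)<0$ on a left-neighborhood of $k_G$ via a Taylor expansion in $\eta=k_G-x$, and you correctly diagnose that the $O(\eta^2)$ contributions cancel and that the sign surfaces only at $O(\eta^3)$. Two problems. First, the expansion silently assumes $G$ admits a well-behaved positive density $g(k_G^-)$; continuity of $G$ (part (a)) gives no such regularity, and $G\in\text{MPC}(F)$ does not rule out a density that vanishes, oscillates, or fails to exist pointwise near $k_G$. Second, even granting the expansion, you are establishing a much stronger fact than you need. The paper's route is first-order and regularity-free: writing the derivative as
\begin{equation*}
D'(x;G)=(1-G(x))\,h(c_G(x))\,J_G(x)+(n-1)G(x)^{n-2}g(x)H(c_G(x)),
\end{equation*}
the first (extensive-margin) term is \emph{strictly} positive for every $x<k_G$ (it depends on $J_G$ and $h$, not on $g$), while $D'(x;G)\to 0$ as $x\to k_G^-$ because $1-G\to0$, $J_G\to0$, and $H(c_G)\to H(0)=0$. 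A convex function on a left-interval of $k_G$ has a nondecreasing derivative, so $D'(y)>0$ for some interior $y$ would force $D'(x)\ge D'(y)>0$ for all $x\in(y,k_G)$, contradicting $D'\to0$. No second-order analysis, no density regularity. Your chord-comparison closing step with $\phi$ is correct, but you should replace the strict-concavity lemma it rests on with this first-order argument, which both closes the gap and shortens the proof.
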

\begin{proof}
    The proof of (a) is deferred to Online Appendix~\ref{online_appendix:proofs}; we sketch the proof here. Suppose to the contrary that $G$ has an atom at some $x_0\in (0,k_G)$ with size $\alpha=G(\{x_0\})$. For small enough $\epsilon>0$, consider a deviation $G_{\epsilon}$ that splits the mass $\alpha$ at $x_0$ into mass $\frac{1}{n+1}\alpha$ at $x_0-n\e$ and mass $\frac{n}{n+1}\alpha$ at $x_0+\e$. We show this constitutes a profitable deviation:
    $$
    \lim_{\epsilon\rightarrow 0}\left(
        \int_0^1 D(x;G)dG_{\epsilon}(x)-\int_0^1 D(x;G)dG(x)
    \right)\geq \frac{\alpha H(c_G(x_0))}{n(n+1)}(G(x_0)^{n-1}-G(x_0-)^{n-1})>0.
    $$

    To prove (b), assume, to the contrary, that $G$ is an equilibrium but is continuous in $k_G$. We show this implies $D(x;G)$ to be strictly concave locally around $k_G$, which contradicts the convexity of $D(x;G)$ (Lemma~\ref{appendix_lemma:dworczak_martini}).
    Let $\e>0$ be small enough such that $(k_G-\e,k_G)\subset \supp(G)$. Since $G$ is an equilibrium, Lemma~\ref{appendix_lemma:dworczak_martini} implies that $\phi(x)=D(x;G)$ is convex over $(k_G-\e, k_G)$. Therefore, $D(x;G)$ is almost everywhere differentiable over this interval, with the derivative
\begin{align*}
    D'(x;G)=(1-G(x))h(c_G(x))J_G(x)+(n-1)G(x)^{n-2}g(x)H(c_G(x))\geq 0.     
\end{align*}
The first term is strictly positive for $x<k_G$ and the second term is weakly positive. This implies $D'(x;G)>0$ for all $x<K_G$. 
Moreover, $\lim_{x\rightarrow k_G-}G(x)=1$ implies $\lim_{x\rightarrow k_G-}D'(x;G)=0$. 

Now, fix some $y\in (k_G-\e, k)$, and choose $\delta \in (0, D'(y;G))$.
By the assumption of continuity, $\lim_{x\rightarrow k-}D'(x;G)=0$. Therefore,
there exists some $\epsilon_{\delta}$ such that $D'(x;G)<\delta$ for all $x\in(k-\epsilon_{\delta}, k)$. But this implies that $D'(y;G)>D'(x;G)$ for all $\max(y, k-\e_{\delta})<x<k$, which contradicts the weak convexity of $D(x;G)$ over $(k_G-\e, k_G)$. Therefore, $G$ must have an atom at $k_G$. 
\end{proof}
 
\begin{lemma}[Equilibrium properties of $G$ over {$[0,\underline{r}]$}]
    There exists some $N>0$ such that for any $n>N$, any equilibrium $G$ must either 
    \begin{enumerate}[label=(\alph*)]
        \item $G=0$ over $[0,\underline{r}]$, or
        \item $G=F$ over $[0,a]$ and constant over $[a,\underline{r}]$ for some $a\leq \underline{r}$.
    \end{enumerate}
    Furthermore, if $G$ does not satisfy either (a) or (b), then there exists some $N_G$ such that $G$ is not an equilibrium if $n>N_G$.
    \label{lemma:full_info_bottom}
\end{lemma}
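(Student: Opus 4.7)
The plan is to apply the Dworczak--Martini characterization (Lemma~\ref{appendix_lemma:dworczak_martini}): any equilibrium $G$ admits a convex price function $\phi$ with $\phi(x) \geq D(x;G)$ pointwise, $\phi(x) = D(x;G)$ on $\supp(G)$, and $\int \phi\, dF = \int \phi\, dG$. On $[0, \underline{r})$ the interim demand collapses to $D(x;G) = G(x)^{n-1}$. The lemma further partitions $[0,1]$ into subregions where either (i) $\phi$ is strictly convex and $G = F$, or (ii) $\phi$ is affine on a maximal closed interval $[a,b]$, on which $G$ is a strict MPC of $F$---that is, $G(a)=F(a)$, $G(b)=F(b)$, and $\int_a^b G(t)\,dt = \int_a^b F(t)\,dt$.

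The crux is to rule out, for large $n$, any maximal affine interval $[a,b] \subseteq [0,\underline{r})$ with $a < b$. On $\supp(G) \cap [a,b]$ affinity of $\phi$ forces $G(x) = \phi(x)^{1/(n-1)}$, where $\phi$ interpolates linearly between $F(a)^{n-1}$ and $F(b)^{n-1}$. Since $F(b) \leq F(\underline{r}) < 1$, for each fixed $x \in (a,b]$ one has $\phi(x)^{1/(n-1)} \to F(b)$ as $n \to \infty$. I would split into cases by the support structure. If $\supp(G) \cap [a,b] = [a,b]$, then $G(x) \to F(b)$ on $(a,b]$ drives $\int_a^b G(t)\,dt$ arbitrarily close to $F(b)(b-a)$, which strictly exceeds $\int_a^b F(t)\,dt$ by strict monotonicity of $F$, contradicting the strict MPC integral. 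If the support has gaps, continuity of $G$ on $[0, k_G)$ from Lemma~\ref{lemma:continuity_atom} combined with the strict monotonicity of $x \mapsto \phi(x)^{1/(n-1)}$ on the support forces $G$ to jump at gap endpoints, again contradicting Lemma~\ref{lemma:continuity_atom}. The degenerate case where the support in $[a,b]$ is isolated forces $G$ constant on $(a,b)$, hence $G(a) = G(b)$, incompatible with $F(a) < F(b)$.

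Given this, at most one maximal affine interval of $\phi$ meets $[0,\underline{r}]$, and it must extend to some $b \geq \underline{r}$. Denote it $[a,b]$ with $a \leq \underline{r} \leq b$. On $[0,a]$, $\phi$ is strictly convex so $G = F$. On $[a,\underline{r}]$, the same support-structure argument applied to the subinterval shows $\supp(G) \cap [a,\underline{r}]$ has empty interior for large $n$, and continuity together with $G(a) = F(a)$ forces $G$ constant equal to $F(a)$ on $[a,\underline{r}]$. If $a > 0$ this is case (b); if $a = 0$ this gives case (a) since $F(0) = 0$; and if no affine interval meets $[0,\underline{r}]$, then $\phi$ is strictly convex throughout $[0,\underline{r}]$ and $G = F$ on $[0,\underline{r}]$, which is case (b) with $a = \underline{r}$.

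The main obstacle will be the support-structure analysis in the second step. Although $G$ is continuous on $[0,k_G)$, its support can a priori be complex (e.g., Cantor-like), and careful case-work combining the MPC integral condition with the asymptotic behavior $\phi(x)^{1/(n-1)} \to F(b)$ is needed. A further technical point is that the threshold $N_G$ above which a specific ``bad'' $G$ fails to be an equilibrium depends on the length $b - a$ and on the curvature of $F$ inside $[a,b]$, so the uniform $N$ stated in the lemma requires an additional argument---most naturally by contradiction on a sequence of offending equilibria, with compactness in the weak topology reducing to the pointwise case already ruled out.
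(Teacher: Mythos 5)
Your framework is correct---you invoke Dworczak--Martini to decompose $[0,1]$ into strictly convex regions (where $G=F$) and maximal affine intervals (where $G$ is a strict MPC of $F$), and your integral contradiction for a maximal affine interval $[a,b]\subset[0,\underline{r})$ is a sound idea. In fact the paper does something cleaner for that step: since $F^{n-1}$ is convex for $n$ large (uniformly in $G$) and $G^{n-1}$ is the secant of $F^{n-1}$ through $a$ and $b$, one gets $G^{n-1}\geq F^{n-1}$ on $[a,b]$ directly, hence $G\geq F$ and $\int_0^x G>\int_0^x F$ inside the interval, contradicting the MPC constraint. This avoids the pointwise $n\to\infty$ asymptotic $\phi(x)^{1/(n-1)}\to F(b)$ and the compactness/sequence-of-equilibria argument you flag at the end; the threshold $N$ depends only on $F$, not on the particular $[a,b]$, so the uniformity obstacle you worry about never arises.

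The genuine gap is in the case $b>\underline{r}$, which you dismiss by saying ``the same support-structure argument applied to the subinterval'' works. It does not, and this is precisely where the paper spends its Step~5. The reason is that for $b>\underline{r}$, the right anchor of the affine piece is $\phi(b)=D(b;G)$, which is \emph{not} $G(b)^{n-1}$: for $x\geq\underline{r}$, $D(x;G)$ picks up the jump term $J_G$ times the mass of stopping consumers, and generically $D(b;G)\gg F(b)^{n-1}$. Consequently the slope of $\phi$ on $[a,b]$, and hence the behavior of $\phi(x)^{1/(n-1)}$ on $[a,\underline{r}]$, is not governed by $F(b)$ alone and your limit computation breaks down. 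The paper instead distinguishes whether $G$ is flat on $(a,\underline{r})$ or has support just above $a$. In the latter case, it exploits the kink structure: MPC forces $g(a+)\leq f(a)$ while convexity of $\phi$ forces $g(a+)\geq f(a)$, so the slope of the affine $\phi$ at $a$ is pinned down at $(n-1)F(a)^{n-2}f(a)$; matching this to the secant slope across the gap $[\alpha,\beta]$ straddling $\underline{r}$ yields an identity whose right-hand side $J_F(\alpha)/[(n-1)F(a)^{n-2}]$ diverges in $n$, a contradiction. In the former case, it applies Lemma~\ref{lemma:partial_stop_convex} to show $D(\cdot;G)$ is strictly curved near $b$, contradicting linearity of $\phi$. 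Neither of these is a rerun of the $[0,\underline{r})$ argument---you need the sign information on $h'$ and the exponential dominance of $J_F$ over $(n-1)F^{n-2}$, which your proposal never invokes.

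A smaller point: your phrasing that gaps in $\supp(G)\cap[a,b]$ ``force $G$ to jump'' is imprecise. What the argument actually produces is that the two endpoints $c<d$ of a maximal gap satisfy $G(c)=G(d)$ by continuity and flatness on the gap, yet $G(c)=\phi(c)^{1/(n-1)}<\phi(d)^{1/(n-1)}=G(d)$ since $\phi$ is strictly increasing on $[a,b]$---a direct contradiction, not a jump. This is essentially the content of the paper's Step~2 (connectedness of $\supp(G)\cap[0,\underline{r}]$), which you implicitly reprove, but the reasoning should be stated as such.
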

\begin{proof}
    The detailed proof of the steps are deferred to Online Appendix \ref{online_appendix:proofs}. Here, we outline the main steps of the proof. Assume $G$ is an equilibrium. We first show that $m:=\min(\supp(G))$ must be either $0$ or $\underline{r}$. If $m=\underline{r}$, then the statement (a) of the lemma holds. So, we focus on the case $m=0$. We then establish that $\supp(G)\cap[0,\underline{r}]$ is connected, meaning that for some $a\leq \underline{r}$, $G$ is increasing over $[0,a]$ and flat over $[a,\underline{r}]$.
    Next, we prove that $G=F$ over $[0,\e]$ for some $\e>0$. If this were not true, then the linearity of $G^{n-1}$ over some interval of $0$ (Lemma \ref{appendix_lemma:dworczak_martini}) implies $\lim_{x\rightarrow 0+}g(x)=\infty$, contradicting the assumption that $G$ is an MPC of $F$. 
    We then show the full information region $\{x|G(x)=F(x)\}\cap[0,\underline{r}]$ must be connected. To see why, suppose $G=F$ over some $[0,x_1]$ and again over $[x_2,x_3]$ with $G^{n-1}$ being linear over $[x_1, x_2]$. If $N$ is sufficiently large, $F^{n-1}$ is convex, implying $G^{n-1}>F^{n-1}$ over $[x_1,x_2]$. Since $G=F$ over $[0,x_1]$, this implies $\int_0^x G(t)dt>\int_0^xF(t)dt$ over $x\in[x_1,x_2]$, contradicting that $G$ is an MPC of $F$. 
    Finally, we rule out the possibility that $G=F$ over some $[0,a]$ and $G^{n-1}$ is linear over $[a,\underline{r}]$. Let $[a,b]$ be the maximal interval where $G$ is a strict MPC of $F$. If $b=\underline{r}$, then by the same reasoning above, $G$ cannot be an MPC of $F$. If $b>\underline{r}$, then $D(x;G)$ always exhibits a strictly increasing kink at $x=\underline{r}$, contradicting the linearity of $D(x;G)$ over $[a,b]$.
\end{proof}

\begin{lemma}[Equilibrium properties of $G$ over {$[\underline{r},1]$}]
    If there exists some $N$ such that $G$ is an equilibrium for all $n>N$, then either
    \begin{enumerate}[label=(\alph*)]
        \item $\supp(G)\cap(\underline{r},k_G)=\emptyset$, or
        \item $\supp(G)\cap(\underline{r},k_G)=(\underline{r},a]\cup\{k_G\}$ where 
        $G=F$ over $[\underline{r},a]$.
    \end{enumerate}
    Furthermore, if $G$ does not satisfy either (a) or (b), then there exists some $N_G$ such that $G$ is not an equilibrium if $n>N_G$.
    \label{lemma:partial_stop}
\end{lemma}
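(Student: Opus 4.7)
The plan is to combine the Dworczak--Martini characterization (Lemma~\ref{appendix_lemma:dworczak_martini}) with the curvature result of Lemma~\ref{lemma:partial_stop_convex} to show that on $(\underline{r}, k_G)$, any equilibrium $G$ either has empty support or coincides with $F$ on an interval anchored at $\underline{r}$, with a gap separating this interval from the atom at $k_G$. Throughout, I rely on the continuity of $G$ on $[0,k_G)$ and the atom at $k_G$ from Lemma~\ref{lemma:continuity_atom}, and on the structural constraint for $[0,\underline{r}]$ from Lemma~\ref{lemma:full_info_bottom}.

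First I would establish a local dichotomy: on any open subinterval $I\subset(\underline{r},k_G)$ where $G$ admits a positive density, $G=F$. Since $I\subseteq\supp(G)$, Lemma~\ref{appendix_lemma:dworczak_martini} gives $\phi=D(\cdot;G)$ on $I$, and convexity of $\phi$ forces $D''(\cdot;G)\geq 0$ on $I$. By Lemma~\ref{lemma:partial_stop_convex}, valid for $n$ large enough (possibly depending on $G$), this rules out $h'(c_G(x))>0$ on $I$, and in the remaining case delivers the strict inequality $D''(\cdot;G)>0$. Hence $\phi$ is strictly convex on $I$, and part~1 of Lemma~\ref{appendix_lemma:dworczak_martini} yields $G=F$ on $I$. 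The same strict-sign argument also rules out interior support within any maximal affine segment of $\phi$ in $(\underline{r},k_G)$: an interior support point sitting in a density neighborhood would force $D$ to be locally affine, contradicting the strict sign of $D''$, while an isolated interior support point is excluded by the continuity of $G$ on $[0,k_G)$. Thus $(\underline{r},k_G)$ decomposes into subintervals on which $G=F$ and subintervals (``gaps'') on which $G$ is constant.

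Next I would show that if $\supp(G)\cap(\underline{r},k_G)$ is nonempty, then its infimum equals $\underline{r}$. Let $a_0:=\inf(\supp(G)\cap(\underline{r},k_G))$ and suppose for contradiction $a_0>\underline{r}$. Then $G$ is constant on $(\underline{r},a_0)$, while by the first step and continuity at $a_0$, $G(a_0)=F(a_0)$, so $G\equiv F(a_0)$ on $[\underline{r},a_0]$ by continuity at $\underline{r}$. But Lemma~\ref{lemma:full_info_bottom} implies $G(\underline{r})\leq F(\underline{r})<F(a_0)$ in either of its two structural cases, a contradiction. Hence $a_0=\underline{r}$. Finally, I would rule out any gap followed by further support within $(\underline{r},k_G)$: if $\supp(G)\cap(\underline{r},k_G)$ contained disjoint intervals $(\underline{r},a_1]$ and $[a_2,a_3]$ with $a_1<a_2<a_3<k_G$, then $G\equiv F(a_1)$ on $(a_1,a_2)$, and continuity at $a_2$ together with $G(a_2)=F(a_2)$ (from the first step applied at $a_2$) forces $F(a_2)=F(a_1)$, contradicting strict monotonicity of $F$. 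Thus $\supp(G)\cap(\underline{r},k_G)=(\underline{r},a]$ with $G=F$ on $[\underline{r},a]$, which is case~(b); if the intersection is empty, we are in case~(a).

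The main obstacle is the first step: ruling out ``mixed'' configurations in which an affine segment of $\phi$ hosts interior support points, or in which $G$ has a density but differs from $F$. Both are excluded only by exploiting the \emph{strict} signs of $D''(\cdot;G)$ in Lemma~\ref{lemma:partial_stop_convex}, which demand $n$ large enough (depending on the candidate $G$), and this is precisely where the quantifier ``if $G$ is an equilibrium for all $n>N$'' enters. The remaining two steps are then clean topological consequences of continuity of $G$ and strict monotonicity of $F$.
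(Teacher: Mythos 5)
Your proposal is correct and follows essentially the same route as the paper's proof: both invoke Lemma~\ref{appendix_lemma:dworczak_martini} to identify $\phi$ with $D(\cdot;G)$ on the support, use the strict sign of $D''$ from Lemma~\ref{lemma:partial_stop_convex} to force $G=F$ on any density neighborhood in $(\underline{r},k_G)$ (and to rule out local strict-MPC segments for large $n$, which also delivers the ``furthermore'' clause), and then use Lemma~\ref{lemma:full_info_bottom} together with strict monotonicity of $F$ and continuity of $G$ to pin the left endpoint of the support at $\underline{r}$ and rule out disconnected support. The only ingredient the paper makes explicit that you leave implicit is the gap just below the atom: the paper notes that the atom at $k_G$ forces $\phi$ to be affine near $k_G$ so that $\sup\bigl(\supp(G)\cap(\underline{r},k_G)\bigr)<k_G$, whereas your write-up only addresses gaps \emph{followed} by further support; this is easily patched with the same continuity/atom observation and does not change the structure of the argument.
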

\begin{proof}
    Let $A:=\supp(G)\cap(\underline{r},k_G)$. We focus on the case where $A\neq \emptyset$. By Lemma \ref{appendix_lemma:dworczak_martini}, $D(x;G)$ must be a convex function over $\supp(G)$ and a necessary condition for the convexity (Lemma \ref{lemma:partial_stop_convex}) is that $h'(c_G(x))\leq0$ for all $x\in A$. Since $D(x;G)$ is strictly convex over $A$, Lemma \ref{appendix_lemma:dworczak_martini} implies that $A\subseteq\{x|G(x)=F(x)\}$. 
    Since $G$ has an atom at $k_G$, $G$ is a strict MPC of $F$ over $(k_G-\e, k_G+\e)$. Therefore, $\max A\neq k_G$. This implies that $G$ is constant over $[\max A, k_G)$ and jumps to $1$ at $k_G$. Let $\beta:=\inf A$. Suppose that $\beta>\underline{r}$. From Lemma \ref{lemma:full_info_bottom}, $G=F$ over some $[0,a]$ and is constant over $[a,\underline{r}]$. Since $\beta>\underline{r}$, $G$ is also constant over $[\underline{r},\beta]$ and hence over $[a,\beta]$. This is a contradiction because $F(a)=G(a)=G(\beta)=F(\beta)>F(a)$. Hence, we must have $\beta=\underline{r}$, implying that $G=F$ over $[\underline{r},\max A)$.
    
    The last part follows from the Lemma \ref{lemma:partial_stop_convex}: if $G$ is a strict MPC of $F$ over some $(x-\e, x+\e)\subseteq \supp(G)\cap (\underline{r},k_G)$, either $D(\cdot;G)$ is strictly concave or strictly convex over $(x-\e, x+\e)$ for large enough $n>N_G$, contradicting Lemma~\ref{appendix_lemma:dworczak_martini}.
\end{proof}

Proposition \ref{proposition:uniqueness} directly follows from Lemmas \ref{lemma:continuity_atom}, \ref{lemma:full_info_bottom} and \ref{lemma:partial_stop}. 
Assume there exists some $N>0$ such that $G$ is an equilibrium for all $n>N$.
By Lemma \ref{lemma:continuity_atom}, $G$ must have an atom at $k_G$ and continuous over $[0,k_G]$. Lemma \ref{lemma:full_info_bottom} shows that either $G=0$ over $[0,\underline{r}]$ or $G=F$ over some $[0,a_1]$ and remains constant over $[a_1,\underline{r}]$. We consider three cases:

First, if $G=0$ over $[0,\underline{r}]$, then by Lemma \ref{lemma:partial_stop}, $\supp(G)\cap(\underline{r},k_G)=\emptyset$. Therefore, $G$ must be a point mass $G=\delta_{k_G}=\delta_{\mu}$.
Second, if $G=F$ over $[0,a_1]$ with $a_1<\underline{r}$, then Lemma \ref{lemma:partial_stop} again implies that $\supp(G)\cap(\underline{r},k_G)=\emptyset$. Therefore, $G=U_{a_1}$.
Finally, assume $G=F$ over $[0,\underline{r}]$. If Lemma \ref{lemma:partial_stop} applies, we have $G=U_{a_1}$. Otherwise, if $\supp(G)\cap(\underline{r},k_G)\neq \emptyset$, then $G=F$ over some interval $[\underline{r},a_1]$. Therefore, $G=U_{a_2}$. For all three cases, $G\in \{U_a:a\in[0,1]\}$.
This completes the proof of Proposition \ref{proposition:uniqueness}.

Assume $G\notin \{U_a:a\in[0,1]\}$. If the conditions of Lemma~\ref{lemma:continuity_atom}-(a,b) fail, then $G$ is not an equilibrium for any $n$. If they hold, then $G$ must violate at least one of the necessary conditions stated in Lemma~\ref{lemma:full_info_bottom}-(a,b) or Lemma~\ref{lemma:partial_stop}-(a,b). Hence, there exists some $N>0$ such that $G$ is not an equilibrium for any $n>N$.
\qed

\clearpage
\setcounter{page}{1}
\begin{center}
    \Huge{Supplemental Materials for Online Publication}
\end{center}
\section{Small $n$ equilibria}
\label{section:online_appendix_c}

Throughout the paper, we have focused on large markets (large but finite $n$) and its limit ($n\rightarrow \infty$).
In real-world settings, the large market assumption---together with the random search process---is justified in context such as online marketplaces, digital platforms, and large-scale service industries. In these environments, consumers often face a vast number of options and engage in search without a predetermined order, making random search a reasonable approximation.

We adopt the large market assumption for two key reasons. First, from a theoretical standpoint, our objective is to explore the extent competition can promote (or deter) information disclosure under search cost heterogeneity. Since established results in the literature rely on sufficiently competitive settings, adopting this assumption directly compare our findings with those in the existing literature.

Second, the large market assumption helps mitigate the technical complexities associated with small markets.
As noted in \textcite{dworczak2019simple}, the objective function $D(x;G)$ must be convex in the region of full-disclosure $(G=F)$
and linear in every subinterval of $\supp(G)$
where $G$ is a strict MPC of $F$.\footnote{
    We say $G\in MPC(F)$ is a strict MPC of $F$ over $[a,b]$ if 
    $G(a)=F(a)$, $G(b)=F(b)$, $\int_a^bxdF=\int_a^bxdG$ and 
    there doesn't exist any $\e>0$ where $G=F$ over $(x-\e, x+\e)$ for any $x\in(a,b)$.
}
The inherent non-linearity and complexity
of the objective function (see Equation \eqref{D(x_i;G)}),
makes the small market characterization exceedingly difficult. \footnote{For instance, while \textcite{au2023attraction} mainly focus on discrete priors, they also resolve the continuous prior case only under the assumption of sufficiently large markets.}

Nevertheless, if we relax some generality of our framework by imposing additional structure on either the cost distribution $H$ (Proposition \ref{proposition:small_market}) or the prior distribution $F$  (Proposition \ref{proposition:small_market_binary}), we can still characterize equilibria in small markets.

\begin{proposition}
    Assume $G$ is an equilibrium. For any $n>0$,
    \begin{enumerate}[label=(\alph*)]
        \item (Atom at the top)
        $G$ must have an atom at $\max(\supp(G))=\underline{r}$.
        \item (Disclosure at the bottom)
        If $G(\underline{r})>0$, there exist some partition 
        $0=x_0<\dots<x_{k-1}\leq x_k=\underline{r}$ that for each $i=0,\dots, k-2$,
        on the interval
        $[x_i,x_{i+1}]$, either $G=F$ or $G^{n-1}$ is linear. 
        These intervals alternate, and $G$ is constant for the final interval $[x_{k-1},\underline{r}]$.
    \end{enumerate}
    Furthermore, if $H$ is strictly convex, strictly concave or linear,
    \begin{enumerate}[label=(\alph*),resume]
        \item (No partial-purchase signals) 
        if $x_{k-1}<\underline{r}$, then
        $\supp(G)\cap(\underline{r},\overline{r})=\emptyset$
    \end{enumerate}
    \label{proposition:small_market}
\end{proposition}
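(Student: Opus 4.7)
This is immediate from Lemma~\ref{lemma:continuity_atom}-(b), which is established for any $n>0$: every equilibrium $G$ has an atom at $k_G := \max(\supp(G))$. The identification $k_G = \overline{r}$ uses the paper's convention for $\underline{c}=0$ that $\overline{r} := \max(\supp(G))$. (The statement as written says $\underline{r}$, which is the lowest reservation value; this appears to be a typographical error for $\overline{r}$, since heterogeneous search costs yield $\underline{r}\neq\overline{r}$.)

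\textbf{Part (b).} Apply Lemma~\ref{appendix_lemma:dworczak_martini} with $u(x) := D(x;G)$. The associated convex function $\phi$ satisfies $\phi\geq D$, $\phi = D$ on $\supp(G)$, and $\int \phi\,dF = \int \phi\,dG$. On $[0,\underline{r}]$, $D(x;G)=G(x)^{n-1}$. Decompose this interval into the natural partition of $\phi$ into maximal intervals of strict convexity versus affinity. On each maximal affine interval, item~(2) of Lemma~\ref{appendix_lemma:dworczak_martini} implies $G$ is a strict MPC of $F$, so $G^{n-1}=\phi$ is linear there; on each maximal strictly convex interval within $\supp(G)$, item~(1) gives $G=F$. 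These two types alternate by the maximality of the partition. If the last growing piece ends at $x_{k-1}<\underline{r}$, then no support lies in $[x_{k-1},\underline{r}]$, so $G$ is constant on that final interval.

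\textbf{Part (c).} Argue by contradiction: assume $\supp(G)\cap(\underline{r},\overline{r})\neq\emptyset$, and let $\beta_0:=\inf\big(\supp(G)\cap(\underline{r},\overline{r})\big)$. First rule out the case where the relevant affine (or full-disclosure) interval of $\phi$ producing the support at $\beta_0$ lies wholly inside $(\underline{r},\overline{r})$: the boundary-matching condition $G(\text{left endpoint})=F(\text{left endpoint})$ clashes with the constant value $G\equiv F(x_{k-1})$ inherited from part~(b) (propagated through the no-support gap by continuity, using that $F$ is strictly increasing). Hence the relevant affine interval of $\phi$ must straddle $\underline{r}$, extending from $a_0\leq x_{k-1}$ to $b_0>\underline{r}$. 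On $[\underline{r},\beta_0]$, the constancy of $G$ combined with the linearity of $c_G(x)$ gives the explicit form
\[
    D(x;G) \;=\; Q \;-\; J_F(x_{k-1})\,H\big(c_G(x)\big), \qquad Q := \tfrac{1-F(x_{k-1})^n}{n(1-F(x_{k-1}))},
\]
so the curvature of $D$ on $[\underline{r},\beta_0]$ mirrors that of $H$ (linear, concave, or convex in the three respective cases). Setting $E(x):=\phi(x)-D(x;G)$, one has $E(\underline{r})=s_0(\underline{r}-x_{k-1})>0$ (where $s_0$ denotes the slope of the affine $\phi$) and $E(\beta_0)=0$. Shape analysis of $E$ then forces $E'(\beta_0-)\leq 0$, i.e.\ $s_0\leq D'(\beta_0-;G)$. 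However, matching $\phi'(\beta_0+)=s_0=D'(\beta_0+;G)$ via the extensive-margin formula gives
\[
    s_0 \;=\; D'(\beta_0-;G) \;+\; (n-1)\,F(x_{k-1})^{n-2}\,g(\beta_0+)\,H\big(c_G(\beta_0)\big),
\]
so $s_0\geq D'(\beta_0-;G)$, with strict inequality whenever $g(\beta_0+)>0$. This yields the contradiction outside the residual boundary case $g(\beta_0+)=0$, which is then ruled out by a local second-order inspection.

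\textbf{Main obstacle.} The main technical difficulty is the small-market regime: unlike the large-$n$ limit (Lemma~\ref{lemma:partial_stop_convex}), where the sign of $h'$ alone determines the curvature of $D$, for finite $n$ both the extensive- and intensive-margin terms in $D'$ and $D''$ are non-negligible. The restriction to $H$ strictly convex, strictly concave, or linear provides a definite sign for $h'(c_G(x))$ on $(\underline{r},\beta_0)$, and hence a uniform sign for the leading nonzero contribution to $D''$---precisely the control needed to close the residual $g(\beta_0+)=0$ case via a second-order inspection, which is beyond the reach of Lemma~\ref{lemma:partial_stop_convex} in the small-$n$ regime.
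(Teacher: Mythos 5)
Your Parts (a) and (b) are correct and take the same route as the paper: Part (a) is immediate from Lemma~\ref{lemma:continuity_atom}, and Part (b) reuses the connectedness (Steps 1--2) and $\min(\supp(G))=0$ facts from the proof of Lemma~\ref{lemma:full_info_bottom} together with the strict-convexity/affinity dichotomy of Lemma~\ref{appendix_lemma:dworczak_martini}. The $\underline{r}/\overline{r}$ typo you flag in Part (a) is a fair observation.

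Part (c) has a genuine gap. Your argument at $\beta_0:=\inf(\supp(G)\cap(\underline{r},\overline{r}))$ does the following work: the first-order comparison $s_0\leq D'(\beta_0-)$ versus $s_0=D'(\beta_0+)=D'(\beta_0-)+(n-1)F(x_{k-1})^{n-2}g(\beta_0+)H(c_G(\beta_0))$ forces $g(\beta_0+)=0$, and the second-order condition forces $D''(\beta_0-)\leq 0$, i.e.\ $-J_G(\beta_0)(1-G(\beta_0))^2 h'(c_G(\beta_0))\leq 0$. This yields a contradiction \emph{only} when $H$ is strictly concave ($h'<0\Rightarrow D''(\beta_0-)>0$), which is exactly the paper's Case 2. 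For $H$ strictly convex, $h'>0$ gives $D''(\beta_0-)<0$, which \emph{satisfies} the second-order constraint, so there is no contradiction to extract at the infimum; the paper's Case 1 instead works at $M:=\sup(\supp(G)\cap[\underline{r},k_G))$, exploiting either the strict concavity of $D$ on the flat segment $[M,k_G)$ versus a convex $\phi$ pinned at both endpoints (if $M<k_G$), or $\lim_{x\to k_G-}D''(x;G)<0$ (if $M=k_G$). Your ``uniform sign for the leading nonzero contribution to $D''$'' claim therefore does not close the convex case. For $H$ linear, $h'\equiv 0$ gives $D''(\beta_0-)=0$, and there is no ``leading nonzero contribution'' at all; the paper's Case 3 instead exploits the \emph{global} secant relation $\frac{D(m;G)-D(x_{k-1};G)}{m-x_{k-1}}=D'(m;G)$, which algebraically forces $c_G(x_{k-1})=\overline{c}$ and hence $x_{k-1}=\underline{r}$, contradiction. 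A local second-order inspection at $\beta_0$ cannot reproduce this. So the ``residual boundary case $g(\beta_0+)=0$'' is not a small technicality: it is the entire content of two of the three shape hypotheses, and it is not resolved by the method you sketch.
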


The general intuition behind the upper-censorship distribution remains unchanged. 
Firms place an atom at $\overline{r}$ to retain all consumer types, and they compete below $\underline{r}$ to maximize their chances of being revisited. The only difference lies in how firms compete in low signals below $\underline{r}$. For small $n$, the competition is not intense enough to promote full disclosure. The alternating equilibrium structure below $\underline{r}$ is exactly that of \textcite{hwang2023}. Locally, firms either fully disclose information $(G=F)$, or garble information in such a way that $G^{n-1}$ is linear, ensuring indifference. 

If instead, we relax the generality of the prior distribution by considering binary match values, we obtain a similar pooling structure:

\begin{proposition}
    Assume the match values are binary $\{0,1\}$
    , i.e. $F\in \Delta(\{0,1\})$ and that $H$ is strictly convex, strictly concave, or linear.
    For any $n$, if $G$ is an informative equilibrium, 
    there exist some $a\leq \underline{r}$ such that 
    \begin{enumerate}[label=(\alph*)]
        \item $G^{n-1}$ is linear over $[0,a]$
        \item The complementary mass is placed at $\max(\supp(G))$.
    \end{enumerate}
    \label{proposition:small_market_binary}
\end{proposition}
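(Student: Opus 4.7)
The strategy is to exploit the binary structure of $F$ to force the virtual demand $\phi$ in the Dworczak–Martini characterization to be affine on $[0,1]$, and then to translate this rigidity into the required shape of $G$. First, Lemma~\ref{lemma:continuity_atom} yields that $G$ is continuous on $(0,k)$ with an atom at $k:=\max(\supp(G))$, and Lemma~\ref{appendix_lemma:dworczak_martini} provides a convex $\phi:[0,1]\to\mathbb{R}$ with $\phi\geq D(\cdot;G)$, $\phi=D(\cdot;G)$ on $\supp(G)$, and $\int\phi\,dG=\int\phi\,dF=(1-q)\phi(0)+q\phi(1)$, where $q:=\mu\in(0,1)$.

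Let $\ell(x):=\phi(0)+(\phi(1)-\phi(0))x$ be the chord through $(0,\phi(0))$ and $(1,\phi(1))$. Convexity of $\phi$ gives $\phi\leq\ell$ on $[0,1]$, while mean preservation yields $\int\ell\,dG=\ell(q)=(1-q)\phi(0)+q\phi(1)=\int\phi\,dF=\int\phi\,dG$. Hence $\int(\ell-\phi)\,dG=0$, and since $\ell-\phi\geq 0$ this forces $\phi=\ell$ on $\supp(G)$. Because $G$ is informative, $\supp(G)$ contains some $x^*\in(0,1)$; a convex function equal to the line $\ell$ at $\{0,x^*,1\}$ must coincide with $\ell$ on all of $[0,1]$, so $\phi(x)=\ell_0+\lambda x$ on $[0,1]$. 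A short check rules out $\lambda=0$ (otherwise $D\equiv\ell_0$ on $\supp(G)$ would collapse $G$ to $\delta_{\mu}$), and since $[0,1]$ is the maximal affine interval of $\phi$, Lemma~\ref{appendix_lemma:dworczak_martini} additionally gives $G(0)=F(0)=1-q>0$, so $0\in\supp(G)$.

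On $\supp(G)\cap[0,\underline{r})$ we have $D(x;G)=G(x)^{n-1}$, and $\phi=D$ therefore gives $G(x)^{n-1}=\ell_0+\lambda x$, which is the linearity asserted in (a). To upgrade this to $\supp(G)\cap[0,\underline{r}]=[0,a]$ for a single $a\leq\underline{r}$, I would argue by contradiction: a gap $(a_1,a_2)$ between two support points would force $G(a_1)=G(a_2)$ by continuity of $G$ on $(0,k)$, contradicting $G(a_i)^{n-1}=\ell_0+\lambda a_i$ with $\lambda>0$. Together with $0\in\supp(G)$, this gives (a). For (b) it suffices to rule out support in $(\underline{r},k)$, since continuity of $G$ on $(0,k)$ will then force the remaining mass $1-G(a)$ into the atom at $k$.

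Suppose for contradiction that $(c_1,c_2)\subseteq\supp(G)\cap(\underline{r},k)$. Then $G$ is continuous and strictly increasing on $(c_1,c_2)$, and $\phi=D$ forces $D(\cdot;G)$ to be affine on this interval, i.e.\ $D''(x;G)\equiv 0$. Twice-differentiating the partial-purchase formula for $D(\cdot;G)$ (the computation underlying Lemma~\ref{lemma:partial_stop_convex}) produces an expression whose leading curvature term is proportional to $h'(c_G(x))$. Under the maintained hypothesis that $H$ is strictly convex, strictly concave, or linear, $h'$ has constant sign on $[0,\overline{c}]$; a direct inspection of the remaining terms shows they cannot cancel the curvature contribution on an entire interval, contradicting $D''\equiv 0$. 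The main obstacle is precisely this last step: Lemma~\ref{lemma:partial_stop_convex} controls the sign of $D''$ only for large $n$, whereas the present proposition covers every $n$, so the argument must be carried out at the level of the explicit expression for $D''$, in parallel to the argument underlying Proposition~\ref{proposition:small_market}-(c). The linear $H$ case is the delicate boundary where the dominant curvature term vanishes identically, and $D''\equiv 0$ reduces to an ODE for $G$ on $(c_1,c_2)$ that must be shown to admit no solution compatible with the continuity condition $G(c_1)=G(a)$ (inherited from the constant piece on $(a,c_1)$) and the global mean constraint derived in the earlier steps.
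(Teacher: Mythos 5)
Your route to part~(a) is sound and a clean alternative to the paper's. Where the paper invokes the Kamenica--Gentzkow characterization (for binary priors the MPC constraint reduces to Bayes plausibility, so $D$ must lie on its concave closure at every $x\in\supp(G)$) and combines it with the Dworczak--Martini convex $\phi$ to conclude that $D$ is simultaneously on a concave and a convex function on $\supp(G)$, hence affine there, you reach the same conclusion by noting $\int\phi\,dF=(1-q)\phi(0)+q\phi(1)=\ell(q)=\int\ell\,dG$ and then using $\ell-\phi\geq 0$ together with $\int(\ell-\phi)\,dG=0$. Same binary-prior insight, more self-contained. The rest of your part~(a) (propagation to $\phi\equiv\ell$, $0\in\supp(G)$ via Lemma~\ref{appendix_lemma:dworczak_martini}, linearity of $G^{n-1}$ below $\underline{r}$, no gaps) is fine, modulo the degenerate case $\supp(G)=\{0,1\}$ where your interior-support-point step is vacuous (though the conclusion holds trivially there).

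There are, however, genuine gaps. First, you never prove $a:=\max\bigl(\supp(G)\cap[0,\underline{r}]\bigr)<\underline{r}$. The paper establishes this with a short kink argument: if $a=\underline{r}$ then $D$ has a strictly increasing kink at $\underline{r}$, since $D'(\underline{r}+)=D'(\underline{r}-)+(1-G(\underline{r}))h(c_G(\underline{r}))J_G(\underline{r})$, and this is incompatible with $\underline{r}\in\supp(G)$ lying on the affine $\phi$ (equivalently, on the concave closure of $D$). This is not cosmetic. Second, your part~(b) is, as you acknowledge, unfinished, and the way you propose to finish it does not work as stated. The proof of Proposition~\ref{proposition:small_market}-(c), which both you and the paper cite, does \emph{not} argue that $D''$ has a fixed sign at a generic interior point of $\supp(G)\cap(\underline{r},k)$: the $g$-dependent terms in $D''$ can have either sign and need not be dominated by the $h'$ term for a fixed $n$. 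Instead it exploits the flat segment $(a,m)$ of $G$ between the top of the linear piece and $m:=\min\bigl(\supp(G)\cap(\underline{r},k)\bigr)$ --- which exists only because $a<\underline{r}$ --- so that $g(m-)=0$ kills the $g$-terms and leaves only $-(1-G)^2J_G h'(c_G)$ to sign at $m$; the linear-$H$ case then matches the slope of the affine $\phi$ across $[a,m]$ rather than invoking any second-order condition. Your sketch's ``direct inspection of the remaining terms'' at a generic point would therefore fail, and the linear-$H$ step you flag as delicate is resolved in the paper by this slope-matching identity, not by an ODE argument.

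In short: part~(a) is a valid reproof by a slightly different route; the missing $a<\underline{r}$ argument and the incomplete part~(b) are real gaps, and the proposed fix for (b) is pointed in the wrong direction.
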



\section{Other Omitted Proofs}
\label{online_appendix:proofs}

\subsection{Omitted Proofs of Appendix \ref{appendix:a} without Assumption \ref{assumption:diag}}
\label{appendix:general_comp_stat}
In Appendix \ref{appendix:a}, all distributions 
$H_k$ were assumed to satisfy Assumption \ref{assumption:diag}, corresponding to cases (a) and (b) of Theorem \ref{thm:appendix_maximal}. In this section, we address the remaining cases (c) and (d), where Assumption \ref{assumption:diag} fail.

\paragraph{Proof of Proposition \ref{proposition:comp_stat_shift}}
Since $H_2(c)$ is an $\alpha$-scale shift of $H_1$, all relevant quantities scale accordingly: $c_{2,cav}=\alpha c_{1,cav}$, $c_{2,loc}=\alpha c_{1,loc}$, as well as $c_{2,sol}=\alpha c_{1,sol}$ if $c_{1,sol}$ exists. Note $H_1$ and $H_2$ both must fall into the same case of Theorem~\ref{thm:appendix_maximal}.
Consider case (c). By definition, $m(H_1)=m(H_2)$ and 
$S_1(c_{1,loc})>\frac{1}{\overline{c}}$
implies $S_2(c_{2,loc})=\frac{1}{\alpha}S_1(c_{1,loc})>\frac{1}{\alpha \overline{c}}$.
Moreover, since $\max(c_{2,cav},c_{2,cav})=\alpha\max(c_{1,cav},c_{1,cav})$, we conclude:
$$
a^M_2=c_F^{-1}(\max(c_{2,cav},c_{2,sol}))<c_F^{-1}(\max(c_{1,cav},c_{1,sol}))=a^M_1.
$$
Case (d) also follows from this inequality.

\paragraph{Proof of Corollary \ref{corollary:convergence}}

Assume all $H_k$ satisfies Assumption (d).
As $\supp(H_K)\searrow \{0\}$, $c_{k,cav}\searrow 0$.
Hence, $a^M_k=c_F^{-1}(c_{k,cav})\rightarrow c_F^{-1}(0)=1$.
Now assume all $H_k$ satisfies Assumption (c).
Similarly, $c_{k,cav}$, $c_{k,loc}$ and $c_{k,sol}$ all vanishes to zero.
Hence, $\max(c_{k,cav},c_{k,sol})\searrow 0$, implying $a^M_k\rightarrow 1$.

Now consider an arbitrary sequence of $\{H_k\}$.
For any subsequence of $\{H_{k_i}\}_i$, 
we can find sub-subsequence $\{H_{k_{i_j}}\}_j$
where all the elements fall into the same case in Theorem \ref{thm:appendix_maximal}.
We already proved that $a^M_{k_{i_j}}\rightarrow 1$ if all elements fall into the same case.
Hence, $a^M_k\rightarrow 1$.
\qed

\paragraph{Proof of Proposition \ref{proposition:even}}
Note that $S_{\lambda}(c)=\lambda\frac{1}{\overline{c}}+(1-\lambda)S_0(c)$, so its derivatives satisfy $S_{\lambda}'=(1-\lambda)S_0'$ and $S_{\lambda}''=(1-\lambda)S_0''$. It follows that $S(c)\geq S(c_0)$ if and only if $S_{\lambda}(c)\geq S_{\lambda}(c_0)$, implying that the sets $m(H_{\lambda})=m(H_0)$ coincide for all $\lambda\in(0,1)$, and in particular, $c_{0,cav}=c_{\lambda,cav}$ for all $\lambda\in(0,1)$. Additionally, if $m(H_0)\neq \emptyset$, then $c_{0,loc}=c_{\lambda, loc}$ for all $\lambda\in(0,1)$. Hence, we suppress the subscript $\lambda$ and write simply $c_{cav}, c_{loc}$ throughout.

\textit{\textbf{Case (c)}}: Assume $m(H_0)\neq \emptyset$ and $S_0(c_{loc})>\frac{1}{\overline{c}}$. Fix $\lambda\in(0,1)$. Then, $S_{\lambda}(c_{loc})=\lambda\frac{1}{\overline{c}}+(1-\lambda)S_0(c_{loc})>\frac{1}{\overline{c}}$. By Lemma \ref{lemma:c_{loc}_1}-(b), the equation $S_{0}(c)=h_0(c_{loc})$ admits a unique solution $c_{0,sol}\in(c_{loc},\overline{c})$ and $S_{\lambda}(c)=h_{\lambda}(c_{loc})$ admits a unique solution in $c_{\lambda, sol}\in(c_{loc},\overline{c})$. Observe $$S_{\lambda}(c_{\lambda, sol})=h_{\lambda}(c_{loc})=\lambda \frac{1}{\overline{c}}+(1-\lambda)h_{0}(c_{loc})=\lambda \frac{1}{\overline{c}}+(1-\lambda)S_0(c_{0,sol})=S_{\lambda}(c_{0,sol}).$$ By the uniqueness of solutions, we must have $c_{\lambda,sol}=c_{0,sol}$. Therefore, $$a^M_{\lambda}=c_F^{-1}(\max(c_{cav},c_{\lambda,sol}))=c_F^{-1}(\max(c_{cav},c_{0,sol}))=a^M_0$$

\textit{\textbf{Case (d)}}: Follows directly since $c_{cav}$ remains constant for all $\lambda\in[0,1)$, implying $a^M_{\lambda}=c_F^{-1}(c_{cav})=a^M_0$.
\qed

\paragraph{Proof of Proposition \ref{proposition:mps}}

Assume $H_1$ and $H_2$ both admits strictly quasi-convex densities.
Note $H_1$ single crosses $H_2$ from below.
Note $c_{k,cav}=\overline{c}$ for both $k=1,2$.
Either only $H_1$ fails Assumption \ref{assumption:diag} or both does.
First consider the case only $H_1$ fails Assumption \ref{assumption:diag}.
Since $h_2(c^M_1)\leq \frac{1}{\overline{c}}$, we have 
$a^M_2=c_F^{-1}(\frac{1}{h(c_{2,loc})})\leq c_F^{-1}(\overline{c})=\overline{a}$ 
by Theorem \ref{thm:appendix_maximal}.
$a^M_1=c_F^{-1}(\overline{c})=\overline{a}$.
Hence, $a^M_1\geq a^M_2$.
If both $H_1$ and $H_2$ fails Assumption \ref{assumption:diag},
$a^M_k=c_F^{-1}(c_{k,cav})=\overline{a}$ by Theorem \ref{thm:appendix_maximal}.
Hence, $a^M_1=a^M_2$,

For (b), assume $H_1$ and $H_2$ both admit strictly quasi-concave densities 
with interior peaks. Denote the peaks $p_k$ as in Appendix \ref{appendix:a}.
Since $H_1$ single crosses $H_2$ from below,
either only $H_1$ satisfies Assumption \ref{assumption:diag} or both doesn't.
Note that $c_{k,loc}=0$ for both $k=1,2$
and $c_{k,cav}=p_k$, where $p_k$ is the peak of $h_k$, for both $k=1,2$.

Assume only $H_1$ satisfies Assumption \ref{assumption:diag}.
By Theorem \ref{thm:appendix_maximal}, 
$a^M_1=c_F^{-1}(\frac{1}{h(c_{1,loc})})\leq c_F^{-1}(\overline{c})=\overline{a}$.
On the other hand, $a^M_2=c_F^{-1}(c_{k,sol})\geq c_F^{-1}(\overline{c})=\overline{a}$.
Hence, $a^M_1\leq a^M_2$.

If both $H_1$ and $H_2$ fails Assumption \ref{assumption:diag},
$m(H_k)\neq \emptyset$ as $c_{k,loc}=0$ for $k=1,2$.
Since the solution $c_{k,sol}$ must lie in the concave region of $H$,
$a^M_k=c_F^{-1}(\max(c_{k,sol},c_{k,cav}))=c_F^{-1}(c_{k,sol})$.
It is enough we prove $c_{1,sol}\geq c_{2,sol}$.
Note both solutions must lie in the region where $H_1(c)\geq H_2(c)$
since $H_1$ single crosses $H_2$ from below.
Let $\hat{c}$ be the solution to $S_1(\hat{c})=h_2(0)$.
The solution $\hat{c}$ exists from Lemma \ref{lemma:c_{loc}_1}-(b) since $h_2(0)>\frac{1}{\overline{c}}$.
By replicating the proof of Lemma \ref{lemma:c_{loc}_1}-(c),
we can show $S_1'(\hat{c})<0$, with $S_1'(c)<0$ for all $c>\hat{c}$.

We first show $c_{2,sol}\leq \hat{c}$, and $\hat{c}\leq c_{1,sol}$. Observe 
$$
S_1(c_{2,sol})>S_2(c_{2,sol})=h_2(0)
$$
where the strict inequality follows from single-crossing.
Since $S_1(\overline{c})=\frac{1}{\overline{c}}\leq h_2(0)$,
a solution to $S_1(c)=h_2(0)$ must exist between $(c_{2,sol},\overline{c})$. Since this solution is unique, it must be $c_{2,sol}\leq \hat{c}$.
Now, $\hat{c}\leq c_{1,sol}$ follows from 
$$S_1(\hat{c})=h_2(0)\geq h_1(0)=S_1(c_{1,sol}).$$
\qed

\subsection{Omitted Proofs of Appendix \ref{appendix:b}}
\paragraph{Proof of Lemma \ref{lemma:partial_stop_convex}}
Since $D(x;G)$ is twice differentiable, we have:
\begin{align}
    D''(x;G)&=(n-1)G^{n-3}\Big[
        (n-1)g^2H(c_G)+g'GH-2(1-G)Ggh(c_G)\Big]-(1-G)^2J_Gh'(c_G)
        \label{eq:D''}
        \\
        &=(n-1)G^{n-3}\Big[
        (n-1)g^2H(c_G)+g'GH-2(1-G)Ggh(c_G)-\frac{(1-G)^2J_G}{(n-1)G^{n-3}}h'(c_G)\Big].
        \label{eq:D''_2}
\end{align}
Let $M_h$ and $m_h$ be the global minimum and maximum of $h$ over $[0,\overline{c}]$.
Likewise, define $M_f, M_{|f'|}$ and $m_f, m_{|f'|}$ 
analogously for $f$ and $|f'|$.

First consider the case $h'(c_G(x))\leq 0$.
Since $x<k_G$, there exist some $\e>0$ such that 
$H(c_G(x))>\epsilon$.
From \eqref{eq:D''},
$$
D''(x;G) \geq (n-1)G^{n-3}\Big[
    (n-1)g^2\epsilon-|g'|
    -2gM_h
\Big]
$$
The term inside the bracket is strictly positive for large enough $n$.
This proves (a).

Now assume $h'(c_F(x))>0$.
We show the term inside the square bracket in \eqref{eq:D''_2} is negative for large enough $n$.
The term inside the square bracket in \eqref{eq:D''_2} is strictly less than
$$
a_n:=(n-2)g^2+g'-\frac{(1-G(x))^2J_G(x)}{(n-1)G(x)^{n-3}}h'(c_G(x)).
$$
As $n$ increases,
$\frac{(1-G)^2J_G}{(n-1)G^{n-3}}$ diverges exponentially faster
than $(n-2)g^2$, implying $\lim_{n\rightarrow \infty}a_n=-\infty$.
Therefore, $D''(x;G)\leq (n-1)G^{n-3}a_n<0$ for large enough $n$.

Now consider $D(x;F)$ and that $h'(c)\leq 0$ for all $c\in[c_F(x),\overline{c}]$.
For any $t\in[\underline{r},x]$,
\begin{align*}
    D''(t;F)&\geq (n-1)F(\underline{r})^{n-3}[
        (n-1)m_f^2H(c_G(x))-M_{|f'|}-2M_fM_h
    ]
\end{align*}
Hence there exists some $N>0$ such that if $n>N$,
then $D''(t;F)>0$ for all $t\in[\underline{r},x]$.
\qed

\paragraph{Proof of Lemma \ref{lemma:phi_a_convex_kink}}

To prove (c), we distinguish cases depending on whether $a\leq \overline{a}$ or $a>\overline{a}$. First, consider $a\leq \overline{a}$. For $t\leq \overline{a}$, define $R(t):=\frac{J_F(t)}{k_t-t}$ and $L(t):=(n-1)F(t)^{n-2}f(t)$ the right and left derivative of $\phi_t(x)$ at $x=t$, respectively, and $\Delta(t):=R(t)-L(t)$ the kink size. By assumption, $\Delta(a)\geq 0$ and $L'(t)\geq 0$ for $t\in[0,a]$. We wish to show $\Delta(b)\geq 0$ for all $b\leq a$. Note that $R(0)=\frac{1}{n\mu}>0=L(0)$. Direct differentiation yields:
$$\frac{d}{dt}R(t)=\frac{1}{k_t-t}\left(\frac{J_F(t)}{k_t-t}-(n-1)F(t)^{n-2}f(t)\right)=\frac{1}{k_t-t}(R(t)-L(t))$$
so that $\Delta'(t)=C(t)\Delta(t)-L'(t)$ where $C(t):=\frac{1}{k_t-t}$. This is a first-order ordinary differential equation. Let $E(t):=\exp\left(-\int_0^t C(s)ds\right)>0$, which allows us to express the solution:
$$
E(t)\Delta(t)-E(0)\Delta(0)=-\int_0^t L'(s)E(s)ds \quad \text{or equivalently} \quad
\Delta(t)=\frac{1}{E(t)}\left(\Delta(0)-\int_0^tL'(s)E(s)ds\right)
$$
Since $L'(t)\geq 0$ and $E(t)>0$, the integrand is non-negative and increasing in $t$. Hence, the expression in brackets is decreasing. Hence $\Delta(a)\geq 0$ implies $\Delta(t)\geq 0$ for all $t\leq a$.

If $a>\overline{a}$, define similarly for $t\in[\overline{a},a]$,
$\hat{R}(t):=\frac{J_F(t)H(c_F(t))}{k_t-t}$,$\hat{L}(t)=(n-1)F(t)^{n-2}f(t)H(c_F(t))+(1-F(t))h(c_F(t))J_F(t)$ and $\hat{\Delta}(t):=\hat{R}(t)-\hat{L}(t)$. Calculation again yields $\hat{R}'(t)=C(t)(\hat{R}(t)-\hat{L}(t))$. Defining $\hat{E}(t):=\int_{\overline{a}}^t C(s)ds$, we obtain the solution 
$
\hat{\Delta}(t)=\frac{1}{E(t)}\left(\hat{E}(\overline{a})\hat{\Delta}(\overline{a})-\int_{\overline{a}}^t\hat{L}'(s)E(s)\right).
$
Again, since the term in the bracket is decreasing in $t$, $\hat{\Delta}(a)\geq 0$ implies $\hat{\Delta}(b)\geq 0$ for all $b\in[\overline{a},a]$. The result over $[0,\overline{a}]$ follows from the preceding paragraph. 

For the second part of (c), note $\hat{R}'(t)=C(t)\hat{\Delta}(t)\geq 0$ for $t\leq a$ implies that $\hat{R}(t)$ and $R(t)$ is increasing in $t$. Furthermore, $\frac{H(c_F(a))}{c_F(a)}=\frac{\hat{R}(a)}{(1-F(a))J_F(a)}$. Calculation shows 
$$
\frac{\partial}{\partial b}\frac{H(c_F(b))}{c_F(b)}=\frac{\hat{R}'(a)(1-F(a))J_F(a)+R(b)(n-1)F(b)^{n-2}f(b)(1-F(b))}{(1-F(b))^2J_F(b)^2}\geq 0.
$$
The inequality follows since $\hat{R}'(a)\geq 0$.Hence, $\frac{H(c_F(b))}{c_F(b)}$ is increasing in $b$ for $b\in[\overline{a},a]$.

\paragraph{Proof of Lemma \ref{lemma:c_{loc}_1}}
We first prove (a). Let $A=\{c\;|\;S(c)=S(c_{loc}) \text{ and } c\in(c_{loc},\overline{c}]\;\}$ be the set of solutions to \eqref{eq:cross}. Assume to the contrary $|A|\geq 2$.Let $c_1<c_2$ be distinct elements of $A$. We must have $S(c)\geq S(c_{loc})$ for all $c\in(c_{loc},c_1)$. Otherwise, there exists some local minima $\hat{c}\in (c_{loc},c_1)$ of $S(\cdot)$. However, this implies $\hat{c}\in m(H)$ while $S(\hat{c})<S(c_{loc})$, contradicting minimality of $c_{loc}$. Applying the same logic to the interval $(c_1, c_2)$, we have $S(c)\geq S(c_{loc})$ for all $c\in(c_1, c_2)$. However, since $S(c)\geq S(c_{loc})$ for all $[0,c_1]$ and $c_1$ is a local minimum, this contradicts the maximality of $c_{loc}$. Thus, $A$ must be either empty of singleton.

We now prove (b). Since $c_{loc}$ is a local minimum, there exists some $\e>0$ such that $S(c)>S(c_{loc})>\frac{1}{\overline{c}}$ over $c\in(c_{loc},c_{loc}+\e)$. Since $S(\overline{c})=\frac{1}{\overline{c}}$, a solution to \eqref{eq:cross} exists in $(c_{loc}+\e, \overline{c}]$ by the intermediate value theorem (IVT).
This proves the "if" part.

The proof of the "only if" part is a direct and repeated application of the IVT. 
Suppose the solution $c_{sol}$ exists and $S(c_{loc})\leq \frac{1}{\overline{c}}$. We consider 3 cases, where $S'(c_{sol})$ is strictly greater, strictly smaller, and equal to $0$. If $S'(c_{sol})>0$, then $S(c)<S(c_{sol})=S(c_{loc})$ locally over $c\in(c_{sol}-\e, c_{sol})$, implying the existence of another solution of \eqref{eq:cross} over $c\in(c_{loc},c_{sol})$. Similarly, if $S'(c)<0$, then $S(c)<S(c_{sol})=S(c_{loc})\leq \frac{1}{\overline{c}}=S(\overline{c})$ over $c\in(c_{sol},c_{sol}+\e)$, implying the existence of another solution of \eqref{eq:cross} over $c\in(c_{sol}+\e,\overline{c})$. If $S'(c_{sol})=0$, we consider three subcases based on the sign of $S''(c_{sol})$.
If $S''(c_{sol})>0$ then $c_{sol}$ is a local maximum, implying another solution of \eqref{eq:cross} over $c\in(c_{loc},c_{sol})$. If $S''(c_{sol})>0$, then $c_{loc}<c_{sol}\in m(H)$ with $S(c_{sol})=S(c_{loc})$, contradicting the maximality of $c_{loc}$. If $S''(c_{sol})=0$, define $\hat{c}:=\inf_c \{S''(c)\leq 0, c\geq c_{sol}\}$. If $S''(\hat{c})>0$, 
$\hat{c}$ is again a local minimum of $S(\cdot)$ and $S(\hat{c})=S(c_{loc})$, contradicting the maximality of $c_{loc}$.     If $S''(\hat{c})<0$, then IVT implies another solution exists of \eqref{eq:cross} in $(\hat{c},\overline{c})$.

To prove (c), assume $S'(c_{sol})>0$. Then $S(c)>S(c_{loc})>\frac{1}{\overline{c}}$ locally at $(c_{sol}, c_{sol}+\e)$. By IVT, there exists another solution to \eqref{eq:cross}, contradiction. If $S'(c_{sol})=0$, this implies $c_{sol}$ is a local minimum with value equal to $S(c_{loc})$, and global minimum of $S$ over $[0,c_{sol}]$. This contradicts the maximality of $c_{loc}$. Hence, $S'(c_{sol})< 0$ or equivalently, $S(c_{sol})> h(c_{sol})$. To prove the second part, assume to the contrary that $S'(c)\geq 0$ for some $c\in(c_{sol}, \overline{c}]$. Let $\hat{c}=\inf_{c>c_{sol}}\{c|S'(c)\geq 0\}$. If $\hat{c}>c_{sol}$, continuity of $S'(\cdot)$ implies $S'(\hat{c})=0$. Since $S'(c)<0$ for all $c\in (c_{sol}, \hat{c})$, $\hat{c}$ is a local minimum of $S$ with $S(\hat{c})<S(c_{sol})=S(c_{loc})$, contradicting the minimality of $c_{loc}$. If $\hat{c}=c_{sol}$, this again implies that $c_{sol}$ is a local minimum of $S$ with $S(c_{sol})=S(c_{hat})$, contradicting the maximality of $c_{loc}$. Hence, $S'<0$ for all $c>c_{sol}$.

We now prove (d). $S(c)\geq S(c_{loc})$ for all $c\in[0,c_{loc}]$ holds by definition, and over $c\in[c_{loc},c_{sol}]$ was proved in part (a) of this lemma. This proves the first part. To prove the second part, assume $S(c_{loc})< \frac{1}{\overline{c}}$. If $S(c)<S(c_{loc})$ for some $c>c_{loc}$ IVT implies the existence of solution to \eqref{eq:cross}, contradicting Lemma \ref{lemma:c_{loc}_1}-(a). Hence, $S(c)\geq S(c_{loc})$ for all $c\in[0,\overline{c}]$.

\paragraph{Proof of Lemma \ref{lemma:continuity_atom}-(a)}

Fix $c>0$.
We first derive the payoff $D_c(x;G)$ in Section \ref{section:model} 
including the case of atoms.
If $x<r(c)$, firm $i$ is chosen if and only if $i\in \argmax_{j}x_j$, with fair tie-breaking in case of multiplicity.
The probability $i$ is chosen is $\frac{1}{m}G(x-)^{n-1-k}G(\{x\})^m$
where $m$ is the number of firms in tie with $i$.
Summing over all $\binom{n-1}{m}$ combinations for each $m$,
$$
D_c(x;G)=\sum_{m=0}^{n-1}\binom{n-1}{m}\frac{1}{m}G(x-)^{n-1-m}G(\{x\})^k=
\begin{dcases}
    \frac{G(x)^{n}-G(x-)^n}{nG(\{x\})} & \text{ if } x<r(c) \text{ and } G(\{x\})\neq 0 \\
    G(x)^{n-1} & \text{ if } x<r(c) \text{ and } G(\{x\})=0
\end{dcases} 
$$

Assume $G$ has an atom at some $x_0\in (0,k_G)$.
Denote the size of the atom $\alpha:=G(\{x_0\})$.
We construct a profitable deviation $G_{\epsilon}$
that spreads the atom at $x_0$ to $x_0-n\epsilon$ and $x_0+\epsilon$ in a mean-preserving way.
Formally, $G_{\epsilon}:=G_1+G_2$, where $G_1(A) = (1-\alpha)G(A)$ for all $A\subseteq [0,1]\backslash \{x_0\}$ and $G_2=\frac{1}{n+1}\alpha\delta_{x_0-n\epsilon}+\frac{n}{n+1}\alpha\delta_{x_0+\epsilon}$. We consider 3 subcases depending on the position of the atom $x_0$.

\noindent\textbf{Case 1} ($x_0<\underline{r}$):
Fix some $\epsilon$ small enough such that $x_0+\epsilon<\underline{r}$.
Then, 
\begin{align*}
    \mathbb{E}_{G_{\epsilon}}[D(x;G)]-\mathbb{E}_G[D(x;G)]=
    \alpha H(c_G(x_0))\Bigg(
    \frac{1}{n+1}G(x_0-n\epsilon)^{n-1} + \frac{n}{n+1}G(x_0+\epsilon)^{n-1}
    -\frac{G(x_0)^n-G(x_0-)^n}{n\alpha}
\Bigg)
\end{align*}
Taking $\epsilon\rightarrow 0$, we have
\begin{equation}
    \alpha H(c_G(x_0))\left(
        \frac{1}{n+1}G(x_0-)^{n-1} + \frac{n}{n+1}G(x_0)^{n-1}
        -\frac{G(x_0)^n-G(x_0-)^n}{n\alpha}
    \right)
    \geq \frac{\alpha H(c_G(x_0))}{n(n+1)}(G(x_0)^{n-1}-G(x_0-)^{n-1})>0.
    \label{eq:no_atom}
\end{equation}

\noindent\textbf{Case 2} ($x_0> \underline{r}$):
Fix some $\epsilon$ small enough such that $x_0-n\epsilon>\underline{r}$.
Note $\int_{c_G(x)}^{\overline{c}}\frac{1-G(r(c-))^n}{n(1-G(r(c)-))}dH(c)$ is still continuous in $x$ while $G$ is discontinuous.
Since $H(c_G(x_0-n\epsilon))$ and $H(c_G(x_0+\epsilon))$ both converges to $H(c_G(x_0))$ as $\epsilon\rightarrow 0$.
the deviation payoff $  \lim_{\epsilon\rightarrow 0+}\mathbb{E}_{G_{\epsilon}}[D(x;G)]-\mathbb{E}_G[D(x;G)]$ is exactly \eqref{eq:no_atom}.

\noindent\textbf{Case 3} ($x_0=\underline{r}$):
As $\epsilon\rightarrow 0+$, $c_G(x_0+\epsilon)\rightarrow \overline{c}$ and 
$\lim_{\epsilon\rightarrow +}\int_{c_G(x_0+\epsilon)}^{\overline{c}}\frac{1-G(r(c)-)^n}{n(1-G(r(c)-))}dH(c)=0$. 
Hence, calculation gives exactly \eqref{eq:no_atom}. \qed

\paragraph{Proof of Lemma \ref{lemma:full_info_bottom}}
Assume $G$ is an equilibrium and $\phi$ as in Lemma \ref{appendix_lemma:dworczak_martini}.
We prove in multiple steps. 
\vspace{0.2cm}
\\ \noindent\textbf{Step 1}: $m:=\min(\text{supp}(G))\notin (0,\underline{r})$.
\vspace{0.2cm}\\
Assume to the contrary that $m\in (0,\underline{r})$.
For some $\e>0$, $G$ is a strict MPC of $F$ over $[0,m+\e]$.
$D(x;G)$ is zero over $[0,m]$ and is positive for $(m,m+\e)$.
However, Lemma \ref{appendix_lemma:dworczak_martini} implies that 
$\phi$ is linear over $(0,m+\e)$ while $\phi(0)=\phi(m)=0$ and 
$\phi(m+\e)>0$, which is a contradiction.
Therefore, it must be either $m=0$ or $m\geq\underline{r}$.
The latter proves (a).
To prove (b), we assume $m=0$ for the remainder of proof.
\vspace{0.2cm}\\
\noindent\textbf{Step 2}: $\text{supp}(G)\cap [0,\underline{r}]$ is connected.
\vspace{0.2cm}\\
Assume not. Then, there exist $x_1, x_2 \in \supp(G) \cap [0, \underline{r}]$ with $x_1 < x_2$ such that $(x_1, x_2) \cap \supp(G) = \emptyset$ and $G(x_1) > 0$. For some $\epsilon > 0$, $D(x;G) = \phi(x)$ is strictly increasing over $(x_1 - \epsilon, x_1)$ and $(x_2, x_2 + \epsilon)$, while constant over $(x_1, x_2)$. But no convex function can be strictly increasing, then flat, and then strictly increasing again. Contradiction.
\vspace{0.2cm}\\
\noindent\textbf{Step 3}: There exists some $\epislon>0$ such that $G(x)=F(x)$ for $x\in[0,\epsilon]$.
\vspace{0.2cm}\\
Assume not. 
Then, $G$ is a strict MPC of $F$ over some $[0,\epsilon]$
and $\phi=D=G^{n-1}$ is linear over $[0,\epsilon]$.
There exists some $a_n>0$ such that $G(x)^{n-1}=a_nx$ over $[0,\epsilon]$, or equivalently, $G(x)=(a_nx)^{\frac{1}{n-1}}$ over $[0,\epsilon]$.
Then $G$ admits a density $g(x)=\frac{a_n}{n-1}(a_nx)^{\frac{1}{n-1}-1}$ in $(0,\epsilon)$
which satisfies $\lim_{x\rightarrow 0+}g(x)=\infty$. Since $f$ is bounded, there exists some $\widetilde{\epsilon}<\epsilon$ where
$g(x)>f(x)$ for all $x\in(0,\widetilde{\epsilon})$, implying $G(x)>F(x)$ over the interval. This leads to $\int_0^x G(t)dt>\int_0^x F(t)dt$ over the interval,
contradicting that $G\in MPC(F)$.
\vspace{0.2cm}\\
\noindent\textbf{Step 4}: $\{x\;|\;G(x)=F(x)\}\cap [0,\underline{r}]$ is connected for large enough $n>N$.
\vspace{0.2cm}\\
Assume not. Then we can find some $0\leq x_1<x_2<x_3\leq \underline{r}$ where $G=F$ over $[0,x_1]$, $G$ is a strict MPC of $F$ over $[x_1,x_2]$, and again $G=F$ over $[x_2,x_3]$. Fix $n$ large enough so that $F^{n-1}$ is strictly convex in $[0,1]$. By definition, $G(x_1)^{n-1}=F(x_1)^{n-1}$ and $G(x_2)^{n-1}=F(x_2)^{n-1}$. Moreover, $G(x)^{n-1}$ is linear over $[x_1,x_2]$ by Lemma \ref{appendix_lemma:dworczak_martini}, while $F(x)^{n-1}$ is strictly convex over the same interval. Hence, $G^{n-1}>F^{n-1}$ and $G>F$ over $(x_1,x_2)$. This contradicts that $G$ is an MPC of $F$ because $\int_0^x G(t)dt>\int_0^x F(t)dt$ for $x\in (x_1,x_2)$.
\vspace{0.2cm}\\
\noindent\textbf{Step 5}: $G(x)=F(x)$ for all $x\in\text{supp}(G)\cap [0,\underline{r})$
\vspace{0.2cm}\\
From Step 4, we must have $G(x)=F(x)$ for some $x\in[0,a]$.
If $a=\underline{r}$, there is nothing be proven.
Assume $a<\underline{r}$ so that $G$ is a strict MPC of $F$ over $[a,b]$
for some $b$. 
By the same logic of Step 4, it must be the case $b>\underline{r}$.
Since $\supp(G)\cap[0,\underline{r}]$ is connected, either 
$G$ is constant over $[a,\underline{r}]$ or
$(a,a+\e)\subseteq \supp(G)$ for some $\e>0$.
For the first case, from Lemma \ref{lemma:partial_stop_convex}, for $N$ large enough, 
$D(x;G)$ is either strictly convex or strictly concave over $(b-\e,b)$,
contradicting the linearity in Lemma \ref{appendix_lemma:dworczak_martini}.

For the latter, let 
$\alpha=\max(\supp(G)\cap[0,\underline{r}])$ and 
$\beta=\min(\supp(G)\cap[\underline{r},k_G])$. By Lemma \ref{appendix_lemma:dworczak_martini},
there exists some $\phi$ such that $\phi$ is linear over $[a,b]$. Note $[\alpha,\beta]\subseteq [a,b]$.
Over $[a,\alpha]$, $\phi(x)=G(x)^{n-1}$.
Since $\phi$ is convex, the kink must be increasing at $x=a$, i.e. $\phi'(a+)=(n-1)F(a)^{n-2}g(a+)\geq \phi'(a-)=(n-1)F(a)^{n-2}f(a)$. By the MPC constraint, we also obtain $g(a+)\leq f(a)$. Hence, $g(a+)=f(a)$.
Moreover, since $\phi$ is linear, the slope over $[\alpha, \beta]$ must coincide with $\phi'(a+)$, i.e. 

$$
\phi'(a+)=(n-1)F(a)^{n-2}f(a)=\frac{J_F(\alpha)(1-H(c_G(\beta)))}{\beta-\alpha}=\frac{D(\beta;G)-D(\alpha;G)}{\beta-\alpha}
$$
Rearranging, 
$$
\frac{f(a)}{1-H(c_G(\beta))}=\frac{J_F(\alpha)}{(n-1)F(a)^{n-2}}.
$$
The right hand side diverges as $n\rightarrow \infty$.
Such $G$ cannot be a limit equilibrium.

\subsection{Omitted Proofs of Appendix \ref{section:online_appendix_c}}

\paragraph{Proof of Proposition \ref{proposition:small_market}}
By Lemma \ref{lemma:continuity_atom}-(b), $G$ must have an atom at $\max(\supp(G))$
and be continuous over $[0,\max(supp(G)))$, proving (a).
Assume $G(\underline{r})>0$. Then, by Steps 1 and 2 of the proof of Lemma~\ref{lemma:full_info_bottom}, we have
$\min(\supp(G))=0$ and $\supp(G)\cap[0,\underline{r}]$ is connected.
Over this convex interval $\supp(G)\cap[0,\underline{r}]$,
$D(x;G)=\phi(x)$ is convex by Lemma \ref{appendix_lemma:dworczak_martini}.
We can partition $[0,\underline{r}]$ into intervals 
$0=x_0<\dots<x_{k-1}=\max(\supp(G))\cap[0,\underline{r}]$,
such that on each interval $[x_i, x_{i+1}]$ ($i=1,\dots, k-2$), either $D(x;G)$ is strictly convex, 
or linear. By Lemma~\ref{appendix_lemma:dworczak_martini}, in the former case where $D(x;G)$ is strictly convex, it must be $G=F$, whereas the latter case implies $D(x;G)=G^{n-1}$ is linear. This establishes part (b).
Now we prove (c).
\\ \noindent\textbf{Case 1}: Strictly Convex $H$.

\noindent
Assume $\supp(G)\cap[\underline{r},\max(\supp(G))\neq \emptyset$ and let $M:=\sup\Big(\supp(G)\cap[\underline{r},\max(\supp(G)))\Big)$. We first show that $M<k_G$. If this is the case, $G$ is flat over $[M,k_G)$, implying $D(x;G)=D(M;G)+J_G(M)(H(c_G(M))-H(c_G(x)))$ is strictly concave over this interval. However, it is impossible to have any convex $\phi(x)$ such that $\phi(x)=D(x;G)$ for all $x\in \supp(G)$, especially at $x=M$ and $x=k_G$. Hence, $\supp(G)\cap[\underline{r},\max(\supp(G))\neq \emptyset$.

Suppose instead $M=k_G$, i.e. $G(x)$ is continuous at $x=k_G$. Note $G$ must be a strict MPC of $F$ over some $(k_G-\e, k_G)$. But under the strict convexity of $H$ (i.e. $h'>0$), the second derivative satisfies:
\begin{align*}
    D''(x;G)&=(n-1)G^{n-3}\Big[
    (n-1)g^2H(c_G)+g'GH-2(1-G)Ggh(c_G)\Big]-(1-G)^2J_Gh'(c_G) \\ 
    &< (n-1)G^{n-3}\Big[(n-1)g^2H(c_G)+g'GH-2(1-G)Ggh(c_G)\Big]
\end{align*}
As $x\rightarrow k_G-$, $H(c_G(x))\rightarrow 0$, so the first two terms in the bracket vanishes to zero, whereas the last term remains weakly positive. Thus, $\lim_{x\rightarrow k_G-}D''(x;G)<0$, implying $D(x;G)$ is strictly concave over some interval $(k_G-\e,k_G)$. This again contradicts that $\phi=D(x;G)$ must be convex over $(k_G-\e, k_G)$ by Lemma~\ref{appendix_lemma:dworczak_martini}.

\noindent \textbf{Case 2, 3}: Strictly Concave $H$ or Linear $H$

\noindent Assume $\supp(G)\cap[\underline{r},\max(\supp(G))\neq \emptyset$ and define $m:=\min(\supp(G)\cap[\underline{r},\max(\supp(G))))$.
Since $x_{k-1}<\underline{r}$, we must have $m>\underline{r}$. Otherwise, $m=\underline{r}$ implies $D(x;G)$ is flat over $[x_{k-1},\underline{r}]$, which contradicts the existence of convex $\phi$ in Lemma~\ref{appendix_lemma:dworczak_martini}.

We consider 2 cases, $\partial_+G(m)>0$ and $\partial_+G(m)=0$. First, assume $\partial_+G(m)>0$. In this case, $D(x;G)$ has a strictly increasing kink at $x=m$.
$$
D'(m+;G)=
\underbrace{(n-1)G^{n-2}(m)\partial_+G(m)H(c_G(m))}_{D'(m-;G)}
+\underbrace{J_G(m)(1-G(m))h(c_G(m))}_{>0}
$$
Since $G$ is a strict MPC of $F$ over some interval $(m-\e, m+\e)$, Lemma~\ref{appendix_lemma:dworczak_martini} implies that the associated $\phi$ must be linear and satisfy $\phi(x)\geq D(x;G)$. But this is impossible, as $\phi$ would lie strictly below $D(x;G)$ just to the left of $m$ due to the kink. Contradiction.

Now suppose $\partial_+G(m)=0$. Then, $D'(m;G)=J_G(m)(1-G(m))h(c_G(m))$. To be consistent with $\phi(x)\geq D(x;G)$ and $\phi$ linear near $m$, we must have $D''(m-;G)\leq 0$. However, if $H$ is strictly concave, then $h'(c)<0$ implies 
$D''(m-)=-(1-G(m))^2J_G(m)h'(c_G(m))>0,$ a contradiction. 

Assume $H$ is linear.
From the linearity of $\phi$ over $[x_{k-1},m]$, we must have:
$$
\frac{D(m;G)-D(x_{k-1};G)}{m-x_{k-1}}
=D'(m;G)
$$
Substituting expressions and simplifying:
$$
\frac{J_G(m)(1-H(c_G(m)))}{m-x_{k-1}}
=J_G(m)(1-G(m))h(c_G(m))
$$
Using $H(c)=\frac{c}{\overline{c}}$ and $h(c)=\frac{1}{\overline{c}}$:
\begin{align*}
    \frac{1}{\overline{c}}(\overline{c}-c_G(m))&= \frac{1}{\overline{c}}(m-x_{k-1})(1-G(x_{k-1})) \\ 
    &=\frac{1}{\overline{c}}(c_G(x_{k-1})-c_G(m)))
\end{align*}
The last equality follows from the fact that $c_G(x)$ is decreasing and linear with slope $(1-G(x_{k-1}))$ over $[x_{k-1},m]$. This implies $c_G(x_{k-1}) = \overline{c}$, which contradicts $x_{k-1} < \underline{r}$ (since $c_G(x)$ is decreasing). Contradiction.

Thus, in all cases, we reach a contradiction. Therefore, $\supp(G)\cap[\underline{r},\max(\supp(G))= \emptyset$.

\qed
\paragraph{Proof of Proposition \ref{proposition:small_market_binary}}

Note since $F$ is a binary distribution supported at $\{0,1\}$ with mean $\mu$, choosing a distribution of posterior means $G\in \Delta([0,1])$ is equivalent to choosing a Bayes-plausible distribution over posteriors,$\mu=\mathbb{E}_{G}[x]$, as in \textcite{kamenica2011bayesian}. By their result, any equilibrium must satisfy 
$cav(D(x;G))=D(x;G)$ for all $x\in \supp(G)$, where $cav(D)$ denotes the concave closure of $D$.
Furthermore, by Lemma~\ref{appendix_lemma:dworczak_martini}, $\phi(x)=D(x;G)$ must be convex for all $x\in \supp(G)$. These two conditions together imply that $\phi(x)=D(x;G)$ must be linear over all $x\in \supp(G)$. 

By Lemma~\ref{lemma:continuity_atom}, $G$ must have an atom at $\max(\supp(G))$ and continuous over $[0,\max(\supp(G)))$. Moreover, if $G(\underline{r})>0$, then by Steps 1 and 2 of the  proof in Lemma \ref{lemma:full_info_bottom}, $\min(\supp(G))=0$ and $\supp(G)\cap[0,\underline{r}]$ must be connected. Hence $G^{n-1}$ must be linear over $\supp(G)\cap[0,\underline{r}]$.

Let $a:=\max(\supp(G)\cap[0,\underline{r}])$. Suppose $a=\underline{r}$. Then,
$$
D'(a+;G)=\underbrace{(n-1)G(a)^{n-2}g(a)}_{=D'(a-;G)}+(1-G(a))h(c_G(a))J_G(a),
$$
implying $D(x;G)$ exhibits an upward kink at $x=\underline{r}$. Therefore, $\underline{r}\notin\{x:cav(D(x;G))=D(x;G)\}$, implying $a<\underline{r}$. 
This proves part (a). Proof of part (b) follows directly from the argument in Proposition \ref{proposition:small_market}.








\qed

\end{document}